\documentclass[letter,12pt]{amsart}
\usepackage{wrapfig}
\usepackage{amsfonts}
\usepackage{amssymb}
\usepackage{amsmath,tabu}
\usepackage{comment}
\usepackage{amsthm}
\usepackage{graphicx}
\usepackage{epstopdf}
\usepackage{pst-all,psfrag}
\usepackage[unicode]{hyperref}
\usepackage{indentfirst}
\usepackage{natbib}
\usepackage{verbatim}
\usepackage{eurosym}
\usepackage[utf8]{inputenc}
\usepackage[english]{babel}
\usepackage[margin=1in]{geometry}
\usepackage{chngcntr}
\usepackage{apptools}
\usepackage{tabularx}
\usepackage{setspace}
\usepackage{enumitem}
\usepackage{subcaption}

\newtheorem{theorem}{Theorem}
\newtheorem{proposition}[theorem]{Proposition}

\newtheorem{lemma}[theorem]{Lemma}

\newtheorem{assumption}{Assumption}

\newtheorem{corollary}[theorem]{Corollary}
\newtheorem{remark}[theorem]{Remark}

\setlist[enumerate,1]{label=\upshape{(\roman*)}, ref=(\roman*)}
\setlist[enumerate,2]{label=\upshape{(\alph*)}, ref=(\alph*)}
\setlist[enumerate,3]{label=\upshape{\roman*.}, ref=\roman*}

\newcommand{\eps}{\varepsilon}
\newcommand{\y}{\mathbf y}
\newcommand{\p}{\mathbf p}
\newcommand{\z}{\mathbf z}
\newcommand{\g}{\mathbf g}

\newcommand{\1}{I}
\newcommand{\I}{\mathcal I}
\newcommand{\F}{\mathcal F}
\newcommand{\T}{\mathsf T}

\begin{document}

\author{Anna Bykhovskaya}
\address[Anna Bykhovskaya]{Duke University}
\email{anna.bykhovskaya@duke.edu}

\author{Nour Meddahi}
\address[Nour Meddahi]{Toulouse School of Economics}
\email{nour.meddahi@tse-fr.eu}

\title{Generalized Autoregressive Multivariate Models:\\ From Binary to Poisson}

\begin{abstract}
This paper presents a framework for binary autoregressive time series in which each observation is a Bernoulli variable whose success probability evolves with past outcomes and probabilities, in the spirit of GARCH-type dynamics, accommodating nonlinearities, network interactions, and cross-sectional dependence in the multivariate case. Existence and uniqueness of a stationary solution is established via a coupling argument tailored to the discontinuities inherent in binary data. A key theoretical result, further supported by our empirical illustration on S$\&$P $100$ data, shows that, under a rare-events scaling, aggregates of such binary processes converge to a Poisson autoregression, providing a micro-foundation for this widely used count model. Maximum likelihood estimation is proposed and illustrated empirically.

\medskip{}
\textbf{Keywords: Binary data; Interactions; GARCH; Aggregation; Poisson autoregression}
\end{abstract}

\thanks{}

\date{\today}
\maketitle

\section{Introduction}

Binary time series taking values in $\{0,1\}$ are central to modeling dynamic economic and financial phenomena in which outcomes represent discrete events or decisions rather than continuous quantities. In economics, they arise naturally in dynamic binary choice models, where agents repeatedly decide whether to take actions such as purchasing, investing, or entering a market \citep{rust1987optimal, heckman1981heterogeneity}. In finance, binary indicators capture rare but economically important events, such as extreme return days and bear markets \citep{nyberg2013predicting}, as well as corporate defaults and credit downgrades \citep{duffie2009frailty}. Macroeconomic forecasting provides another key application, where the variable of interest is binary, such as the onset of a recession \citep{kauppi2008predicting}.

Beyond individual processes, multivariate binary time series offer a natural framework for modeling interdependent or networked decisions, where outcomes for one unit depend on others. Such models are useful for analyzing financial contagion, correlated defaults, and evolving networks of economic linkages. Allowing the network structure to vary over time further enriches the analysis by capturing changing patterns of influence, information diffusion, or systemic risk propagation (cf., dynamic network formation of \citealt{graham2016homophily}). Aggregating binary outcomes across a large panel produces a count of successes at each date. We show that, under a rare-events scaling, these cross-sectional interactions provide a natural microfoundation for Poisson autoregression, a leading model for count time series.

We develop a general framework for multivariate binary time series in which each observation follows a Bernoulli distribution with a time-varying success probability that depends on both past outcomes and past success probabilities, in the spirit of the GARCH-type dynamics of \citet{bollerslev1986generalized}. Unlike the models of \citet{kedem2005regression} and \citet{de2011dynamic}, our formulation allows current success probabilities to depend on latent past success probabilities, rather than only on observed outcomes. This additional layer accommodates richer persistence and feedback dynamics. Under general and potentially nonlinear dynamics for the success probabilities, we establish the existence of a stationary solution to our model. Under a contraction condition on the probability-update function, we further establish its uniqueness.

Our model fits into several important classes. When specialized to a linear form, our model corresponds to the generalized autoregressive score (GAS) framework of \citet{creal2013generalized}. However, most of the existing results for GAS models \citep[e.g.,][]{blasques2014stationarity, blasques2022maximum} do not apply in our setting. The reason is that Bernoulli variables, expressible as indicators that a uniform random shock falls below a success probability, generate discontinuities that violate the smoothness and Lipschitz assumptions crucial for the GAS theory. Our model is also a member of the class of positive-valued time series with exponential-family conditional distributions studied, in the univariate case, by \citet{davis2016theory,aknouche2021count} and, in the multivariate case with one lag of dependent variables, by \citet{lee2023modeling}. Our analysis departs from those contributions by focusing specifically on cross-sectional interactions and their effect on the aggregate number of successes, a question those papers do not address.

The stochastic properties of related binary models have been studied by \citet{moysiadis2014binary} and \citet{fokianos2017binary}, and some of our structural results, particularly the existence and uniqueness of a stationary solution, are analogous in spirit to results obtained there, reflecting the robustness of the underlying probabilistic argument. The present paper departs from that work in two important respects. First, we consider an $N$-dimensional panel with cross-sectional interactions. Second, and more substantially, we target the rare-events asymptotic regime, which requires success probabilities to be close to zero and motivates working directly with $[0,1]^{N}$ rather than reparametrizing the probabilities via an inverse distribution function. The cited papers apply a smooth monotone transformation, such as the logistic or normal CDF, that maps probabilities to an unbounded domain. Under these transformations, small probabilities are mapped to values near $-\infty$, making analysis in this region problematic. Moreover, when combined additively, these highly negative transformed values decrease further, behaving more like products of probabilities than sums, which complicates interpretation.

Our central result provides a micro-level foundation for the integer-valued GARCH (INGARCH) model \citep{rydberg2000modelling,streett2000some,ferland2006integer}, also known as Poisson autoregression \citep{fokianos2009poisson}, one of the most widely used models for count time series. Under a rare-events scaling in which individual success probabilities vanish as the cross-sectional dimension $N$ grows, we show that the aggregate number of successes converges, in finite-dimensional distributions, to a Poisson autoregression.

The practical motivation is clear: many economically important phenomena, such as defaults in large credit portfolios, the exercise of binary options across many positions, or episodes of heightened systemic risk, are individually rare, yet their cross-sectional aggregate incidence is of primary interest. This contrasts with the original motivation of \citet{rydberg2000modelling}, which arose from discretizing time in a Poisson process.

Our setting yields a richer family of Poisson autoregressions than the parametric specifications considered in \cite{fokianos2009poisson}. In particular, our recursion for the Poisson intensity admits genuinely nonlinear functions of both the aggregate count and the limiting intensity. We also show that the result extends to network interaction structures: when the network is doubly stochastic and sufficiently dense, heterogeneous interaction weights wash out asymptotically, and the same Poisson autoregressive limit obtains. This illustrates a broader phenomenon in which network topology becomes irrelevant at the aggregate level once connectivity is sufficiently rich.

Finally, we propose maximum likelihood estimation (MLE) and establish consistency and asymptotic normality under a boundedness condition on the success probabilities, which rules out the rare-events regime. In that regime the Poisson MLE applied to the aggregate count is the natural substitute, and under structural assumptions such as homogeneity, the aggregation theory supplies an explicit mapping from the Poisson parameters back to those of the underlying binary system.

We illustrate the framework using daily returns on S$\&$P 100 constituents over 2012--2025, defining tail events as idiosyncratic returns falling below the $5$th percentile. The binary and Poisson parameter estimates are close, consistent with the aggregation theory, and the filtered Poisson intensity tracks the sum of individual success probabilities closely in sample. In out-of-sample forecasting, the heterogeneous binary model outperforms the Poisson-calibrated specification, indicating that the granularity of the binary likelihood remains empirically meaningful.

\smallskip

The remainder of the paper is organized as follows. Section \ref{sec_setting} introduces the model, provides motivating examples, and establishes the existence and uniqueness of a stationary solution. Section \ref{sec_aggregation} analyzes the model's aggregate behavior, focusing on the total number of successes within a given period. Section \ref{sec_estimation} derives the consistency and asymptotic normality of the maximum likelihood estimator. Section \ref{sec_covariates} extends the model by allowing for additional covariates. Section \ref{sec_illustr} applies the proposed framework to S$\&$P 100 data. Finally, Section \ref{sec_conclusion} concludes. All proofs appear in the appendices.

\section{Setting}\label{sec_setting}

Consider an $N$-dimensional binary vector $\y_t=(y_{1,t},\ldots,y_{N,t})^{\T}$, where the superscript $\T$ denotes transposition. Here time is discrete and we deal with two cases: either two-sided $t\in \mathbb Z$ or one-sided $t\in\{t_0,t_0+1,t_0+2,\ldots\}$ for some $t_0\in\mathbb Z$.
 Let $\F_t$ be a filtration of $\sigma$-fields, so that $\y_t$ is $\F_t$-measurable.
The components of $\y_t$ are assumed to be independent conditional on $\F_{t-1}$.\footnote{Contemporaneous conditional dependence can be accommodated without affecting the stationarity results. The aggregation results, however, require an appropriate form of weak cross-sectional dependence. Such dependence can be introduced by replacing the i.i.d.~$U[0,1]$ innovations below with innovations that are conditionally marginally uniform on $[0,1]$ but cross-sectionally dependent.} Given $\F_{t-1}$, each component $y_{i,t}$, $i=1,\ldots,N$, is assumed to be a Bernoulli random variable with success probability $p_{i,t}$, measurable with respect to $\F_{t-1}$,
\begin{equation}\label{eq_dgp_y}
y_{i,t}\mid \F_{t-1}\thicksim B(p_{i,t}) \text{ or equivalently }y_{i,t}= \1(u_{i,t}\leq p_{i,t}),\,u_{i,t}\mid \F_{t-1}\thicksim i.i.d.~U[0,1].
\end{equation}

The vector of success probabilities $\p_t=(p_{1,t},\ldots,p_{N,t})^{\T}$ may depend on its own past values $\p_{\tau},\,\tau=t-1,\ldots,t-s$, as well as on past realizations $\y_{\tau},\,\tau=t-1,\ldots,t-q$. Formally,
\begin{equation}\label{eq_dgp_p}
  \p_{t}=\g\left(\{\p_{\tau}\}_{\tau=t-1,\ldots,t-s},\,\{\y_{\tau}\}_{\tau=t-1,\ldots,t-q}\right),
\end{equation}
where $s$ and $q$ are finite integers controlling the memory of the process, and $\g(\cdot)$ is a continuous mapping from $[0,1]^{Ns}\times\{0,1\}^{Nq}$ to $[0,1]^N$.
When the time is one-sided, $t\in\{t_0,\ldots\}$, the evolution \eqref{eq_dgp_p} should be supplemented with initial values of $\p_{t}$ and the equation is only valid once all the indices in \eqref{eq_dgp_p} are at least $t_0$.

The general form of $\g(\cdot)$ allows for nonlinearities and interactions, whereby $y_{i,t}$ depends not only on its own past but also on the past values and success probabilities of all $N$ variables. The dependence of $\p_t$ on both $\y_{t-1}$ and $\p_{t-1}$ mirrors the GARCH structure of \citet{bollerslev1986generalized}. We, therefore, refer to the dynamics in \eqref{eq_dgp_y}--\eqref{eq_dgp_p} as the \textit{Generalized Autoregressive Binary Process} (GAB).

The Bernoulli distribution belongs to the one-parameter exponential family and, thus, the GAB model fits in the class of models considered in \citet{davis2016theory,aknouche2021count,lee2023modeling}.

\subsection{Examples}\label{sec_examples}
Let us introduce some useful examples of the function $\g(\cdot)$. We begin with a simpler setting where $N=1$, so that $\p_t=p_{1,t}\equiv p_t$ and $\y_t=y_{1,t}\equiv y_t$.
\begin{itemize}
  \item[] {\bf Example 1:} Linear GAB$(1,1)$
  \begin{equation}\label{eq_BGARCH11}
  p_t=\omega+\alpha y_{t-1}+\beta p_{t-1},\quad \omega,\alpha,\beta\geq0,\quad \omega+\alpha+\beta\leq1.
  \end{equation}
  The conditions on the parameters ensure that $p_t$ stays in $[0,1]$. This framework resembles a classical GARCH model, where instead of conditional variance we specify conditional success probability as a linear function of its own past and a past variable realization. Moreover, the model \eqref{eq_BGARCH11} is a special case of generalized autoregressive score (GAS) models of \citet{creal2013generalized}, where, in the notations of \citep[]{creal2013generalized}, $s_t=y_t-p_t,\,S_t=p_t(1-p_t)$.
  \item[]  {\bf Example 2:}  Linear GAB$(s,q)$
  \begin{equation}\label{eq_BGARCHsq}
  p_t=\omega+\sum\limits_{j=1}^{q}\alpha_j y_{t-j}
  +\sum\limits_{j=1}^{s}\beta_j p_{t-j},\quad
  \omega,\alpha_j,\beta_j\geq0,\quad \omega+\sum\limits_{j=1}^{q}\alpha_j+\sum\limits_{j=1}^{s}\beta_j\leq1.
  \end{equation}
  This framework extends the previous one by allowing for multiple lags of both probabilities and outcomes.
  \item[]  {\bf Example 3:}  Logit GAB$(1,1)$. Consider a function from $[0,1]\times\{0,1\}$ to $[0,1]$, defined for any parameter values $\omega,\alpha,\beta$,
  $$\g(p,y)=\left(1+\exp(-\omega-\alpha y)\left(\frac{1-p}{p}\right)^{\beta}\right)^{-1}.$$
  The above function does not require additional parameter restrictions to ensure that probabilities remain in $[0,1]$. This function is continuous and maps $(0,0)$ to $0$ and $(1,1)$ to $1$ when $\beta>0$, while for $\beta<0$ it maps $(0,0)$ to $1$ and $(1,1)$ to $0$. In contrast to the logistic transformation of \citet{moysiadis2014binary,fokianos2017binary}, the above form still allows for small probabilities. In terms of our original variables, the evolution can be written as
  \begin{equation}\label{eq_logitGARCH}
    p_t=\left(1+\exp(-\omega-\alpha y_{t-1})\left(\frac{1-p_{t-1}}{p_{t-1}}\right)^{\beta}\right)^{-1}.
  \end{equation}
  Equivalently, one has
  $$\ln\left(\frac{p_t}{1-p_t}\right)=\omega+\alpha y_{t-1}+\beta \ln\left(\frac{p_{t-1}}{1-p_{t-1}}\right)$$
  or, for $x_t=\ln\left(\frac{p_t}{1-p_t}\right)\in\mathbb{R}\cup\{\pm\infty\}$, so that $p_t=(1+\exp(-x_t))^{-1}$,
  \begin{equation}\label{eq_logitGARCHx}
  x_t=\omega+\alpha y_{t-1}+\beta x_{t-1}.
  \end{equation}
  Eq.~\eqref{eq_logitGARCHx} corresponds to \citet{moysiadis2014binary,fokianos2017binary}, except that they restrict $x_t\in\mathbb{R}$, ruling out $p_t=0$.
  \item[]  {\bf Example 4:}  Nonlinear GAB$(1,1)$
  $$p_t=\omega+\alpha y_{t-1} + f(p_{t-1}),$$
  where $f(\cdot)$ is a continuous function such that $\forall p\in[0,1]$,
  $$-\omega-\min\{\alpha,0\}\leq f(p)\leq 1-\omega-\max\{\alpha,0\}.$$
  This framework allows for nonlinear dependence between $p_t$ and $p_{t-1}$. Notice that, since $y_t$ is binary, we do not need to apply any nonlinear function to it.

  Figures \ref{fig_GAB_simulation_small_nonlin} and \ref{fig_GAB_simulation_small_nonlin2} illustrate nonlinear GAB$(1,1)$ models with $\omega = 0.05$ and $\alpha = 0.4$, using $f(p) = 2p(p - 0.5)^2$ and $f(p) = (p + 0.7)(p - 5/6)^2$, respectively. The key difference lies in the magnitude of the nonlinear component. In Figure \ref{fig_GAB_simulation_small_nonlin}, the nonlinear term remains small, especially when $p \approx 0$, resulting in oscillatory dynamics with prolonged low-probability periods punctuated by short-lived spikes following realizations of $y_t = 1$. Overall, the path of $p_t$ stays close to the Markovian component $\omega + \alpha y_{t-1}$. In contrast, Figure \ref{fig_GAB_simulation_small_nonlin2} exhibits a much stronger nonlinear effect, leading to substantially different dynamics and pronounced deviations from the Markovian benchmark.

\end{itemize}

\begin{figure}[t]
\centering
\begin{subfigure}[t]{0.48\linewidth}
    \centering
    \includegraphics[width=\linewidth]{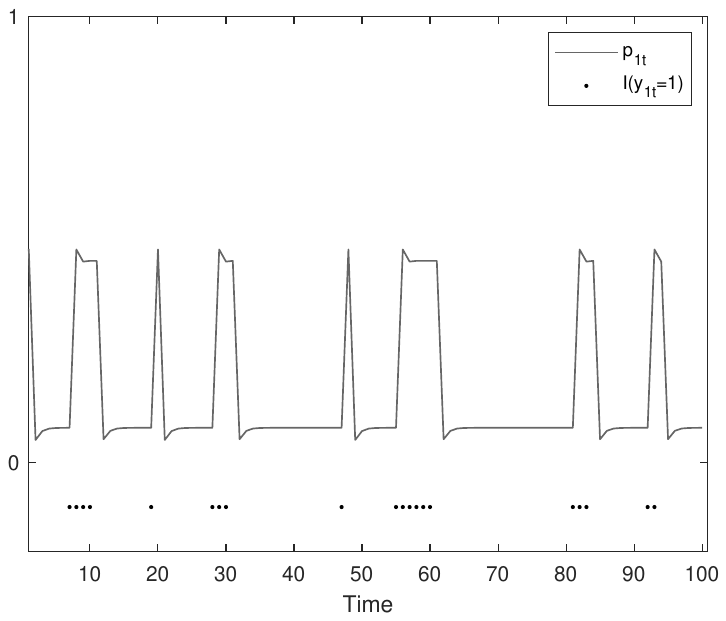}
    \caption{Nonlinear single-series model: $p_{1,0}=1/12$, ${p_{1,t}=0.05+0.4y_{1,t-1}+2p_{1,t-1}(p_{1,t-1}-0.5)^2}$.}
    \label{fig_GAB_simulation_small_nonlin}
\end{subfigure}
\hfill
\begin{subfigure}[t]{0.48\linewidth}
    \centering
    \includegraphics[width=\linewidth]{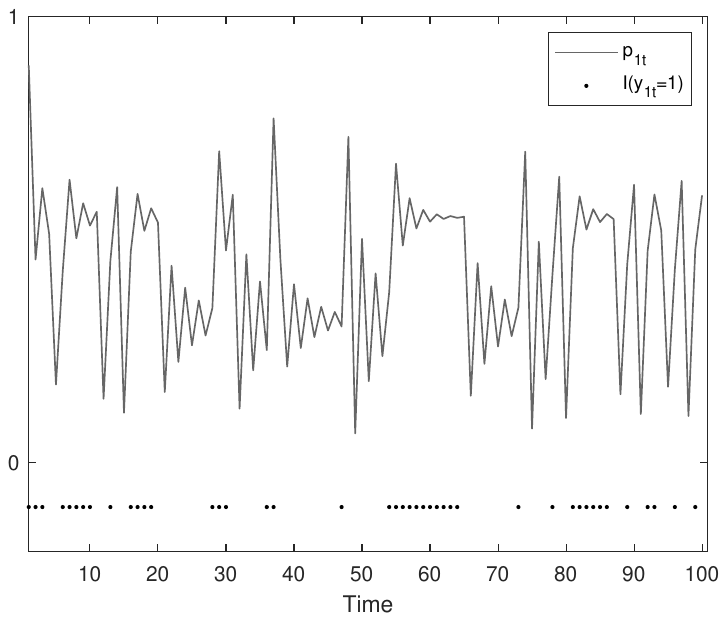}
    \caption{Nonlinear single-series model: $p_{1,0}=1/12$, ${p_{1,t}=0.05+0.4y_{1,t-1}+(p_{1,t-1}+0.7)(p_{1,t-1}-5/6)^2}$.}
    \label{fig_GAB_simulation_small_nonlin2}
\end{subfigure}

\begin{subfigure}[t]{0.48\linewidth}
    \centering
    \includegraphics[width=\linewidth]{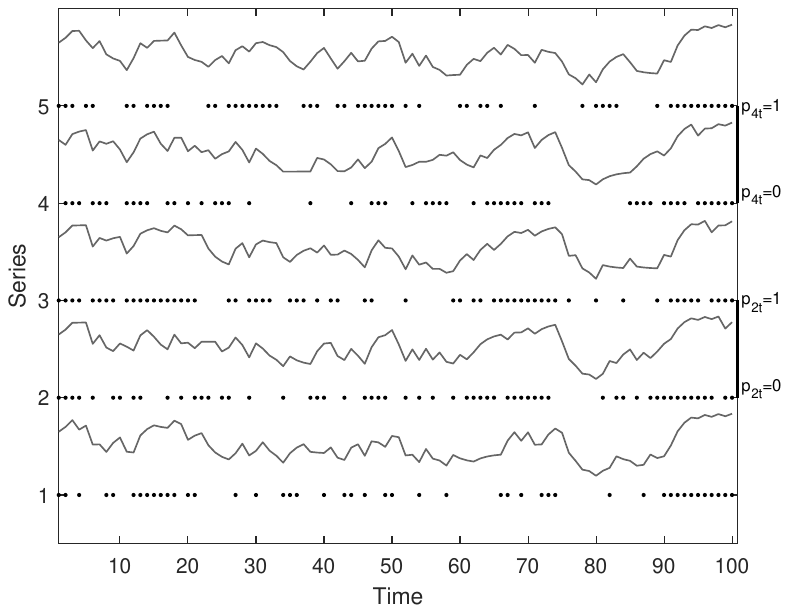}
    \caption{Interactive model \eqref{eq_multiple_k_diff} with $N=5$, ${p_{i,0}=0.5}$, and $\omega_i=0.05$, $\alpha_i=0.1$, $\beta_i=0.6$, $\gamma_i=0.2$ for all $i$.}
    \label{fig_GAB_simulation_small}
\end{subfigure}
\hfill
\begin{subfigure}[t]{0.48\linewidth}
    \centering
    \includegraphics[width=\linewidth]{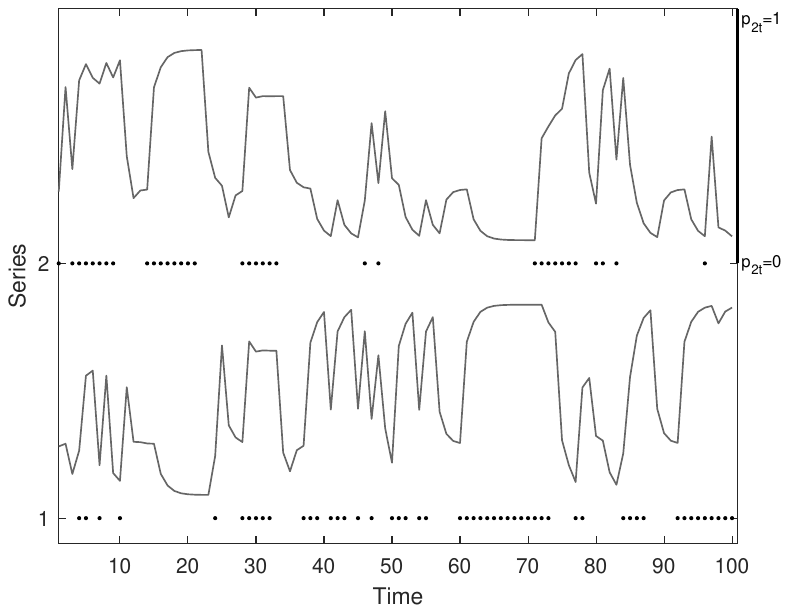}
    \caption{Two-series model with negative interaction: $N=2$, $p_{1,0}=p_{2,0}=3/7$, $p_{i,t}=0.3+0.4y_{i,t-1}+0.2p_{i,t-1}-0.3p_{j,t-1}$, $i\neq j$.}
    \label{fig_GAB_simulation_small_neg}
\end{subfigure}

\caption{Raster plots with probability trajectories. Black dots indicate success periods ($y_{i,t}=1$). Continuous lines represent the corresponding success probabilities $p_{i,t}$.}
\label{fig_GAB_simulation_combined}
\end{figure}

All of the above examples can be straightforwardly generalized to $N>1$, where each time series $i$ has its own $p_{i,t}$, whose evolution involves only quantities indexed by $i$ and no quantities indexed by $j\neq i$. More interesting cases arise when the dynamics of $p_{i,t}$ depend on the outcomes or probabilities of other series. A few examples are provided below.
\begin{itemize}
  \item[]  {\bf Example 5:}  Exchangeable GAB$(1,1)$. All $N$ components share a common probability of success, driven by the cross-sectional average of past outcomes,
  \begin{equation}\label{eq_multiple_k}
  p_{i,t}\equiv p_t=\omega+\gamma \frac{1}{N}\sum_{j=1}^N y_{j,t-1}+\beta p_{t-1},\quad \omega,\beta,\gamma\geq0,
  \quad \omega+\beta+\gamma\leq1.
  \end{equation}
  Under this exchangeable specification, the sum $\sum_{j=1}^N y_{j,t}$, conditional on $\I_{t-1}$, follows a binomial distribution with parameters $N$ and $p_t$.
  \item[]  {\bf Example 6:} Interactive GAB$(1,1)$. Each component has its own probability of success, driven by both its own past outcome and the cross-sectional average,
  \begin{equation}\label{eq_multiple_k_diff}
    p_{i,t}=\omega_i+\alpha_i y_{i,t-1}+ \gamma_i \frac{1}{N}\sum_{j=1}^N y_{j,t-1}+\beta_i p_{i,t-1},\quad \omega_i,\alpha_i,\gamma_i,\beta_i\geq0,\quad \omega_i+\alpha_i+\gamma_i+\beta_i\leq1.
  \end{equation}
  When $\gamma_i=0$ for all $i$, the model \eqref{eq_multiple_k_diff} reduces to $N$ trajectories of GAB$(1,1)$ models \eqref{eq_BGARCH11} that are independent conditional on $\I_{t-1}$. When, for all $i$, $\omega_i\equiv\omega,\,\alpha_i\equiv0,\,\beta_i\equiv\beta,\,\gamma_i\equiv\gamma$, the model \eqref{eq_multiple_k_diff} reduces to a single probability model (\ref{eq_multiple_k}).
  \item[]  {\bf Example 7:}  Network GAB$(1,1)$. Each component has its own probability of success, driven by a network-weighted average of past outcomes,
  \begin{equation}\label{eq_multiple_k_diff_W}
    p_{i,t}=\omega_i+\alpha_i y_{i,t-1}+ \gamma_i \sum_{j=1}^N W_{ij}y_{j,t-1}+\beta_i p_{i,t-1},\quad \omega_i,\alpha_i,\gamma_i,\beta_i\geq0,\quad \omega_i+\alpha_i+\gamma_i+\beta_i\leq1,
  \end{equation}
  where $W$ is an $N\times N$ row-normalized adjacency matrix governing the relationships among individuals. Eq.~\eqref{eq_multiple_k_diff} corresponds to the special case $W_{ij}=1/N$ for all $i,j$, i.e., the complete graph. 
\end{itemize}
All of the above multivariate examples can be readily extended to incorporate multiple lags of outcomes and probabilities, analogous to the transition from GAB($1,1$) to GAB($s,q$). They can also be generalized to allow for nonlinearities that aggregate $p_{i,t}$ and/or $y_{i,t}$ across $i$, for instance, in the spirit of the peer-effect interaction terms of \citet{bykhovskaya2022time}, where a nonlinear function of neighbors' past outcomes enters the dynamics.

Figures \ref{fig_GAB_simulation_small} and \ref{fig_GAB_simulation_small_neg} visualize multivariate GAB dynamics for $y_{i,t}$ and $p_{i,t}$: dots at row $i$ indicate realizations of successes for series $i$, while the continuous trajectory above each row represents the corresponding success probability $p_{i,t}$. Figure \ref{fig_GAB_simulation_small} corresponds to the interactive model \eqref{eq_multiple_k_diff} with $N=5$ and strictly positive coefficients $\gamma_i$. The figure reveals pronounced co-movement in the success probabilities induced by the common component $\frac{1}{N}\sum_{j=1}^N y_{j,t-1}$, while the series-specific term $\alpha_i y_{i,t-1}$ introduces heterogeneity across series.

Figure \ref{fig_GAB_simulation_small_neg} illustrates, using two series, how the opposite effect can arise, with the success probabilities moving in opposite directions. In particular, when the success probability $p_{j,t}$ increases, the negative term $-0.3 p_{j,t}$ becomes more pronounced,\footnote{The coefficients are chosen to ensure that $p_{i,t} \in [0,1]$.} which in turn reduces the success probability $p_{i,t+1}$ for $i \neq j$.

\subsection{Stationarity}
We next analyze the stationarity properties of the GAB model \eqref{eq_dgp_y}–\eqref{eq_dgp_p} in the asymptotic regime $T \to \infty$, with the number of time series $N$ held fixed.
\subsubsection{Existence}
\begin{theorem}\label{th_stationary_sol}
For any continuous  $\g(\cdot)$, there exists a probability space supporting $\F_t$ and strictly stationary sequence $(\y_t,\p_t)_{t\in\mathbb Z}$, such that \eqref{eq_dgp_y}--\eqref{eq_dgp_p} holds.
\end{theorem}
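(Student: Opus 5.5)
The plan is to recast the GAB dynamics \eqref{eq_dgp_y}--\eqref{eq_dgp_p} as a Markov chain on a compact state space. We then obtain a stationary distribution by a Krylov--Bogoliubov argument and extend it to a two-sided stationary sequence by Kolmogorov's extension theorem.

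\textbf{The chain.} Let $m=\max\{s,q\}$ and define
$$X_t=(\p_t,\ldots,\p_{t-s+1},\y_{t-1},\ldots,\y_{t-q})\in E:=[0,1]^{Ns}\times\{0,1\}^{Nq}.$$
The set $E$ is compact metric. Given $X_{t-1}$, the value $\p_{t}$ is determined by $\g$ applied to the lagged $\p$'s and $\y$'s. The new outcome $\y_{t-1}$ is drawn coordinatewise as Bernoulli$(p_{i,t-1})$, independently across $i$. Hence $X_t$ is a time-homogeneous Markov chain with transition kernel
$$P(x,\cdot)=\sum_{\mathbf y\in\{0,1\}^N}\prod_{i=1}^N p_i^{y_i}(1-p_i)^{1-y_i}\,\delta_{\Phi(x,\mathbf y)}(\cdot),$$
where $\Phi$ is the shift-and-update map built from $\g$. (The indexing can be adjusted so that $\p_t$ and $\y_{t-1}$ enter consistently.)

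\textbf{Feller property.} The key observation is that, despite the discontinuity of $\1(u\le p)$, the kernel is Feller. For continuous $f$ on $E$,
$$Pf(x)=\sum_{\mathbf y}\Big(\prod_i p_i^{y_i}(1-p_i)^{1-y_i}\Big)\,f(\Phi(x,\mathbf y)).$$
This is a finite sum of products of polynomials in the coordinates of $x$ with $f\circ\Phi(\cdot,\mathbf y)$. The composition $f\circ\Phi(\cdot,\mathbf y)$ is continuous in $x$ for fixed $\mathbf y$, because $\g$ is continuous and $\mathbf y$ ranges over a finite discrete set. Hence $Pf$ is continuous. This step is where the discontinuity issue is handled: working with the kernel, which averages over $u$, rather than with the pathwise random map removes the indicator. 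I expect this to be the only place requiring care.

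\textbf{Invariant measure.} Start the chain from any $x_0$ and form the Cesàro averages $\mu_n=n^{-1}\sum_{k<n}\delta_{x_0}P^k$. Since $E$ is compact, the sequence $(\mu_n)$ is tight, and some subsequence converges weakly to a limit $\mu$. For continuous $f$ we have $|\mu_nPf-\mu_nf|\le 2\|f\|_\infty/n$. The Feller property makes $Pf$ continuous, so we may pass to the limit and obtain $\mu P=\mu$.

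\textbf{Stationary sequence and filtration.} Using $\mu$ and $P$, the Kolmogorov extension theorem (applied to consistent finite-dimensional laws on the Polish space $E$) gives a strictly stationary two-sided Markov chain $(X_t)_{t\in\mathbb Z}$. To realize the representation with uniform innovations, enlarge the probability space by independent uniforms. Conditionally on the chain, sample each $u_{i,t}$ uniformly on $[0,p_{i,t}]$ if $y_{i,t}=1$ and uniformly on $(p_{i,t},1]$ otherwise. Then take $\F_t=\sigma(X_\tau,u_\tau:\tau\le t+1)$, with the indexing chosen so that $\y_t$ is $\F_t$-measurable and $\p_t$ is $\F_{t-1}$-measurable. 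One then checks that, given $\F_{t-1}$, the $u_{i,t}$ are i.i.d.\ $U[0,1]$, and that $\y_t=\1(u_t\le\p_t)$ and \eqref{eq_dgp_p} hold almost surely by construction. Finally, read off $(\y_t,\p_t)$ as coordinates of $X$ with the appropriate shifts; this sequence is strictly stationary.
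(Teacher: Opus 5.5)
Your proposal is correct and follows the paper's own route: a Markov chain on the compact space $[0,1]^{Ns}\times\{0,1\}^{Nq}$, the weak Feller property obtained from the kernel that averages over the innovations, and Krylov--Bogoliubov. The paper simply cites Meyn--Tweedie Theorem 12.0.1, whereas you spell out the Ces\`aro argument, the Kolmogorov extension and the explicit construction of the uniforms, which the paper leaves implicit. There is one off-by-one to fix: $u_{t+1}$ must not be $\F_t$-measurable, so take $\F_t=\sigma(X_\tau:\tau\le t+1;\,u_\tau:\tau\le t)$.
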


Notice that no additional conditions are imposed, implying that all GAB models admit a stationary solution regardless of parameter values. The key features ensuring this property are the bounded support of all variables and the continuity of the probability function $\g$. Consequently, all examples from the preceding subsection, including the nonlinear Example 3 and the interactive Examples 6 and 7, possess stationary solutions. In the case of the logit GAB model (Example 3), note in particular that it is not necessary to impose the classical condition $|\beta|<1$.

Since all variables are bounded by construction (outcomes in $\{0,1\}$ and probabilities in $[0,1]$), the process has finite moments of all orders. Hence, strict stationarity implies weak stationarity, leading to the following corollary.
\begin{corollary}
The multivariate binary time series model \eqref{eq_dgp_y}--\eqref{eq_dgp_p}, $t\in\mathbb{Z}$, has a weakly stationary solution.
\end{corollary}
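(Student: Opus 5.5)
The plan is to recast the model as a time-homogeneous Markov chain on a compact state space, show that its transition kernel is weakly Feller, and obtain an invariant measure by the Krylov--Bogoliubov averaging argument. Let $m=\max\{s,q\}$ and define the state
\[
X_t=\bigl(\p_t,\ldots,\p_{t-s+1},\y_{t-1},\ldots,\y_{t-q}\bigr)\in E:=[0,1]^{Ns}\times\{0,1\}^{Nq}.
\]
$E$ is compact metric because it is a product of a cube and a finite discrete set. By \eqref{eq_dgp_y}--\eqref{eq_dgp_p}, given $X_t=x$ the chain draws $\y_t$ with independent coordinates $y_{i,t}\sim B(p_{i,t})$. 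It then sets $\p_{t+1}=\g(\p_t,\ldots,\p_{t-s+1},\y_t,\ldots,\y_{t-q+1})$ and shifts the lags. So $X_{t+1}=\Phi(X_t,\y_t)$, where $\Phi$ is continuous in $x$ for each fixed value of $\y_t\in\{0,1\}^N$.

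\textbf{Feller property.} This is the step where the discontinuity of $\1(u\le p)$ could cause trouble. The remedy is not to work with the uniform representation but with the kernel directly. For bounded continuous $h$ on $E$,
\[
Ph(x)=\sum_{\mathbf{b}\in\{0,1\}^N}\prod_{i=1}^N p_i^{b_i}(1-p_i)^{1-b_i}\,h\bigl(\Phi(x,\mathbf{b})\bigr),
\]
where $p_i$ is the relevant coordinate of $x$. This is a finite sum of products of polynomials in $x$ with continuous functions of $x$, since $\g$ is continuous. Hence $Ph$ is continuous and $P$ is weakly Feller. The key point is that the Bernoulli weights are continuous in $p$ even though the indicator representation is not. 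The uniform coupling, with its discontinuities, matters only for uniqueness, not for existence.

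\textbf{Invariant measure.} Start the chain at any $x_0$ and form the Ces\`aro averages $\mu_n=\frac1n\sum_{k=0}^{n-1}\delta_{x_0}P^k$. Since $E$ is compact, the sequence is tight, and by Prokhorov it has a weakly convergent subsequence with limit $\mu$. For continuous $h$,
\[
|\mu_n(Ph)-\mu_n(h)|\le 2\|h\|_\infty/n .
\]
Because $Ph$ is continuous, we can pass to the limit and get $\mu P=\mu$.

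\textbf{Two-sided stationary process.} Start the chain from $X_0\sim\mu$. Then $(X_t,\y_t)_{t\ge0}$ is strictly stationary: the pair kernel from $X_t$ to $(\y_t,X_{t+1})$ is time-homogeneous and the marginal of $X_t$ is $\mu$ for every $t$. Kolmogorov's extension theorem, applied to the consistent finite-dimensional laws of the shifted sequences, gives a stationary process indexed by $t\in\mathbb Z$ on some probability space. Take $\F_t=\sigma(X_{\tau+1},\y_\tau:\tau\le t)$, so that $\y_t$ and $\p_{t+1}$ are $\F_t$-measurable. To realize the representation $y_{i,t}=\1(u_{i,t}\le p_{i,t})$, enlarge the space by independent uniforms: given $y_{i,t}$ and $p_{i,t}$, draw $u_{i,t}$ uniformly on $[0,p_{i,t}]$ if $y_{i,t}=1$ and on $(p_{i,t},1]$ otherwise. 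Then $u_{i,t}\mid\F_{t-1}$ is i.i.d.\ $U[0,1]$, and \eqref{eq_dgp_p} holds by construction of the transition.

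\textbf{Main obstacle.} The only delicate point is the Feller continuity, which the explicit Bernoulli-weight formula resolves. The remaining work is bookkeeping: checking that the conditional-independence and measurability requirements hold on the extended space.
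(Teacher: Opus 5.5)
Your argument is correct, and it follows the route of the paper's Theorem~\ref{th_stationary_sol}. That proof likewise applies the Markov--Krylov--Bogolyubov theorem to a weak Feller chain on the compact space $[0,1]^{Ns}\times\{0,1\}^{Nq}$, with Feller continuity coming from the Bernoulli weights rather than the indicator representation. Your treatment of the two-sided extension and of the construction of the uniforms $u_{i,t}$ is more careful than the paper's.

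What is missing is the one step the corollary adds beyond Theorem~\ref{th_stationary_sol}. You stop at strict stationarity, whereas the paper's whole argument for the corollary is this remark. Since $\y_t\in\{0,1\}^N$ and $\p_t\in[0,1]^N$ are bounded, all second moments are finite. Strict stationarity then makes the means and autocovariances time-invariant, which is weak stationarity. Add that sentence to finish the proof; you could also simply invoke Theorem~\ref{th_stationary_sol} rather than reproving it.
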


The law of iterated expectations and \eqref{eq_dgp_y} imply that $\mathbb{E}y_{i,t}=\mathbb{E}p_{i,t}$ and $\mathbb{E}_{t-1}y_{i,t}=p_{i,t},$ where the subscript $t-1$ denotes conditioning on $\F_{t-1}$. Assuming $\sum\limits_{j=1}^{q}\alpha_j+\sum\limits_{j=1}^{s}\beta_j<1$, the linear GAB($s,q$) yields
\begin{equation}\label{eq_BGARCH_mean}
  \mu:=\mathbb{E}y_{i,t}=\mathbb{E}p_{i,t}=\frac{\omega}{1-\sum\limits_{j=1}^{q}\alpha_j-\sum\limits_{j=1}^{s}\beta_j},
\end{equation}
and, assuming $\gamma+\beta<1$, its $N$-dimensional variation \eqref{eq_multiple_k} yields
$$
\mu:=\mathbb{E}y_{i,t}=\mathbb{E}p_{i,t}=\frac{\omega}{1-\gamma-\beta},
$$
which matches the unconditional variance formula for classical GARCH($s,q$) and GARCH($1,1$).

In the interactive model \eqref{eq_multiple_k_diff}, assuming $\alpha_i+\beta_i<1$ for all $i$ and $\frac1{N}\sum_{i=1}^N\frac{\gamma_i}{1-\alpha_i-\beta_i}<1$,
$$\mu_i:=\mathbb{E}y_{i,t}=\mathbb{E}p_{i,t}=\omega_i+ \alpha_i \mu_{i}+ \gamma_i \frac{1}{N}\sum_{j=1}^N \mu_{j}+\beta_i \mu_{i}
=\frac{\omega_i}{1-\alpha_i-\beta_i}+\frac{\gamma_i}{N(1-\alpha_i-\beta_i)}\sum_{j=1}^N \mu_{j},$$
so that
\begin{equation}\label{eq_multiple_k_diff_mean_sum}
\sum_{i=1}^N\mu_i=\sum_{i=1}^N \frac{\omega_i}{1-\alpha_i-\beta_i}
+ \frac1{N}\sum_{j=1}^N \mu_{j}\sum_{i=1}^N\frac{\gamma_i}{1-\alpha_i-\beta_i}
=\frac{\sum_{i=1}^N \frac{\omega_i}{1-\alpha_i-\beta_i}}{1-\frac1{N}\sum_{i=1}^N\frac{\gamma_i}{1-\alpha_i-\beta_i}}
\end{equation}
and
\begin{equation}\label{eq_multiple_k_diff_mean}
\mu_i=\frac{\omega_i}{1-\alpha_i-\beta_i}+\frac{\gamma_i}{N(1-\alpha_i-\beta_i)}\frac{\sum_{j=1}^N \frac{\omega_j}{1-\alpha_j-\beta_j}}{1-\frac1{N}\sum_{j=1}^N\frac{\gamma_j}{1-\alpha_j-\beta_j}}.
\end{equation}

In the network GAB$(1,1)$ model \eqref{eq_multiple_k_diff_W}, a similar argument yields, in matrix form with $A:=\mathrm{diag}(\alpha_i+\beta_i)$, $\Gamma:=\mathrm{diag}(\gamma_i)$, $\boldsymbol\omega=(\omega_1,\ldots,\omega_N)^{\T}$, and $\boldsymbol\mu=(\mu_1,\ldots,\mu_N)^{\T}$,
\begin{equation}\label{eq_network_mean}
(I - A - \Gamma W)\boldsymbol\mu = \boldsymbol\omega.
\end{equation}
Since $W$ is row-stochastic, the row sums of $A+\Gamma W$ equal $\alpha_i+\beta_i+\gamma_i\leq 1-\omega_i$, so for $\omega_i>0$ the matrix $I-A-\Gamma W$ is strictly row-diagonally dominant and hence invertible, giving $\boldsymbol\mu=(I-A-\Gamma W)^{-1}\boldsymbol\omega$. For general $W$ this solution does not simplify further. The interactive model \eqref{eq_multiple_k_diff} corresponds to $W_{ij}=1/N$, the complete graph, in which case the uniform structure allows $\sum_i\mu_i$ to be solved explicitly, yielding the closed-form expressions \eqref{eq_multiple_k_diff_mean_sum}--\eqref{eq_multiple_k_diff_mean}.

The linear GAB($s,q$) model with $\sum\limits_{j=1}^{q}\alpha_j+\sum\limits_{j=1}^{s}\beta_j=1$, and hence $\omega=0$, admits multiple stationary solutions. In particular, both $y_{it}=p_{it}\equiv0$ and $y_{it}=p_{it}\equiv1$ constitute trivial stationary paths. This is in contrast to the IGARCH case, where the unit-root condition destroys weak stationarity; here, bounded support guarantees that stationarity is preserved even on the boundary, though at the cost of uniqueness.

\subsubsection{Uniqueness}\label{sec_unique}

\begin{assumption}\label{ass_contraction1}
Suppose that there exist non-negative coefficients $\{\alpha_{ij}^{\tau}\}$, $\{\beta_{ij}^{\tau}\}$, with $\beta_{ij}^{\tau}=0$ for $\tau>s$ and
$\alpha_{ij}^{\tau}=0$ for $\tau>q$, such that the function $\g(\cdot)=(g_1(\cdot),\ldots,g_N(\cdot))^{\T}$ satisfies the Lipschitz bound
\begin{equation}\begin{split}\label{eq_g_contraction1}
 &|g_i\left(\{\p_{\tau}\}_{\tau=t-1,\ldots,t-s},\,\{\y_{\tau}\}_{\tau=t-1,\ldots,t-q}\right)-
 g_i\left(\{\p_{\tau}'\}_{\tau=t-1,\ldots,t-s},\,\{\y_{\tau}'\}_{\tau=t-1,\ldots,t-q}\right)|\\
 &\quad\leq
 \sum\limits_{j=1}^{N}\sum\limits_{\tau=1}^{s} |p_{j,t-\tau}-p_{j,t-\tau}'|\beta_{ij}^{\tau}
 +\sum\limits_{j=1}^{N}\sum\limits_{\tau=1}^{q} |y_{j,t-\tau}-y_{j,t-\tau}'|\alpha_{ij}^{\tau}\,\quad\forall i.
\end{split}\end{equation}
Let $\Phi$ be the associated $N\max(s,q)\times N\max(s,q)$ companion matrix, where the first $N$ rows of $\Phi$ are given by
$$\left(\alpha_{i1}^{1}+\beta_{i1}^{1},\ldots,\alpha_{iN}^{1}+\beta_{iN}^{1},\ldots \alpha_{i1}^{\max(s,q)}+\beta_{i1}^{\max(s,q)},\ldots,\alpha_{iN}^{\max(s,q)}+\beta_{iN}^{\max(s,q)}\right),
\quad i=1,\ldots,N,$$
the bottom-left $N(\max(s,q)-1)\times N(\max(s,q)-1)$ block of $\Phi$ is an identity matrix, and the bottom-right $N(\max(s,q)-1)\times N$ block is zero. Suppose that the spectral radius $\rho(\Phi)<1$.
\end{assumption}
Assumption \ref{ass_contraction1} requires that the Lipschitz coefficients of $\g(\cdot)$, when arranged into the companion matrix $\Phi$, yield a stable system in the sense that the spectral radius $\rho(\Phi)<1$. This is the direct analogue of the stationarity condition for a vector autoregression of order $\max(s,q)$. A simple
sufficient condition (and equivalent for $N=1$) is the row-sum bound $\sum_{j=1}^{N}\sum_{\tau=1}^{s}\beta_{ij}^{\tau}
+\sum_{j=1}^{N}\sum_{\tau=1}^{q}\alpha_{ij}^{\tau}\leq K<1$ for all $i$, which implies $\rho(\Phi)<1$. For the network GAB$(1,1)$ model \eqref{eq_multiple_k_diff_W}, the companion matrix reduces to the single $N\times N$ block $A+\Gamma W$, so Assumption \ref{ass_contraction1} specializes to $\rho(A+\Gamma W)<1$, which also implies invertibility of $I-A-\Gamma W$ and hence the existence of the closed-form mean \eqref{eq_network_mean}.

Alternatively, the contraction assumption can be stated in an aggregate form:
\begin{assumption}\label{ass_contraction2}
There exists $0\leq K<\frac{1}{s+q}$ such that
\begin{equation}\begin{split}\label{eq_g_contraction2}
 &\sum\limits_{i=1}^{N}|g_i\left(\{\p_{\tau}\}_{\tau=t-1,\ldots,t-s},\,\{\y_{\tau}\}_{\tau=t-1,\ldots,t-q}\right)-
 g_i\left(\{\p_{\tau}'\}_{\tau=t-1,\ldots,t-s},\,\{\y_{\tau}'\}_{\tau=t-1,\ldots,t-q}\right)|\\
 &\leq
 K\left(\sum\limits_{i=1}^{N}\sum\limits_{\tau=1}^{s} |p_{i,t-\tau}-p_{i,t-\tau}'|
 +\sum\limits_{i=1}^{N}\sum\limits_{\tau=1}^{q} |y_{i,t-\tau}-y_{i,t-\tau}'|\right).
\end{split}\end{equation}
\end{assumption}

Neither Assumption \ref{ass_contraction1} nor Assumption \ref{ass_contraction2} implies the other. 
For example, when $N=2$ and $s=q=1$, let $g_1(\p,\y)=\tfrac{2}{3}p_1$ and $g_2(\p,\y)=0$, then Assumption \ref{ass_contraction1} holds, but Assumption \ref{ass_contraction2} fails since $2/3>1/2$. Conversely, let $g_i(\p,\y)=f_i(p_1)+f_i(p_2)+f_i(y_1)+f_i(y_2)$, where $f_1(x)=\min\{0.2,0.4x\}$, $f_2(x)=\max\{0,0.4(x-0.5)\}$. Then $g_i\in[0,0.8]$ and each $f_i$ is Lipschitz with constant $0.4$. Moreover, the intervals on which $f_1$ and $f_2$ are nonconstant do not overlap, so bound \eqref{eq_g_contraction2} holds with $K = 0.4$ (and no smaller constant is possible). However, under Assumption \ref{ass_contraction1}, we must have $\alpha_{ij} = \beta_{ij} = 0.4$, implying $\Phi = 0.8 (1,1)^{\T}(1,1)$ and hence $\rho(\Phi) = 1.6$.

\begin{theorem}\label{th_stationary_unique}
Suppose that  $\g(\cdot)$ satisfies either Assumption \ref{ass_contraction1} or \ref{ass_contraction2}. The distribution of a strictly stationary sequence $(\y_t,\p_t)_{t\in\mathbb Z}$ solving \eqref{eq_dgp_y}--\eqref{eq_dgp_p} is unique. If the time is one-sided, $t\in\{t_0,\ldots\}$, then the time-series forgets its initial condition geometrically fast and converges in distribution to the unique stationary solution as $t\to\infty$.
\end{theorem}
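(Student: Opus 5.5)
We prove both claims with a synchronous coupling. Take two solutions $(\y_t,\p_t)$ and $(\y_t',\p_t')$ of \eqref{eq_dgp_y}--\eqref{eq_dgp_p}, either two stationary solutions or two one-sided solutions with different initial conditions, and drive them with the \emph{same} uniforms: $y_{i,t}=\1(u_{i,t}\le p_{i,t})$ and $y_{i,t}'=\1(u_{i,t}\le p_{i,t}')$. For stationary solutions we may do this as follows. Each solution's law is determined by the law of its initial segment and the i.i.d.\ innovations, so for a finite window we start both chains at time $-M$ from their own stationary marginals (independent of each other) and couple the innovations afterwards. The discontinuity of the indicator is handled through expectations rather than pathwise bounds. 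Since $u_{i,t}$ is uniform and independent of $\F_{t-1}$, and $p_{i,t},p_{i,t}'$ are $\F_{t-1}$-measurable,
$$\mathbb{E}_{t-1}|y_{i,t}-y_{i,t}'|=\mathbb{P}_{t-1}\bigl(\min(p_{i,t},p'_{i,t})<u_{i,t}\le\max(p_{i,t},p'_{i,t})\bigr)=|p_{i,t}-p_{i,t}'|.$$
This identity is the key step. It replaces the non-Lipschitz map $p\mapsto \1(u\le p)$ by an exact $L^1$ contraction with constant one.

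Next we set $d_{i,t}=\mathbb{E}|p_{i,t}-p_{i,t}'|$ and take expectations in the Lipschitz bound. Under Assumption \ref{ass_contraction1}, using the identity above on the $y$-terms gives
$$d_{i,t}\le \sum_{j}\sum_{\tau}(\beta_{ij}^{\tau}+\alpha_{ij}^{\tau})\,d_{j,t-\tau}.$$
Stacking $D_t=(d_{\cdot,t},\ldots,d_{\cdot,t-\max(s,q)+1})$ yields the componentwise inequality $D_t\le \Phi D_{t-1}$. Because $\Phi$ has nonnegative entries, iteration preserves the order, so $D_t\le\Phi^{t-t_0}D_{t_0}$. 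Since all entries of $D_{t_0}$ are at most $1$ and $\rho(\Phi)<1$, we get $\|D_t\|\le C r^{t-t_0}$ for any $r\in(\rho(\Phi),1)$. Under Assumption \ref{ass_contraction2}, we set $e_t=\sum_i d_{i,t}$ and obtain
$$e_t\le K\Bigl(\sum_{\tau=1}^{s}e_{t-\tau}+\sum_{\tau=1}^{q}e_{t-\tau}\Bigr)\le K(s+q)\max_{1\le\tau\le\max(s,q)}e_{t-\tau}.$$
With $\kappa=K(s+q)<1$, an induction over blocks of length $\max(s,q)$ gives $e_t\le C\kappa^{\lfloor (t-t_0)/\max(s,q)\rfloor}$, which is again geometric.

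These bounds transfer to the outcomes: $\mathbb{P}(y_{i,t}\ne y_{i,t}')=\mathbb{E}|y_{i,t}-y'_{i,t}|=d_{i,t}$. So for any finite window $t,\ldots,t+h$, the total variation distance between the laws of $(\y_\cdot)$ is at most $\sum_{k\le h}\sum_i d_{i,t+k}$, which tends to zero geometrically. The laws of $(\p_\cdot)$ are close in the Wasserstein-1 distance, with the same bound. For the one-sided claim, we couple the given chain with a stationary solution, whose existence is Theorem \ref{th_stationary_sol}, started at $t_0$ from its stationary marginal. Then the finite-dimensional laws at time $t$ differ from the stationary ones by $O(r^{t-t_0})$, which gives both geometric forgetting of the initial condition and convergence in distribution. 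For uniqueness, we compare two stationary solutions restricted to the window $[0,h]$ and started at $-M$. Their window laws do not depend on $M$, while the distance between them is $O(r^{M})$. Letting $M\to\infty$ shows the finite-dimensional distributions coincide, and these determine the law of the whole sequence.

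The main obstacle I expect is making the coupling construction rigorous. A stationary solution is only defined as a process together with some filtration $\F_t$, so we must justify that its law on $[-M,\infty)$ equals that of the Markov recursion (in the state $(\p,\y)$ lagged $\max(s,q)$) started from its marginal and driven by fresh i.i.d.\ uniforms. This follows because $u_{i,t}$ is independent of $\F_{t-1}$ and the recursion is measurable. A second point needing care is that the whole lagged state, not just $\p_{t_0}$, must be initialized; its initial discrepancy is bounded by constants, which is all the geometric bound requires.
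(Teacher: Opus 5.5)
Your proposal is correct and follows essentially the same route as the paper: a synchronous coupling through common uniforms, the identity $\mathbb{E}_{t-1}|y_{i,t}-y'_{i,t}|=|p_{i,t}-p'_{i,t}|$, the companion-matrix bound $D_t\le\Phi D_{t-1}$ under Assumption~\ref{ass_contraction1}, and an aggregated scalar recursion under Assumption~\ref{ass_contraction2}, after which the geometric coupling bound gives convergence in distribution and uniqueness. Your block-maximum induction is an inessential substitute for the paper's companion matrix $\tilde\Phi$, and your uniqueness argument via window laws of two stationary solutions restarted at $-M$ is a slightly more explicit version of the paper's L\'evy--Prokhorov and Markov-representation step.
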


More precisely, by forgetting the initial condition we mean that for any two realizations $(\y_t,\p_t)$ and $(\y'_t,\p'_t)$, $t\ge t_0$, of the process driven by the same innovation sequence $u_{i,t}$, but initialized from different values, we have
\begin{equation}
\label{eq_geometeric_mixing}
\mathbb E\sum_{i=1}^{N}|y_{i,t}-y'_{i,t}|=\mathbb E\sum_{i=1}^{N}|p_{i,t}-p'_{i,t}| \leq C \varrho^t,\quad t\ge t_0,
\end{equation}
for some constants $C<\infty$ and $0\leq\varrho<1$. 

While we could not locate Theorem \ref{th_stationary_unique} in the literature, related results have appeared in \citet{moysiadis2014binary,davis2016theory,fokianos2017binary,aknouche2021count,lee2023modeling}.
The nontrivial aspect in the binary setting arises from the discontinuous nature of $y_{i,t}$. A jump in $y_{i,t}$ from $0$ to $1$ can lead to markedly different values of future probabilities $\p_{t+1}$. In particular, given the information set at time $t-1$, since $y_{i,t}=\1(u_{i,t}\leq p_{i,t})$, there is a discontinuity at $u_{i,t}=p_{i,t}$, which induces discontinuous and non-Lipschitz behavior in $\p_{t+1}$. The key to overcoming this difficulty lies in a suitable coupling construction.

\section{Aggregation}\label{sec_aggregation}

In this section, we study the behavior of the total number of successes at a given time, $X_t(N):=y_{1,t}+\ldots+y_{N,t}.$
While the previous section analyzed the limit of the system as $T \to \infty$ with $N$ fixed, we now investigate the complementary asymptotic regime in which $N \to \infty$. To initialize the dynamics, we consider an initial condition $\{\p_{1-\max\{s,q\}},\ldots,\p_{-1},\p_0\}.$
For $t=1-\max\{s,q\},\ldots,0$, we supplement this initial condition by setting $y_{i,t}=\1(u_{i,t}\leq p_{i,t})$, where the variables $u_{i,t}$ are $U[0,1]$ and independent of one another (and independent of all $u_{i,t}$ appearing in Eq.~\eqref{eq_dgp_y}). The initial probabilities $\p_{1-\max\{s,q\}},\ldots,\p_0$ are allowed to be random. In this case, all errors $u_{i,t}$ are assumed to be independent of the initial condition.

Our focus is on the case of rare events, where each $p_{i,t}$ tends to zero as $N\to\infty$. Studying such rare events is important because many applications involve outcomes that occur infrequently but whose aggregate frequency carries significant information. Examples include default counts in large credit portfolios, the number of extreme returns across assets, or spikes in systemic risk indicators. In these settings aggregation not only smooths out idiosyncratic noise but also highlights the limiting distributional behavior of the system, providing a tractable framework for inference and risk assessment.

\begin{figure}[t]
  \centering
  \includegraphics[width=0.8\linewidth]{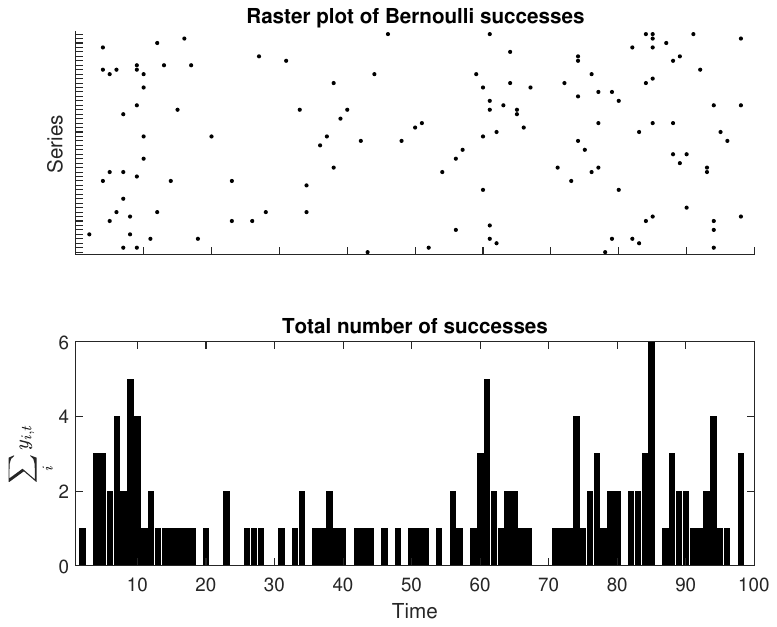}
\caption{Individual and aggregate successes for $N=50$ series generated from the interactive model \eqref{eq_multiple_k_diff} with $\omega_i=0.005$, $\alpha_i=0.01$, $\beta_i=0.6$, $\gamma_i=0.2$, and $p_{i,0}=1/38$ for all $i$. The top panel shows the raster of individual successes; the bottom panel reports the total number of successes at each time period. 
}
\label{fig_GAB_simulation_big}
\end{figure}

Figure \ref{fig_GAB_simulation_big} illustrates the behavior of the system when $N$ is large and successes are rare. Although the total number of series is $N = 50$, the aggregate number of successes remains small throughout the sample. By contrast, Figure \ref{fig_GAB_simulation_small} considers a setting with only $N = 5$ series but substantially higher success probabilities. Despite these differences in cross-sectional size and individual success rates, the total number of successes is of comparable magnitude across the two data-generating processes.

As we show below, aggregation naturally leads to a Poisson autoregression/INGARCH, providing a micro-level theoretical foundation for the applicability of such models. In the linear case, the conditional mean of the aggregate count follows a GARCH-type recursion, recovering exactly the model of \citet{ferland2006integer,fokianos2009poisson}. More generally, our nonlinear Theorem \ref{th_aggregation_nonlin_big} shows that richer micro-level interactions survive aggregation in a precise sense: the limiting intensity $\lambda_t$ inherits a genuinely nonlinear structure, yielding a broader class of Poisson autoregressions than the parametric nonlinear specifications considered in \citet[Section 2.3]{fokianos2009poisson}.

Let us define a linear Poisson autoregression of order $1$, $[X_t,\lambda_t]_{t=0,1,\ldots}$. Here $X_t\in \mathbb Z_{\ge 0}$, $\lambda_t\in \mathbb R_{\ge 0}$, and we denote by $\I^X_{t}$ the $\sigma$-field generated by $\{\lambda_0, X_0,\ldots,\lambda_{t},X_t,\lambda_{t+1}\}$. Then, for each $t\geq0$, $X_t$ follows a Poisson distribution, $Pois(\lambda_t)$, conditional on $\I^X_{t-1}$. For each $t> 0$, $\lambda_t$ satisfies for deterministic $c^P,\gamma^P,\beta^P\geq0$
\begin{equation}\label{eq_lambdat}
  \lambda_t=c^P+\gamma^P X_{t-1}+\beta^P\lambda_{t-1}.
\end{equation}
Dynamics \eqref{eq_lambdat} implies that $\sigma(\lambda_0, X_0,\ldots,\lambda_{t},X_t,\lambda_{t+1})=\sigma(\lambda_0, X_0,X_1,\ldots,X_t)$.

\begin{theorem}\label{th_aggregation}
Consider a multivariate model \eqref{eq_multiple_k_diff}. Suppose that $\omega_i\equiv\omega_{i,N}=\frac{c_i}{N}$, $\alpha_i\equiv\alpha_{i,N}=\frac{a_i}{N^{\kappa}}$, $\kappa>0$, $0\leq c_i,a_i\leq C<\infty$, $\beta_i\equiv\beta$, $i=1,\ldots,N$. Additionally suppose that there exist limits $\bar{c}:=\lim\limits_{N\to\infty}\frac1{N}\sum\limits_{i=1}^{N}c_i$ and $\bar{\gamma}:=\lim\limits_{N\to\infty}\frac1{N}\sum\limits_{i=1}^{N}\gamma_i$. Finally, suppose that the initial condition satisfies
\begin{equation}\label{eq_aggregation_pi0}
  \sum_{i=1}^N p_{i,0}\xrightarrow[N\to\infty]{d}\lambda_0,\qquad \sum_{i=1}^N \mathbb{E}p_{i,0}\xrightarrow[N\to\infty]{}\mathbb{E}\lambda_0,\qquad
  \max\limits_{i=1,\ldots,N} p_{i,0}\xrightarrow[N\to\infty]{p}0,
\end{equation}
where $\lambda_0$ is some random variable. Then as $N\to\infty$, $\left[\sum_{i=1}^N y_{i,t},\,\sum_{i=1}^N p_{i,t}\right]_{t=0,1,\dots}$ converges in finite-dimensional distributions to a Poisson autoregression of order $1$, $[X_t,\lambda_t]_{t=0,1,\dots}$ with dynamics \eqref{eq_lambdat}, $c^P=\bar{c}$, $\gamma^P=\bar{\gamma}$, $\beta^P=\beta$.
\end{theorem}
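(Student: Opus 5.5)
We argue by induction on $t$, proving joint convergence in distribution of $(S_0,X_0(N),\ldots,S_t,X_t(N),S_{t+1})$, where $S_t:=\sum_{i=1}^N p_{i,t}$, towards $(\lambda_0,X_0,\ldots,\lambda_t,X_t,\lambda_{t+1})$.

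The first step is to sum \eqref{eq_multiple_k_diff} over $i$ with the given scalings. This gives
\begin{equation*}
S_{t}=\frac1N\sum_{i=1}^N c_i+\frac{1}{N}\Big(\sum_{i=1}^N\gamma_i\Big)X_{t-1}(N)+\beta S_{t-1}+R_{t},\qquad R_t:=N^{-\kappa}\sum_{i=1}^N a_i y_{i,t-1}.
\end{equation*}
Since $0\le R_t\le C N^{-\kappa}X_{t-1}(N)$, we have $R_t\to0$ in probability as soon as $X_{t-1}(N)$ is tight. Because $\frac1N\sum_i c_i\to\bar c$ and $\frac1N\sum_i\gamma_i\to\bar\gamma$, the map $(S_{t-1},X_{t-1}(N))\mapsto S_t$ is asymptotically the continuous affine map in \eqref{eq_lambdat}. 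The continuous mapping theorem and Slutsky's lemma then carry joint convergence up to time $t-1$ over to $S_t$. Tightness of all quantities follows from the expectation bound $\mathbb E S_t\le \max_i\omega_i N+\ldots$. More precisely, taking expectations in the displayed identity gives $\mathbb E S_t\le C+C\,\mathbb E X_{t-1}(N)+\beta\,\mathbb E S_{t-1}$, and $\mathbb E X_{t}(N)=\mathbb E S_t$. Iterating from the assumption $\sum_i\mathbb Ep_{i,0}\to\mathbb E\lambda_0<\infty$ yields bounds uniform in $N$.

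The second step is the Poisson step. Conditional on $\F_{t-1}$, $X_t(N)$ is a sum of independent Bernoulli$(p_{i,t})$ variables. By Le Cam's inequality, its total-variation distance to $Pois(S_t)$ is at most $\sum_i p_{i,t}^2\le \max_i p_{i,t}\,S_t$. Hence, to get joint convergence one computes the conditional characteristic function (or conditional law) of $X_t(N)$ given the past and shows that it is $\exp(S_t(e^{i\theta}-1))+o_p(1)$. Combined with the induction hypothesis and bounded convergence, this gives convergence of the joint characteristic function of $(S_0,X_0(N),\ldots,S_t,X_t(N))$ to that of the limit. In the limit, $X_t\mid\I^X_{t-1}\sim Pois(\lambda_t)$. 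The base case $t=0$ is identical, using \eqref{eq_aggregation_pi0} and the independence of the initial $u_{i,0}$.

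The key remaining ingredient, and the main obstacle, is showing that $M_t:=\max_i p_{i,t}\to0$ in probability for every fixed $t$, uniformly enough to kill the Le Cam error. From the recursion,
\begin{equation*}
M_t\le \frac{C}{N}+\frac{C}{N^{\kappa}}+\frac{1}{N}\max_i\gamma_i\, X_{t-1}(N)+\beta M_{t-1}\le \frac{C}{N}+\frac{C}{N^{\kappa}}+\frac{X_{t-1}(N)}{N}+\beta M_{t-1},
\end{equation*}
using $\gamma_i\le1$ and $y_{i,t-1}\le1$. Since $X_{t-1}(N)$ is tight and $M_0\to0$ in probability by assumption, induction gives $M_t\to0$ in probability. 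Together with tightness of $S_t$, this gives $\sum_ip_{i,t}^2\to0$ in probability. I expect the delicate point to be organizing the conditioning correctly, so that the Le Cam bound is applied conditionally and then integrated against bounded test functions of the past. The finite-dimensional convergence then follows for all $t$.
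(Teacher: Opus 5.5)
Your proposal is correct and follows essentially the same route as the paper: induction in $t$, summing the recursion and killing the $N^{-\kappa}$ term via tightness, bounding $\max_i p_{i,t}$ recursively using $\gamma_i\le 1$, and then a Poisson approximation conditional on the past. The only real differences are minor: you use Le Cam's inequality where the paper uses a probability-generating-function expansion (its Lemma on random Poisson limits), and your explicit step of conditioning and then integrating against bounded test functions of the past is a more careful version of the paper's brief remark that one ``adds the past to the conditioning.''
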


Theorem \ref{th_aggregation} shows that different specifications for $\p_t$ can generate identical aggregate dynamics. For example, suppose that in both models $\alpha_i\equiv0$, and in the first model $\gamma_i = \tfrac{1}{3}$ for all $i$, while in the second model $\gamma_i = \tfrac{2}{3}$ for half of the units and $\gamma_i = 0$ for the rest, with all other coefficients are identical across the two models. In both cases, the average coefficient is $\bar{\gamma} = \tfrac{1}{3}$, implying identical aggregate dynamics. However, at the panel level, the two models differ substantially: in the second model, half of the probabilities evolve deterministically given $\p_0$.

\begin{remark}
One can extend the model to allow for heterogeneous $\beta_i$, as long as $\beta_i=\beta+o(1)$.
\end{remark}

\begin{remark}
Condition $\alpha_i\to0$ as $N\to\infty$ is essential for the Poisson–limit result. Although the rate of convergence may vary across $i$, keeping $\alpha_i$ fixed would change the asymptotic behavior of the total sum $\sum_{i=1}^{N} y_{i,t}$. If $\alpha_i$ does not vanish, with positive probability we get $y_{i,t}=1$, so that $p_{i,t+1}\geq\alpha_i$. Therefore, at least one unit retains a non-negligible success probability, violating the small-probability condition required by the Poisson limit theorem. Given that $\alpha_i+\beta_i+\gamma_i<\bar{C}<1$ for all $i$, we expect that the effect of $y_{i,t}=1$ will decay over time. Consequently, as the system evolves, the aggregate process behaves approximately like a Poisson process with an additional serially correlated component induced by past feedback effects.

For example, if $\alpha_i\equiv\alpha>0$, $c_i\equiv c>0$, and $\beta=\gamma_i=0$, then, conditionally on $X_t=k$, at $t+1$ there are $k$ non-negligible probabilities $p_{i,t+1}=c/N+\alpha\approx\alpha$ and $N-k$ probabilities $p_{i,t+1}=c/N$. Thus, aggregating the outcomes of the former, we get Bin$(k,\alpha)$, while aggregating the outcomes of the latter, we get an independent Pois$(c)$ component, and
$$
X_{t+1}\stackrel{d}{=} \text{Pois}(c)+\text{Bin}(X_t,\alpha).
$$
\end{remark}
\begin{remark}
If $\lambda_0\thicksim \text{Gamma}(\mathfrak{m},b)$, where $\mathfrak{m}$ is a shape parameter and $b$ is a rate parameter, then the unconditional distribution of $X_0$ is negative binomial, $\text{NB}\left(\mathfrak{m},\tfrac{b}{1+b}\right)$. If additionally, $\bar{c}=\bar{\gamma}=0$, so that the evolution of the Poisson intensity \eqref{eq_lambdat} is deterministic given $\lambda_0$, then $\lambda_t\equiv\beta^t\lambda_0\thicksim\text{Gamma}(\mathfrak{m},b\beta^{-t})$. Thus, the unconditional distribution of $X_t$ is again negative binomial, $\text{NB}\left(\mathfrak{m},\tfrac{b}{b+\beta^{t}}\right)$.
\end{remark}

There are many cases when the initial condition \eqref{eq_aggregation_pi0}  is satisfied. For example, if $p_{i,0}=\frac{r_i}{N}$, where $r_i$ are i.i.d.~with finite mean. Then ${\frac1{N}\max_{i=1,\ldots,N}r_i\xrightarrow[N\to\infty]{p}0}$ and ${\sum_{i=1}^N p_{i,0}=\frac1{N}\sum_{i=1}^N r_{i}\xrightarrow[N\to\infty]{p}\mathbb{E}r_1}$. Moreover, it is satisfied by the stationary solution, as shown below.
\begin{proposition}\label{rem_init_cond_aggreg}
If $\beta+\gamma_i<\bar{C}<1$ for all $i$, then the unique stationary solution of the multivariate model \eqref{eq_multiple_k_diff} satisfies assumption on the initial condition \eqref{eq_aggregation_pi0}.
\end{proposition}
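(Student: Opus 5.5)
Under the scaling of Theorem \ref{th_aggregation}, namely $\omega_i=c_i/N$, $\alpha_i=a_i/N^{\kappa}$, $\beta_i\equiv\beta$, the proof writes the stationary solution explicitly and checks the three conditions in \eqref{eq_aggregation_pi0} directly. Write $S_t:=\frac1N\sum_{j=1}^N y_{j,t}$. Since $\beta<1$, the recursion \eqref{eq_multiple_k_diff} can be unrolled backwards. The condition $\beta+\gamma_i<\bar C<1$, together with $\alpha_i\to0$, gives $\alpha_i+\beta+\gamma_i<1$ for large $N$, so Assumption \ref{ass_contraction1} holds in row-sum form. Theorem \ref{th_stationary_unique} therefore yields a unique stationary solution, and it must coincide with
\begin{equation*}
p_{i,t}=\sum_{k\ge0}\beta^k\Big(\frac{c_i}{N}+\frac{a_i}{N^{\kappa}}y_{i,t-1-k}+\gamma_i S_{t-1-k}\Big).
\end{equation*}
This representation converges absolutely because every bracket is bounded.

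\emph{Means.} Put $\mu_i=\mathbb E p_{i,0}$ and use \eqref{eq_multiple_k_diff_mean_sum}. With $1-\alpha_i-\beta=(1-\beta)(1+O(N^{-\kappa}))$, the numerator of \eqref{eq_multiple_k_diff_mean_sum} is $\frac{1}{N}\sum_i c_i/(1-\beta)+o(1)\to \bar c/(1-\beta)$. The denominator converges to $1-\bar\gamma/(1-\beta)$. This limit is positive because $\gamma_i<\bar C-\beta$ gives $\bar\gamma\le \bar C-\beta<1-\beta$. Hence $\sum_i\mu_i\to \bar c/(1-\beta-\bar\gamma)=:m$. From \eqref{eq_multiple_k_diff_mean} we also get $\max_i\mu_i=O(1/N)$.

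\emph{Maximum.} From the representation,
\[
\max_i p_{i,0}\le \frac{C}{N(1-\beta)}+\frac{C}{N^{\kappa}(1-\beta)}+\sum_{k\ge0}\beta^k S_{-1-k}.
\]
The last term has expectation $\frac{1}{N}\sum_j\mu_j/(1-\beta)=O(1/N)$. The whole bound is therefore $o_p(1)$.

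\emph{Convergence in distribution of $\sum_i p_{i,0}$.} This is the main obstacle. I would not identify $\lambda_0$ through fixed-point arguments. Instead I would use the strategy that the stationary solution is asymptotically deterministic at the aggregate level, and then show $\sum_ip_{i,0}\to m$ in $L^2$. The condition only asks for convergence to \emph{some} $\lambda_0$, and the mean condition then forces $\mathbb E\lambda_0=m$.

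That strategy is wrong. The aggregate has a nondegenerate stationary law: the stationary Poisson autoregression, with $\sum_i p_{i,t}\approx\lambda_t$ random. So the plan is instead as follows.
\begin{enumerate}
\item Show tightness. This follows from the bounded means $\sum_i\mu_i\to m$.
\item Fix a large horizon $T$ and write $\sum_ip_{i,0}$ through its dependence on the innovations over $[-T,0]$ and on $\p_{-T}$. The contraction of Theorem \ref{th_stationary_unique}, in the $L^1$ form \eqref{eq_geometeric_mixing} with constants uniform in $N$ (since $\beta+\gamma_i<\bar C$), makes the influence of $\p_{-T}$ at most $C\bar C^{T}$ in $L^1$, uniformly in $N$.
\item Start a copy from the deterministic initial condition $p_{i,-T}=\mu_i$. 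This satisfies \eqref{eq_aggregation_pi0} with a deterministic limit, so Theorem \ref{th_aggregation} applies. Its aggregate at time $0$ then converges in law to $\lambda^{(T)}_T$, the intensity after $T$ steps of the Poisson autoregression started at $m$.
\item Let $T\to\infty$. The Poisson autoregression with $\bar\gamma+\beta<1$ contracts in $L^1$ in the same way, so the laws of $\lambda^{(T)}_T$ converge to its stationary law $\lambda_0$. A standard approximation argument, combining uniform $L^1$ closeness with weak convergence, then gives $\sum_ip_{i,0}\xrightarrow{d}\lambda_0$.
\end{enumerate}
The delicate point is the uniformity in $N$ of the contraction constant in \eqref{eq_geometeric_mixing}. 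I expect to obtain it by redoing the coupling bound for model \eqref{eq_multiple_k_diff} with weights $\sum_i|p_{i,t}-p'_{i,t}|$. Under the shared-$u$ coupling this quantity contracts by $\max_i(\alpha_i+\beta+\gamma_i)\le\bar C+o(1)$, since $\mathbb E|y-y'|=\mathbb E|p-p'|$.
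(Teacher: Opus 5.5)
Your final plan (steps (i)--(iv), after you discard the ``asymptotically deterministic'' idea) is correct. For the key step, convergence in distribution of $\sum_i p_{i,0}$, it takes a genuinely different route from the paper.

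\emph{The paper's route.} The paper works entirely with the stationary law. It first bounds $\sup_N\mathbb E(\sum_i p^*_{i,t})^2$ by a Minkowski/martingale-difference argument. It then uses stationarity to turn one step of the recursion into the approximate functional equation $G_N(z)=z^{\sum_i c_i/N}G_N\bigl(z^{\beta}\exp\{z^{\bar\gamma_N}-1\}\bigr)+o(1)$ for the generating function $G_N(z)=\mathbb E z^{\sum_i p^*_{i,t}}$, and passes to a subsequential limit. Uniqueness of the solution of $G(z)=z^{\bar c}G\bigl(z^{\beta}\exp\{z^{\bar\gamma}-1\}\bigr)$ follows by iterating $z_{n+1}=z_n^{\beta}\exp\{z_n^{\bar\gamma}-1\}$. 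Finally, the second-moment bound supplies uniform integrability for the mean condition.

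\emph{Your route.} You couple the stationary panel with a copy started at $p'_{i,-T}=\mu_i$ and apply Theorem \ref{th_aggregation} to the copy. The discrepancy is controlled by an aggregate $L^1$ contraction that is uniform in $N$: the rate is at most $\max_i\alpha_i+\beta+\bar\gamma_N\le\bar C+CN^{-\kappa}$, and the initial discrepancy is at most $2\sum_i\mu_i=O(1)$. You then let $T\to\infty$ using the Poisson-process coupling of the limiting autoregression.

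\emph{Comparison.} The paper's functional equation is exactly the stationarity equation for the intensity of the Poisson autoregression, so both routes identify the same $\lambda_0$. The paper's proof is self-contained and yields an explicit generating function, but it is tied to the linear recursion. Yours needs only first moments and reuses Theorem \ref{th_aggregation}. It should also adapt to the setting of Theorem \ref{th_aggregation_nonlin_big}, given an analogous $N$-uniform aggregate $L^1$ contraction for the panel and a contraction for the limiting autoregression. Your bound on $\max_i p_{i,0}$, which dominates every $p_{i,0}$ by a common quantity with $O(1/N)$ mean, is also simpler than the paper's second-moment bounds on each $p^*_{i,t}$ plus a union bound.

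\emph{One point to close.} The second condition in \eqref{eq_aggregation_pi0} requires $\mathbb E\lambda_0=\lim_N\sum_i\mu_i=m$. Your remark that ``the mean condition then forces $\mathbb E\lambda_0=m$'' reverses the logic: this equality is part of what must be shown. It follows from step (iv) if you phrase that step in Wasserstein-1 distance. Couple the chain started at $m$ at time $0$ with the chain started at $m$ at time $1$ through common unit-rate Poisson processes. This gives $W_1\bigl(\mathcal L(\lambda^{(T+1)}_{T+1}),\mathcal L(\lambda^{(T)}_{T})\bigr)\le 2m(\bar\gamma+\beta)^{T}$, so the laws are Cauchy in $W_1$. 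Since $m$ is the fixed point of the mean recursion, $\mathbb E\lambda^{(T)}_T=m$ for every $T$, and therefore $\mathbb E\lambda_0=m$. With this, your tightness step (i) becomes unnecessary.
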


Theorem \ref{th_aggregation} can be extended to allow for multiple lags and various nonlinearities. Consider a nonlinear interactive model, which generalizes model \eqref{eq_multiple_k_diff}:
\begin{equation}\label{eq_multiple_k_diff_nonlin_big}
\begin{split}
p_{i,t}&=\frac{c_i}{N}+\frac1{N^{\kappa}}f_{\alpha,i}(y_{i,t-1},\ldots,y_{i,t-s})+\sum\limits_{\tau=1}^{s}\beta_{\tau} p_{i,t-\tau}\\
&+f_{\gamma,i}\left(\frac1{N}\sum\limits_{j=1}^{N}y_{j,t-1},\ldots,\frac1{N}\sum\limits_{j=1}^{N}y_{j,t-s}, \frac1{N}\sum\limits_{j=1}^{N}p_{j,t-1},\ldots,\frac1{N}\sum\limits_{j=1}^{N}p_{j,t-s}\right)\\
&+\frac1{N}\boldsymbol\gamma_i^{\T}
f_{\gamma}\left(\sum\limits_{j=1}^{N}y_{j,t-1},\ldots,\sum\limits_{j=1}^{N}y_{j,t-s}, \sum\limits_{j=1}^{N}p_{j,t-1},\ldots,\sum\limits_{j=1}^{N}p_{j,t-s}\right),
\end{split}\end{equation}
where $\kappa>0$, $\beta_{\tau}\geq0$, $0\leq c_i\leq C$, $\boldsymbol\gamma_i\in\mathbb R_+^k$ with $\|\boldsymbol\gamma_i\|_1\leq C$, $f_{\alpha,i}:\{0,1\}^s\to\mathbb{R}_+$, ${f_{\gamma,i}:[0,1]^{2s}\to\mathbb{R}_+}$, and $f_{\gamma}:\mathbb{R}^{2s}_+\to\mathbb{R}^{k}_+$. We assume
$$f_{\alpha,i}(0,\ldots,0)=f_{\gamma,i}(0,\ldots,0)=0,\qquad f_{\gamma}(0,\ldots,0)=0.$$
We introduce the nonlinear Poisson autoregression counterpart of the interactive model \eqref{eq_multiple_k_diff_nonlin_big}, $[X_t,\lambda_t]_{t>-s}$. Denote by $\I^X_{t}$ the $\sigma$-field generated by $\{\lambda_{-s+1},X_{-s+1}\ldots,\lambda_{t},X_t,\lambda_{t+1}\}$.  Then, for each $t>-s$, $X_t$ follows a Poisson distribution, $Pois(\lambda_t)$, conditional on $\I^X_{t-1}$. For each $t> 0$, $\lambda_t$ satisfies for deterministic $c^P\geq0,\,\boldsymbol{\gamma}^P\in\mathbb{R}^{k}_{+},\,\boldsymbol{\beta}^P\in\mathbb{R}^{2s}_{+}$
\begin{equation}\label{eq_lambdat_big}
  \lambda_t
  =c^P+(X_{t-1},\ldots,X_{t-s},\lambda_{t-1},\ldots,\lambda_{t-s})\boldsymbol{\beta}^P
  +\left[f_{\gamma}(X_{t-1},\ldots,X_{t-s},\lambda_{t-1},\ldots,\lambda_{t-s})\right]^{\T}{\boldsymbol\gamma}^{P}.
\end{equation}
Dynamics \eqref{eq_lambdat_big} implies that
$$\sigma(\lambda_{-s+1}, X_{-s+1},\ldots,\lambda_{t},X_t,\lambda_{t+1})=\sigma(\lambda_{-s+1}, X_{-s+1},\ldots,\lambda_0, X_0,X_1,\ldots,X_t).$$
\begin{theorem}\label{th_aggregation_nonlin_big}
Consider a nonlinear interactive model \eqref{eq_multiple_k_diff_nonlin_big}.
Suppose that $\kappa,C,A,L_\gamma,M_1,M_2$ are finite constants independent of $N$. Suppose that, for all $i$,
$$0\leq f_{\alpha,i}(y_1,\ldots,y_s)\leq A\sum_{\tau=1}^s y_\tau.$$
Suppose that, for all $i$, $f_{\gamma,i}$ is twice continuously differentiable and with uniformly bounded gradient and Hessian,
$$
\sup_{x\in[0,1]^{2s}}\|\nabla f_{\gamma,i}(x)\|_1\leq M_1,
\qquad
\sup_{x\in[0,1]^{2s}}\|\nabla^2f_{\gamma,i}(x)\|\leq M_2.
$$
Suppose that $f_\gamma$ is Lipschitz,
$$
\|f_\gamma(x)-f_\gamma(x')\|_\infty \leq L_\gamma\|x-x'\|_\infty.
\qquad x,x'\in\mathbb{R}_{+}^{2s},
$$
Suppose that $\sum_{\tau=1}^s\beta_\tau+M_1+CL_\gamma<1$ and there exist limits
$$\bar{c}:=\lim\limits_{N\to\infty}\frac1{N}\sum\limits_{i=1}^{N}c_i,\qquad \bar{\boldsymbol\gamma}:=\lim\limits_{N\to\infty}\frac1{N}\sum\limits_{i=1}^{N}\boldsymbol\gamma_i,\qquad \bar{\mathbf{f}}:=\lim\limits_{N\to\infty}\frac1{N}\sum\limits_{i=1}^{N}\nabla f_{\gamma,i}(0,\ldots,0).$$
Finally, suppose that the initial condition satisfy
\begin{equation*}\label{eq_aggregation_pi0_n_big}
  \sum_{i=1}^N p_{i,\tau}\xrightarrow[N\to\infty]{d}\lambda_{\tau},\qquad \sum_{i=1}^N \mathbb{E}p_{i,\tau}\xrightarrow[N\to\infty]{}\mathbb{E}\lambda_{\tau},\qquad \max\limits_{i=1,\ldots,N} p_{i,\tau}\xrightarrow[N\to\infty]{p}0,
\end{equation*}
where $(\lambda_{1-s},\ldots,\lambda_0)$ is a random vector and convergence is joint over $\tau=-s+1,\ldots,0$. Then as $N\to\infty$, $\left[\sum_{i=1}^N y_{i,t},\,\sum_{i=1}^N p_{i,t}\right]_{t=-s+1,-s+2,\dots}$ converges in finite-dimensional distributions to a nonlinear Poisson autoregression of order $s$, $[X_t,\lambda_t]_{t=-s+1,-s+2,\dots}$ with dynamics \eqref{eq_lambdat_big}, $c^P=\bar{c}$, $\boldsymbol\gamma^P=\bar{\boldsymbol\gamma}$, $\boldsymbol\beta^P=\bar{\mathbf{f}}+(0,\ldots,0,\beta_1,\ldots,\beta_s)^{\T}$.
\end{theorem}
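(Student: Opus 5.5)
We prove Theorem \ref{th_aggregation_nonlin_big} by induction over time, using the Poisson limit theorem conditionally at each step. The central object is the aggregate state at time $t$,
$$S_t^N:=\Bigl(\textstyle\sum_i y_{i,\tau},\sum_i p_{i,\tau}\Bigr)_{\tau\le t}.$$
The inductive claim has three parts. First, $S_t^N$ together with $\Lambda_{t+1}^N:=\sum_i p_{i,t+1}$ converges jointly in distribution to the corresponding vector for the Poisson autoregression. Second, $\max_i p_{i,t+1}\to 0$ in probability. Third, $\sum_i\mathbb E p_{i,t+1}$ stays bounded, which we use for uniform integrability and tightness.

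The base case $\tau=-s+1,\ldots,0$ follows from the assumed joint convergence of initial probabilities. The step from the probabilities $\p_t$ to the count $X_t(N)=\sum_i y_{i,t}$ goes as follows. Conditionally on $\F_{t-1}$, the $y_{i,t}$ are independent Bernoulli$(p_{i,t})$. Le Cam's inequality then gives
$$d_{TV}\bigl(\mathcal L(X_t(N)\mid\F_{t-1}),\,Pois(\Lambda_t^N)\bigr)\le \sum_i p_{i,t}^2\le \max_i p_{i,t}\,\Lambda_t^N.$$
This bound tends to $0$ in probability by the inductive hypothesis. Combined with continuity of $\lambda\mapsto Pois(\lambda)$, this lifts the joint convergence of $(S_{t-1}^N,\Lambda_t^N)$ to that of $(S_{t-1}^N,\Lambda_t^N,X_t(N))$, with $X_t$ conditionally $Pois(\lambda_t)$. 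We verify this through conditional characteristic functions.

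The step from the counts to the next intensity $\Lambda_{t+1}^N$ sums \eqref{eq_multiple_k_diff_nonlin_big} over $i$, giving four terms.

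The constant term $\frac1N\sum_i c_i\to\bar c$.

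The own-feedback term $N^{-\kappa}\sum_i f_{\alpha,i}$ is bounded by $AN^{-\kappa}\sum_{\tau}X_{t+1-\tau}(N)$. This is $O_p(N^{-\kappa})$ because the counts are tight, so the term vanishes.

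The global term $\frac1N\sum_i\boldsymbol\gamma_i^{\T}f_\gamma(\cdot)$ equals $\bigl(\frac1N\sum_i\boldsymbol\gamma_i\bigr)^{\T}f_\gamma(\text{aggregates})$. This converges to $\bar{\boldsymbol\gamma}^{\T}f_\gamma(X,\lambda)$ by the continuous mapping theorem, since $f_\gamma$ is Lipschitz.

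The mean-field term requires more care. We Taylor expand $f_{\gamma,i}$ at $0$, using $f_{\gamma,i}(0)=0$:
$$\sum_i f_{\gamma,i}(x/N)=\Bigl(\frac1N\sum_i\nabla f_{\gamma,i}(0)\Bigr)^{\T}x+R_N,\qquad |R_N|\le \frac{M_2}{2}\,\frac{\|x\|^2}{N},$$
where $x$ is the vector of aggregates. Since $x=O_p(1)$, we get $R_N=O_p(1/N)$, and the linear part tends to $\bar{\mathbf f}^{\T}x$. The $\beta_\tau$ terms pass directly to the limit. Together these give exactly \eqref{eq_lambdat_big} with $\boldsymbol\beta^P=\bar{\mathbf f}+(0,\ldots,0,\beta_1,\ldots,\beta_s)^{\T}$. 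Joint convergence with $S_t^N$ holds because $\Lambda_{t+1}^N$ is a function of $S_t^N$ plus a remainder that is $o_p(1)$.

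It remains to propagate the max condition and the expectation bound, which we expect to be the main obstacle. For a single unit,
$$p_{i,t+1}\le \frac{C}{N}+AN^{-\kappa}\sum_\tau y_{i,t+1-\tau}+\sum_\tau\beta_\tau p_{i,t+1-\tau}+M_1\max_\tau\frac{\text{aggregates}}{N}+\frac{C}{N}L_\gamma\|\text{aggregates}\|_\infty.$$
Every term is $o_p(1)$ uniformly in $i$, since aggregates are $O_p(1)$ and the previous maxima tend to $0$. Hence $\max_i p_{i,t+1}\to0$. Summing the same inequality over $i$ and taking expectations, with $\mathbb E X_\tau(N)=\mathbb E\Lambda_\tau^N$, gives
$$\mathbb E\Lambda_{t+1}^N\le C+\Bigl(\textstyle\sum\beta_\tau+M_1+CL_\gamma\Bigr)\max_\tau \mathbb E\Lambda_\tau^N+o(1).$$
This bounds the expectations; the contraction condition even yields bounds uniform in $t$, although a finite-horizon bound suffices. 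Tightness of all aggregates follows by Markov's inequality. Convergence of $\mathbb E\Lambda^N_{t+1}$ to $\mathbb E\lambda_{t+1}$ can be obtained by domination, since the first moment of each term is controlled. This closes the induction, and finite-dimensional convergence follows.
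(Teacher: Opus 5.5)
Your proposal is correct and follows essentially the paper's route: induction in time, a conditional Poisson approximation at each step (you use Le Cam's inequality with conditional characteristic functions where the paper uses its generating-function lemma), and a term-by-term limit of the summed recursion, with a second-order Taylor bound on $\sum_i f_{\gamma,i}$ and continuous mapping for $f_\gamma$. Your closing remarks on expectations are loose, but this part is not needed for the finite-dimensional conclusion, since tightness from the distributional convergence already supplies every $o_p(1)$ statement you use. Specifically, $\mathbb{E}\|x\|_\infty$ is bounded by the sum of the $2s$ means rather than by $\max_\tau\mathbb{E}\Lambda^N_\tau$, so your uniform-in-$t$ aside does not follow, although the finite-horizon bound still holds. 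Also, bounded first moments do not give convergence of means without uniform integrability; the paper obtains that by carrying $\sum_i\mathbb{E}p_{i,t}\to\mathbb{E}\lambda_t$ through its induction.
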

Condition $\sum_{\tau=1}^s\beta_\tau+M_1+CL_\gamma<1$ implies that, starting from $N$ large enough, the right-hand side of \eqref{eq_multiple_k_diff_nonlin_big}
belongs to $[0,1]$ for every $i$ and $t$. Conditions $f_{\gamma,i}(\cdot)\geq0$ and $f_{\gamma,i}(0,\ldots,0)=0$ imply that the gradient $\nabla f_{\gamma,i}(0,\ldots,0)\in\mathbb{R}_+^{2s}$, so that $\bar{\mathbf{f}}\in\mathbb{R}_+^{2s}$.

When $s=k=1$, $f_{\alpha,i}(y)=a_i y$, $f_{\gamma,i}\equiv0$, $f_{\gamma}(y,p)=y$, Theorem \ref{th_aggregation_nonlin_big} reduces to Theorem \ref{th_aggregation}. Alternatively, one can set $f_{\gamma}\equiv0$ and $f_{\gamma,i}(y,p)=\gamma_i y$. The linear multi-lag extension of Theorem \ref{th_aggregation} can be obtained by setting $k=1$, $f_{\alpha,i}(y_1,\ldots, y_s)=\sum_{\tau=1}^{s}a_{i,\tau} y_{\tau}$, $f_{\gamma,i}(y,p)\equiv0$,
$f_{\gamma}(y_1,\ldots, y_s,p_1,\ldots, p_s)=\sum_{\tau=1}^{s}(\delta_{\tau}^y y_{\tau}+\delta_{\tau}^p p_{\tau})$, where $\delta_{\tau}^y,\delta_{\tau}^p\in\mathbb{R}_+$. Alternatively, one can set $f_{\gamma}\equiv0$ and $f_{\gamma,i}(y_1,\ldots, y_s,p_1,\ldots, p_s)=\gamma_i \sum_{\tau=1}^{s}(\delta_{\tau}^y y_{\tau}+\delta_{\tau}^p p_{\tau})$. Both representations recover the same limiting intensity $\lambda_t$, because for linear functions the prefactor $1/N$ can be taken out of the argument.

The term $\frac1{N^{\kappa}}f_{\alpha,i}(y_{i,t-1},\ldots,y_{i,t-s})$ is a nonlinear and multi-lag generalization of $\frac{a_i}{N^{\kappa}}y_{i,t-1}$. For example, it allows for across-time interactions $y_{i,t-1}y_{i,t-2}$. This term disappears in the limit due to the prefactor $N^{-\kappa}$. Without the prefactor the term can occasionally take non-negligible values, violating the requirement of the Poisson approximation that all probabilities are small. The term $f_{\gamma,i}(\cdot)$ is a nonlinear and multi-lag generalization of the interactions, which are governed by the average number of successes. The averages are $o_p(1)$, leading to the average gradient of $f_{\gamma,i}$ in the limit as the outcome of the first-order Taylor approximation. Thus, these nonlinearities asymptotically become linear. Finally, the term $\frac1{N}\boldsymbol\gamma_i^{\T}f_{\gamma}(\cdot)$ is the one which remains nonlinear in the limit, due to its arguments converging to non-degenerate random variables. Unlike the $f_{\alpha,i}(\cdot)$ and $f_{\gamma,i}(\cdot)$ terms, where individual heterogeneity enters through $i$-indexed coefficients that average out in the limit, $f_{\gamma}(\cdot)$ is common across all $i$: every coordinate $i$ is exposed to the same aggregate quantity $f_{\gamma}\left(\sum\limits_{j=1}^{N}y_{j,t-1},\ldots,\sum\limits_{j=1}^{N}y_{j,t-s}, \sum\limits_{j=1}^{N}p_{j,t-1},\ldots,\sum\limits_{j=1}^{N}p_{j,t-s}\right)$, with individual heterogeneity entering only through $\boldsymbol\gamma_i$.

\medskip

A distinctive feature of the aggregation in Theorems \ref{th_aggregation} and \ref{th_aggregation_nonlin_big} is that each coordinate $i$ interacts with the equal-weight cross-sectional aggregates $\sum\limits_{j=1}^{N}y_{j,t}$ and $\sum\limits_{j=1}^{N}p_{j,t}$, leaving no room for heterogeneous interaction weights. In some special cases, such as the network GAB model, this restriction can be relaxed, as illustrated in Theorem \ref{th_aggregation_network} below.

\begin{theorem}\label{th_aggregation_network}
Consider a multivariate network model \eqref{eq_multiple_k_diff_W}. Suppose that the network matrix $W$ is doubly stochastic and all out-degrees satisfy $d_i^{out}/\log N\geq d, d>0$, where $d_i^{out}:=\sum\limits_{j=1}^{N} \1(W_{ij}\neq0)$. Suppose that $\omega_i\equiv\omega_{i,N}=\frac{c_i}{N}$, $\alpha_i\equiv\alpha_{i,N}=\frac{a_i}{N^{\kappa}}$, $\kappa>0$, $0\leq c_i,a_i\leq C<\infty$, $\beta_i\equiv\beta$,  $\gamma_i\equiv\gamma$, $i=1,\ldots,N$, and there exists limit $\bar{c}:=\lim\limits_{N\to\infty}\frac1{N}\sum\limits_{i=1}^{N}c_i$. Finally, suppose that the initial condition satisfies
\begin{equation*}
  \sum_{i=1}^N p_{i,0}\xrightarrow[N\to\infty]{d}\lambda_0,\qquad \sum_{i=1}^N \mathbb{E}p_{i,0}\xrightarrow[N\to\infty]{}\mathbb{E}\lambda_0,\qquad
  \max\limits_{i=1,\ldots,N} p_{i,0}\xrightarrow[N\to\infty]{p}0,
\end{equation*}
where $\lambda_0$ is some random variable. Then as $N\to\infty$, $\left[\sum_{i=1}^N y_{i,t},\,\sum_{i=1}^N p_{i,t}\right]_{t=0,1,\dots}$ converges in finite-dimensional distributions to a Poisson autoregression of order $1$, $[X_t,\lambda_t]_{t=0,1,\dots}$ with dynamics \eqref{eq_lambdat}, $c^P=\bar{c}$, $\gamma^P=\gamma$, $\beta^P=\beta$.
\end{theorem}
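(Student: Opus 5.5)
The plan is to reduce Theorem \ref{th_aggregation_network} to the mechanism behind Theorem \ref{th_aggregation}: conditionally on $\F_{t-1}$, $X_t(N)=\sum_i y_{i,t}$ is a sum of independent Bernoullis, so by Le Cam's inequality its conditional law is within $\sum_i p_{i,t}^2\le \max_i p_{i,t}\sum_i p_{i,t}$ in total variation of $Pois(\Lambda_t)$, where $\Lambda_t:=\sum_i p_{i,t}$. Summing \eqref{eq_multiple_k_diff_W} over $i$ and using double stochasticity ($\sum_i W_{ij}=1$) gives exactly
\[
\Lambda_t=\frac1N\sum_i c_i+\frac1{N^{\kappa}}\sum_i a_i y_{i,t-1}+\gamma X_{t-1}+\beta\Lambda_{t-1},
\]
so the network weights vanish from the aggregate recursion. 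Once the middle term is $o_p(1)$ and $\max_i p_{i,t}\to_p0$ for every fixed $t$, induction on $t$ gives convergence of the finite-dimensional laws of $(X_0,\Lambda_0,\dots,X_t,\Lambda_t)$. At each step the joint law of the past converges, the new intensity is a continuous function of the past up to $o_p(1)$, and conditional Poisson approximation transfers to the new $X_t$ through a bounded-continuous-test-function argument (characteristic functions conditional on $\F_{t-1}$). This is the same induction as in the proof of Theorem \ref{th_aggregation}, which I would invoke essentially verbatim.

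The middle term is easy: $\mathbb E\, N^{-\kappa}\sum_i a_i y_{i,t-1}\le CN^{-\kappa}\,\mathbb E\Lambda_{t-1}$. Also $\mathbb E\Lambda_t$ stays bounded by induction, since it satisfies the same linear recursion in expectation starting from $\sum_i\mathbb E p_{i,0}\to\mathbb E\lambda_0$. So the only genuinely new ingredient relative to Theorem \ref{th_aggregation} is
\[
\max_i p_{i,t}\xrightarrow{p}0.
\]
In the complete-graph case this was immediate because the interaction term $\gamma N^{-1}X_{t-1}$ is common and small. Here unit $i$ receives $\gamma\sum_j W_{ij}y_{j,t-1}$, which could be large if $i$ has few neighbours carrying large weights. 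This is where the out-degree condition enters, and I expect it to be the main obstacle.

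To handle it, I would unroll: $p_{i,t}\le c_i/N+a_iN^{-\kappa}+\beta p_{i,t-1}+\gamma\sum_j W_{ij}y_{j,t-1}$. It suffices to show $\max_i \sum_j W_{ij}y_{j,t-1}\to_p0$. Conditionally on $\F_{t-2}$, $S_i:=\sum_j W_{ij}y_{j,t-1}$ is a weighted sum of independent Bernoullis with mean $\mu_i=\sum_jW_{ij}p_{j,t-1}\le \max_j p_{j,t-1}$, which is $o_p(1)$ by induction. I would then apply a Bernstein/Chernoff bound,
\[
P(S_i>\eps\mid\F_{t-2})\le \exp\!\Big(-\frac{\eps}{w_i}h(\mu_i/\eps)\Big),\qquad w_i=\max_jW_{ij},
\]
followed by a union bound over $N$ units. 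This requires $w_i\lesssim 1/\log N$ with a sufficiently good constant. The degree condition $d_i^{out}\ge d\log N$ guarantees many neighbours, but I would need to assume (or argue from the intended reading) that weights are spread over them, e.g.\ $W_{ij}\le 1/d_i^{out}$ as for normalized adjacency. Then $w_i\le 1/(d\log N)$, and the exponent is $\asymp \eps d\log N\cdot\log(\eps/\mu_i)$, which beats $\log N$ once $\mu_i\to0$ because the factor $\log(\eps/\mu_i)\to\infty$. This settles $\max_i p_{i,t}\to_p0$ and completes the induction. I would first check whether a union bound with the Chernoff form suffices for fixed $d$, since the growing logarithm in $\mu_i$ is what compensates for possibly small $d$.
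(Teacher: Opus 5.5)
Your proof is correct and follows the paper's skeleton: the induction from Theorem \ref{th_aggregation}, with double stochasticity collapsing $\sum_i\sum_j W_{ij}y_{j,t-1}$ to $X_{t-1}$ and the $N^{-\kappa}$ term removed by Markov's inequality. Replacing the paper's generating-function expansion with Le Cam's inequality is an equivalent device. It also makes the conditional-to-joint transfer a little cleaner.

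The real difference is in the key step $\max_i p_{i,t}\xrightarrow{p}0$. The paper proves it deterministically in one line: $\sum_j W_{ij}y_{j,t-1}\le X_{t-1}/d_i^{out}\le X_{t-1}/(d\log N)$, and $X_{t-1}=O_p(1)$ by the induction hypothesis.

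That first inequality uses exactly the equal-weight reading $W_{ij}\le 1/d_i^{out}$ that you said you would need, so your caveat is justified. Without it the statement can fail. Take the doubly stochastic matrix $W=(1-\delta)P+\delta N^{-1}\mathbf 1\mathbf 1^{\T}$ with $P$ a permutation matrix. Every out-degree equals $N$, yet a single success gives some unit a probability of at least $\gamma(1-\delta)$, and then the conditional law of $X_t$ is not Poisson.

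Under that reading, your conditional Chernoff bound followed by a union bound over the $N$ units does work. It is unnecessary, though: you already need $X_{t-1}=O_p(1)$ for Le Cam's bound $\sum_i p_{i,t}^2\le \max_i p_{i,t}\,\Lambda_t$. Given that, $\max_i\sum_j W_{ij}y_{j,t-1}\le (\max_{i,j}W_{ij})\,X_{t-1}\xrightarrow{p}0$ follows immediately.

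The trivial bound also shows the $\log N$ rate is superfluous: $\min_i d_i^{out}\to\infty$, or more generally $\max_{i,j}W_{ij}\to0$, is enough. Your route, by contrast, genuinely uses the $1/\log N$ bound on the largest weight to beat the factor $N$ in the union bound. Its only advantage is that it relies on $\max_j p_{j,t-1}\xrightarrow{p}0$ rather than on tightness of $X_{t-1}$, and tightness is already available here.
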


A row-normalized network matrix $W$ is doubly stochastic if, for example, for each $i$, $d_i^{in}=d_i^{out}=d$, where $d_i^{in}=\sum\limits_{j=1}^{N} \1(W_{ji}\neq0)$, $d_i^{out}=\sum\limits_{j=1}^{N} \1(W_{ij}\neq0)$ are in- and out-degrees of a vertex $i$. That means, each individual has $d$ incoming and $d$ outgoing links, i.e. the network is $d$-regular. The doubly stochastic structure, combined with homogeneity of $\gamma_i=\gamma$, ensures that the network weights wash out in the limit. Thus, the limiting intensity equation in Theorem \ref{th_aggregation_network} is identical to that of Theorem \ref{th_aggregation} with $\gamma_i=\gamma$.

The condition on the out-degree growing with $N$ ensures that terms $\sum_j W_{ij}y_{j,t-1}$ are of order $o_p(1)$, mimicking the scaling of the equal-weight average $\frac1{N}\sum\limits_{j=1}^{N}y_{j,t}$ and thereby ensuring the Poisson approximation applies.

\section{Estimation}\label{sec_estimation}
We begin by discussing estimation when success probabilities are bounded away from $0$ and $1$, the regime where traditional maximum likelihood methods apply, such as
\citet{moysiadis2014binary,davis2016theory,fokianos2017binary,aknouche2021count,lee2023modeling}. We then briefly discuss what can be done in rare-event scenarios.

Suppose that the success probability function $\g(\cdot)$ depends on an unknown parameter $\theta_0 \in \mathbb{R}^m$, such as the coefficients in the models discussed in Section \ref{sec_examples}. Suppose time periods $t=1,\ldots,T$ are observed. Then the parameter $\theta_0$ can be estimated by maximum likelihood. Since the innovations $u_{i,t}$ are independent and uniformly distributed on $[0,1]$, the likelihood and log-likelihood functions conditional on the initial values $\y_0,\ldots,\y_{-q+1},\p_0,\ldots,\p_{-s+1}$ take the form
\begin{equation*}\label{eq_likelihood}
\begin{split}
  \mathcal{L}_{T}(\theta)
  =&  \prod_{t=1}^{T}\prod_{i=1}^{N} (\tilde{p}_{i,t})^{y_{i,t}} (1-\tilde{p}_{i,t})^{1-y_{i,t}},\qquad
  \text{where }\tilde{\p}_t \text{ is recursively defined by}\\
  \tilde{\p}_t=&\begin{cases}
    \g\left(\{\tilde{\p}_{\tau}\}_{\tau=t-1,\ldots,t-s},\,\{\y_{\tau}\}_{\tau=t-1,\ldots,t-q};\theta\right), & t>0,\\
    \p_t, & t\leq0,
  \end{cases}
\end{split}\end{equation*}
\begin{equation*}\label{eq_loglikelihood}
\begin{split}
  \mathcal{Q}_T(\theta):=\log\mathcal{L}_{T}(\theta)&
  = \sum_{t=1}^{T}\sum_{i=1}^{N} \left[y_{i,t}\log \tilde{p}_{i,t}+ (1-y_{i,t})\log\left(1- \tilde{p}_{i,t}\right)\right].
\end{split}\end{equation*}
In empirical applications the first $q$ realizations of the binary process $\y_t$ can be used as initial values. The corresponding probabilities $\p_t$, however, are unobserved. A practical approach is, therefore, to initialize them using the sample means of $\y_t$. In this section we impose Lipschitz-type contraction of Assumptions \ref{ass_contraction1}, \ref{ass_contraction2} (in which the constants are assumed to be $\theta$-independent). As shown in the proof of Theorem \ref{th_consistency}, this guarantees that the effect of the probability initialization on the normalized log-likelihood vanishes uniformly over $\theta$.

\begin{assumption}\label{ass_g_theta}~
\begin{enumerate}
  \item \label{enu:g_theta:theta} $\theta_0\in\Theta$ for some compact set $\Theta\subset \mathbb{R}^m$;
  \item \label{enu:g_theta:contin} $\g\left(\{\p_{\tau}\}_{\tau=t-1,\ldots,t-s},\,\{\y_{\tau}\}_{\tau=t-1,\ldots,t-q};\theta\right)$ is continuous in $\theta\in\Theta$;
  \item \label{enu:g_theta:bounded} $\exists\eps>0$ such that for all $i=1,\ldots,N$, $\theta\in\Theta$, $g_i(\cdot;\theta)\in[\eps,1-\eps]$;
  \item \label{enu:g_theta:unique} Let $\y_t$, $\p_t$ be the stationary solution of Theorem \ref{th_stationary_unique}. If $$Prob\left(\g\left(\{\p_{\tau}\}_{\tau=t-1,\ldots,t-s},\,\{\y_{\tau}\}_{\tau=t-1,\ldots,t-q};\theta\right)
      =\g\left(\{\p_{\tau}\}_{\tau=t-1,\ldots,t-s},\,\{\y_{\tau}\}_{\tau=t-1,\ldots,t-q};\theta_0\right)\right)=1,$$
      then $\theta=\theta_0$.
\end{enumerate}
\end{assumption}
In all linear GAB examples, we do not lose any generality by imposing the compactness condition, because $\g$ maps to $[0,1]^N$. For nonlinear $\g$, compactness of the parameter space can be relaxed if, with positive probability, $\g\left(\{\p_{\tau}\}_{\tau=t-1,\ldots,t-s},\,\{\y_{\tau}\}_{\tau=t-1,\ldots,t-q};\theta\right)$ goes to $0$ or $1$ as $\theta \to \infty$. This ensures that the log-likelihood is never maximized at unbounded values of $\theta$.

Assumption \ref{ass_g_theta}\ref{enu:g_theta:bounded} is introduces to keep the log-likelihood and its derivatives uniformly bounded. It resembles the standard GARCH condition that the intercept be bounded away from zero (see, e.g., \citet[Example 7.1]{francq2019garch}). In our setup, when the functions $g_i$ are linear, this means that all intercepts are separated from zero and that the sums of all coefficients in $g_i$ are separated from one. Note that if, for all $i$, $g_i(0,\ldots,0;\theta_0)=0$, then $y_{i,t}=p_{i,t}=0$ is a degenerate stationary solution. Similarly, if, for all $i$, $g_i(1,\ldots,1;\theta_0)=1$, then $y_{i,t}=p_{i,t}=1$ is a degenerate stationary solution. In both cases, the log-likelihood becomes infinite. By imposing \ref{ass_g_theta}\ref{enu:g_theta:bounded}, we rule out such degenerate settings.

Assumption \ref{ass_g_theta}\ref{enu:g_theta:unique} requires the existence of a unique parameter value $\theta$ that is compatible with the data, i.e., identification of $\theta_0$.

The maximum likelihood estimator solves
$$\hat{\theta}^{MLE}=\arg\max\limits_{\theta\in\Theta} \mathcal{Q}_T(\theta).$$
\begin{theorem}\label{th_consistency}
Suppose that the multivariate binary time series model \eqref{eq_dgp_y}--\eqref{eq_dgp_p} satisfies Assumption \ref{ass_g_theta} and either Assumption \ref{ass_contraction1} or \ref{ass_contraction2}, in which the constants are assumed to be $\theta$-independent. Then $\hat{\theta}^{MLE}\xrightarrow[T\to\infty]{p}\theta_0$.
\end{theorem}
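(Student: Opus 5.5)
The plan is the standard Wald-type consistency argument: uniform convergence of the normalized log-likelihood to a deterministic limit that has a unique maximizer at $\theta_0$.

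\emph{Step 1: stationary surrogate.} By Theorem \ref{th_stationary_unique}, the data come from the unique stationary ergodic solution, possibly started from an arbitrary initial condition. For each $\theta$ I would build the stationary filtered probabilities $\p_t(\theta)$ by applying $\g(\cdot;\theta)$ to the infinite past of the observed $\y$. They exist as a.s. limits of the recursion started at time $-M$, $M\to\infty$: under Assumption \ref{ass_contraction1} or \ref{ass_contraction2}, with $\theta$-free constants, the map from past $\p$'s to current $\p$ (the $\y$'s held fixed) contracts in a norm adapted to $\Phi$, or in $\ell_1$. Hence $\p_t(\theta)$ is a measurable function of $(\y_{t-1},\y_{t-2},\ldots)$ and is stationary ergodic. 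The same contraction gives
\[
\sup_\theta\|\tilde\p_t(\theta)-\p_t(\theta)\|\le C\varrho^t
\]
for the filter $\tilde\p_t(\theta)$ initialized at arbitrary values. Assumption \ref{ass_g_theta}\ref{enu:g_theta:bounded} keeps every $\tilde p_{i,t}$ and $p_{i,t}(\theta)$ in $[\eps,1-\eps]$, where $\log$ is Lipschitz with constant $1/\eps$. Therefore
\[
\sup_\theta T^{-1}|\mathcal Q_T(\theta)-Q_T(\theta)|\le C'/T\to0,
\]
where $Q_T$ is the log-likelihood built from $\p_t(\theta)$.

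\emph{Step 2: uniform law of large numbers.} Set $\ell_t(\theta)=\sum_i[y_{i,t}\log p_{i,t}(\theta)+(1-y_{i,t})\log(1-p_{i,t}(\theta))]$. It is stationary ergodic, bounded by $N|\log\eps|$, and continuous in $\theta$. Continuity in $\theta$ follows because $\p_t(\theta)$ is a uniform (in $\theta$) limit of finitely iterated continuous maps, using Assumption \ref{ass_g_theta}\ref{enu:g_theta:contin}. Compactness of $\Theta$ from Assumption \ref{ass_g_theta}\ref{enu:g_theta:theta} then allows the ergodic theorem for Banach-valued ($C(\Theta)$) sequences, or the usual bracketing argument with $\mathbb E\sup_{\|\theta'-\theta\|<\delta}|\ell_t(\theta')-\ell_t(\theta)|\to0$. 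This yields $\sup_\theta|T^{-1}Q_T(\theta)-L(\theta)|\to0$ a.s., with $L(\theta)=\mathbb E\ell_t(\theta)$ continuous.

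\emph{Step 3: identification.} Conditioning on $\F_{t-1}$ gives $L(\theta_0)-L(\theta)=\mathbb E\sum_i \mathrm{KL}\big(B(p_{i,t}(\theta_0))\,\|\,B(p_{i,t}(\theta))\big)\ge0$. Equality holds only if $\p_t(\theta)=\p_t(\theta_0)$ a.s. Since $\p_t(\theta_0)=\p_t$, stationarity and the recursion then give $\g(\{\p_\tau\},\{\y_\tau\};\theta)=\g(\{\p_\tau\},\{\y_\tau\};\theta_0)$ a.s. Here I use that $\p_t(\theta)$ is generated by $\g(\cdot;\theta)$ applied to its own past, which equals the true past $\p$. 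By Assumption \ref{ass_g_theta}\ref{enu:g_theta:unique}, $\theta=\theta_0$.

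The standard argmax theorem then delivers consistency. The main obstacle is Step 1, namely constructing $\p_t(\theta)$ for $\theta\ne\theta_0$ with the contraction uniform in $\theta$. Unlike Theorem \ref{th_stationary_unique}, no coupling is needed here: the $\y$'s are fixed data and only the $\p$-arguments differ. Under Assumption \ref{ass_contraction1}, one uses $\rho(\Phi)<1$ through a weighted norm $\|\Phi^k\|\le C\varrho^k$. Under Assumption \ref{ass_contraction2}, one uses a block-max norm over $s$ lags with coefficient $sK<1$.
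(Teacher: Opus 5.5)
Your proposal is correct and follows essentially the paper's route. You build the stationary filter $\p_t(\theta)$ from the infinite past of $\y$ via the $\theta$-uniform contraction, bound the initialization effect by $O(1/T)$, prove a uniform LLN, and identify $\theta_0$ through the Kullback--Leibler inequality together with Assumption \ref{ass_g_theta}\ref{enu:g_theta:unique}. The differences are minor. For the uniform LLN you use the ergodic theorem in $C(\Theta)$, with ergodicity following from uniqueness of the invariant law; the paper instead truncates the filter at $M$ lags, applies its Markov-chain LLN (Corollary \ref{cor_lln}), and then lets $M\to\infty$. Your identification step also spells out the passage from $\p_t(\theta)=\p_t$ a.s. to $\g(\cdot;\theta)=\g(\cdot;\theta_0)$ at the true lags, which the paper leaves implicit.
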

Assumption \ref{ass_contraction1} or \ref{ass_contraction2} is imposed to ensure ergodicity and uniqueness of a stationary solution.

\begin{assumption}\label{ass_g_theta2}~
\begin{enumerate}
  \item \label{enu:g_theta2:theta_int} $\theta_0$ is an interior point of $\Theta\in \mathbb{R}^m$;
  \item \label{enu:g_theta2:differ} 
      There exists a compact neighborhood $\Theta_0\subset \Theta$ of $\theta_0$ such that, for every fixed $(\y_{t-1},\ldots,\y_{t-q})$, the function $\g$ is twice continuously differentiable jointly in $(\p_{t-1},\ldots,\p_{t-s},\theta)$ on  $[0,1]^{Ns}\times\Theta_0$;
  \item \label{enu:g_theta2:H}
 The matrix
  $$ H_0=\mathbb{E}\sum_{i=1}^N
      \frac{1}
           {g_{i,t,\theta_0}(1-g_{i,t,\theta_0})}\left(\frac{d}{d\theta}g_{i,t,\theta_0}\right)\left(\frac{d}{d\theta}g_{i,t,\theta_0}\right)^{\T}$$
  is nonsingular, where
  $$g_{i,t,\theta_0}:=g_i\left(\{\p_{\tau}\}_{\tau=t-1,\ldots,t-s},\,\{\y_{\tau}\}_{\tau=t-1,\ldots,t-q};\theta_0\right)$$
  and $\frac{d}{d\theta}g_{i,t,\theta_0}$ denotes the total derivative evaluated at $\theta_0$, accounting for the dependence of the lagged probabilities on $\theta$ through the recursion. More precisely, if
  $D_t\in\mathbb R^{N\times m}$ has $i$th row  $(d g_{i,t,\theta_0}/d\theta)^{\T}$, then $D_t$ is the unique bounded stationary solution of the recursion
  \begin{equation*}
      D_t  = G_{\theta,t} +\sum_{j=1}^s G_{p_j,t}D_{t-j},
  \end{equation*}
  where
  $$ G_{\theta,t}:= \left.\frac{\partial}{\partial\theta^{\T}} \g\left( \{\p_{\tau}\}_{\tau=t-1,\ldots,t-s},\{\y_{\tau}\}_{\tau=t-1,\ldots,t-q};\theta\right) \right|_{\theta=\theta_0},$$
  and
  $$ G_{p_j,t}:=\left. \frac{\partial}{\partial\p_{t-j}^{\T}} \g\left(\{\p_{\tau}\}_{\tau=t-1,\ldots,t-s},\{\y_{\tau}\}_{\tau=t-1,\ldots,t-q};\theta\right) \right|_{\theta=\theta_0}.$$
\end{enumerate}
\end{assumption}

\begin{theorem}\label{th_asy_normality}
Suppose that the multivariate binary time series model \eqref{eq_dgp_y}--\eqref{eq_dgp_p} satisfies Assumptions \ref{ass_g_theta}, \ref{ass_g_theta2}, and either \ref{ass_contraction1} or \ref{ass_contraction2}, in which the constants are assumed to be $\theta$-independent. Then $\sqrt{T}(\hat{\theta}^{MLE}-\theta_0)\xrightarrow[T\to\infty]{d}\mathcal{N}(0,H_0^{-1})$.
\end{theorem}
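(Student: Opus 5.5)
The plan is the classical Taylor-expansion argument for the MLE around $\theta_0$, using Theorem \ref{th_consistency} for consistency. By Theorem \ref{th_stationary_unique} and the geometric forgetting bound \eqref{eq_geometeric_mixing}, we can work with the stationary ergodic solution $(\y_t,\p_t)$. Write $p_{i,t}(\theta)$ for the stationary version of the filtered probabilities, obtained by running the recursion from the infinite past. Let $\mathcal{Q}^*_T(\theta)$ be the corresponding log-likelihood with $\tilde p_{i,t}(\theta)$ replaced by $p_{i,t}(\theta)$.

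\textbf{Step 1: derivatives of the filter.} On $\Theta_0$, Assumption \ref{ass_g_theta2}\ref{enu:g_theta2:differ} lets us differentiate the recursion. The first derivatives satisfy $D_t(\theta)=G_{\theta,t}(\theta)+\sum_j G_{p_j,t}(\theta)D_{t-j}(\theta)$, and the second derivatives satisfy an analogous linear recursion with the same homogeneous part. The Lipschitz contraction (Assumption \ref{ass_contraction1} or \ref{ass_contraction2}, uniform in $\theta$) bounds $|G_{p_j,t}|$ entrywise by the $\beta^{\tau}_{ij}$. Hence these recursions are stable, and their stationary solutions are bounded uniformly in $\theta\in\Theta_0$. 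The same stability shows that $\tilde p_{i,t}(\theta)-p_{i,t}(\theta)$ and its first two $\theta$-derivatives decay like $C\varrho^t$ uniformly over $\Theta_0$. Here we use that the $\y$'s are common to both recursions, so only the $\p$-arguments differ and no discontinuity enters. Since $g_i\in[\eps,1-\eps]$, the maps $\log$ and $1/(p(1-p))$ are smooth and bounded. Therefore $T^{-1/2}\sup_{\Theta_0}\|\nabla\mathcal{Q}_T-\nabla\mathcal{Q}^*_T\|\to0$ and $T^{-1}\sup_{\Theta_0}\|\nabla^2\mathcal{Q}_T-\nabla^2\mathcal{Q}^*_T\|\to0$.

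\textbf{Step 2: score CLT.} At $\theta_0$ we have $p_{i,t}(\theta_0)=p_{i,t}$, so
\[
\nabla\mathcal{Q}^*_T(\theta_0)=\sum_{t=1}^T\sum_{i=1}^N\frac{y_{i,t}-p_{i,t}}{p_{i,t}(1-p_{i,t})}\,\frac{d}{d\theta}g_{i,t,\theta_0}.
\]
This is a sum of bounded, stationary, ergodic martingale differences with respect to $\F_t$. Conditional independence of the $y_{i,t}$ across $i$ gives conditional variance $\sum_i (p_{i,t}(1-p_{i,t}))^{-1}(dg)(dg)^{\T}$, whose mean is $H_0$. The martingale CLT (Billingsley's, for stationary ergodic sequences) then gives $T^{-1/2}\nabla\mathcal{Q}^*_T(\theta_0)\xrightarrow{d}\mathcal{N}(0,H_0)$.

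\textbf{Step 3: Hessian.} The Hessian equals the negative outer-product term plus a term multiplying $(y_{i,t}-p_{i,t}(\theta))$ by second derivatives. Each summand is a continuous function of $\theta$ and is dominated by an integrable (in fact bounded) envelope. A uniform ergodic theorem on the compact $\Theta_0$ gives $T^{-1}\nabla^2\mathcal{Q}^*_T(\theta)\to\mathbb{E}[\cdot](\theta)$ uniformly. At $\theta_0$ the second term has zero mean by iterated expectations, so the limit is $-H_0$. Combining with consistency, any mean-value point $\bar\theta_T$ satisfies $T^{-1}\nabla^2\mathcal{Q}_T(\bar\theta_T)\to -H_0$ in probability.

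\textbf{Step 4: conclusion.} Since $\theta_0$ is interior and $\hat\theta$ is consistent, with probability tending to one we have $\nabla\mathcal{Q}_T(\hat\theta)=0$. Expanding gives $\sqrt T(\hat\theta-\theta_0)=-[T^{-1}\nabla^2\mathcal{Q}_T(\bar\theta_T)]^{-1}T^{-1/2}\nabla\mathcal{Q}_T(\theta_0)$. The result follows from nonsingularity of $H_0$ and Slutsky.

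The main obstacle is Step 1. The uniform, in $\theta$, stability of the derivative recursions has to be derived from a Lipschitz bound on $\g$, rather than assumed directly on $\partial\g/\partial\p$. One must check that the differentiable $\g$ has $|\partial g_i/\partial p_{j,t-\tau}|\le\beta^\tau_{ij}$ (respectively the aggregate $K$-bound), so that $\rho(\Phi)<1$ controls the products $\prod G_{p,t}$. I would do this first for the first-derivative recursion, via the companion-matrix argument used for Theorem \ref{th_stationary_unique}, and then reuse it for the second derivatives, whose forcing terms are bounded products of first derivatives.
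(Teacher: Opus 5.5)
Your proposal is correct and follows essentially the same route as the paper. You differentiate the filter recursion, use the $\theta$-uniform contraction to get bounded and forgetting derivatives, apply a martingale CLT to the score at $\theta_0$ and a uniform LLN to the Hessian, and conclude with a mean-value expansion.

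The differences are minor. You invoke ergodic theorems directly, where ergodicity follows from uniqueness of the stationary law in Theorem \ref{th_stationary_unique}; the paper instead truncates the filter $M$ periods back and applies Corollary \ref{cor_lln}. Also, under mere $C^2$ smoothness, forgetting of the second derivatives is only $o(1)$ rather than geometric, but that still suffices for your $T^{-1}$ Hessian comparison.
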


Note that the boundedness condition in Assumption \ref{ass_g_theta}\ref{enu:g_theta:bounded} rules out rare events, where $p_{i,t}$ is close to zero. Both consistency and asymptotic normality rely on this condition: it ensures the log-likelihood is well-defined and uniformly dominated, and keeps the score and Hessian bounded. As we illustrated in Section \ref{sec_aggregation}, when $N$ is large and $p_{i,t}$ is small, the aggregate count $\sum_{i=1}^N y_{i,t}$ is approximately Poisson with mean $\sum_{i=1}^N p_{i,t}$. Thus, a natural alternative for estimation in such settings is the Poisson autoregressive model, which replaces the binary observation equation with a Poisson likelihood.

In the framework of Theorem \ref{th_aggregation}, when $\beta<1$, $\gamma_i=\nu c_i$ for all $i$ (that is, when some form of homogeneity is present), the following intuition can be used to identify the Poisson log-likelihood in the MLE maximization in \eqref{eq_loglikelihood}.
First, in line with $\max_i p_{i,t}\xrightarrow[N\to\infty]{p}0$, we expect
$$\left|\sum_{i}y_{i,t}\log(1-g_{i,t,\theta})\right|\leq|\log(1-\max_i g_{i,t,\theta})|\sum_{i}y_{i,t}\xrightarrow[N\to\infty]{p}0.$$
Therefore,
\begin{equation*}\begin{split}
&\sum\limits_{i=1}^{N} y_{i,t}\log g_{i,t,\theta}+(1-y_{i,t})\log(1-g_{i,t,\theta})\\
&=\sum\limits_{i=1}^{N}\log(1-g_{i,t,\theta})+\sum\limits_{i=1}^{N}y_{i,t}\log g_{i,t,\theta}-\sum\limits_{i=1}^{N}y_{i,t}\log(1-g_{i,t,\theta})\\
&\approx -\sum\limits_{i=1}^{N} g_{i,t,\theta}+\left(\sum\limits_{i=1}^{N}y_{i,t}\right)\log\left(\sum\limits_{j=1}^{N}g_{j,t,\theta}\right)
+\sum\limits_{i=1}^{N}y_{i,t}\log\left(g_{i,t,\theta}/\sum\limits_{j=1}^{N}g_{j,t,\theta}\right),
\end{split}\end{equation*}
so that the first two terms operate on aggregate quantities and exactly match the Poisson log-likelihood. The last term is close to being independent of aggregate quantities and can therefore be maximized separately. To see this, applying iterative back-substitution yields
$$p_{i,t}=\frac{c_i}{N(1-\beta)}+\beta^t\left(p_{i,0}-\frac{c_i}{N(1-\beta)}\right)+\frac{a_i}{N^{\kappa}}\sum\limits_{\tau=0}^{t-1}\beta^{\tau}y_{i,t-1-\tau}
+\frac{\nu c_i}{N}\sum\limits_{\tau=0}^{t-1}\beta^{\tau}\sum\limits_{j=1}^Ny_{j,t-1-\tau},$$
\begin{equation*}\begin{split}
\sum_i p_{i,t}&=\frac{\sum_i c_i}{N(1-\beta)}+\beta^t\left(\sum_i p_{i,0}-\frac{\sum_i c_i}{N(1-\beta)}\right)\\
&+\sum_i\frac{a_i}{N^{\kappa}}\sum\limits_{\tau=0}^{t-1}\beta^{\tau}y_{i,t-1-\tau}
+\left(\frac{\nu\sum_i c_i}{N}\right)\sum\limits_{\tau=0}^{t-1}\beta^{\tau}\sum\limits_{j=1}^Ny_{j,t-1-\tau}.
\end{split}\end{equation*}
When $t$ is large, $\beta^t\approx 0$, and the second terms in both expressions above are negligible. Since $\kappa>0$, the third terms are also negligible, leaving
$$\frac{g_{i,t,\theta}}{\sum\limits_{j=1}^{N}g_{j,t,\theta}}
\approx\frac{\frac{c_i}{N}\left(\frac1{1-\beta}+\nu\sum\limits_{\tau=0}^{t-1}\beta^{\tau}\sum\limits_{j=1}^Ny_{j,t-1-\tau}\right)}
{\frac{\sum_j c_j}{N}\left(\frac1{1-\beta}+\nu\sum\limits_{\tau=0}^{t-1}\beta^{\tau}\sum\limits_{j=1}^Ny_{j,t-1-\tau}\right)}=\frac{c_i}{\sum_j c_j}.$$
That is, only the relative parameters $\tilde{c}_i:=c_i/\sum_j c_j$ remain. These lie on a simplex and are independent of the aggregate values, so the original maximization problem splits into two separate problems, the first of which corresponds to a Poisson likelihood.

\section{Covariates}\label{sec_covariates}

The model can be extended to include exogenous covariates $\z_t\in\mathbb{R}^k$, where the covariate process $\{\z_t\}$ is independent of the innovation process $\{\mathbf{u}_t\}$, so that
$$p_{i,t}=g_i\left(\{\p_{\tau}\}_{\tau=t-1,\ldots,t-s},\,\{\y_{\tau}\}_{\tau=t-1,\ldots,t-q},\,\z_t\right).$$
Provided $\z_t$ is strictly stationary and ergodic, Theorems \ref{th_stationary_sol} and \ref{th_stationary_unique} hold unchanged under Assumptions \ref{ass_contraction1}--\ref{ass_contraction2} as stated. This follows because in the coupling construction both trajectories share the same realization of $\z_t$, so it
cancels identically from the Lipschitz bounds \eqref{eq_g_contraction1} and \eqref{eq_g_contraction2}. Similarly, estimation Theorems \ref{th_consistency} and \ref{th_asy_normality} continue to hold.

The more interesting question is how to incorporate covariates in the aggregation framework. If $k$ is fixed, then $\z_t$ must enter each equation for $p_{i,t}$ with a coefficient vanishing in $N$. Otherwise, the finite order of $\z_t$ would prevent $p_{i,t}$ from being of order $1/N$. Paralleling the treatment of the intercept $\omega_i$, we therefore add a linear term $\frac{1}{N}\delta_i^{\T}\z_{t}$ to each $p_{i,t}$ equation in interactive and network GAB models. Letting $\bar{\delta}:=\lim_{N\to\infty}\frac1{N}\sum_{i=1}^{N}\delta_i$, this contributes an extra term $\bar{\delta}^{\T}\z_t$ to the right-hand side of the evolution equation for $\lambda_t$.

If $k$ grows with $N$, the covariates must themselves aggregate to a limiting quantity. We decompose $\z_t=(\z_t^{com},\z_t^{ind})$, where $\z_t^{com}\in\mathbb{R}^{r_c}$ is a common component of fixed dimension affecting all units, and $\z_t^{ind}=\left((\z_{1,t}^{ind})^{\T},\ldots,(\z_{N,t}^{ind})^{\T}\right)^{\T}\in\mathbb{R}^{Nr}$ stacks unit-specific components with $\z_{i,t}^{ind}\in\mathbb{R}^r$, so that $k=Nr+r_c$. The common component is handled exactly as in the fixed $k$ case. For the individual components, $p_{i,t}$ may depend on $\z_{i,t}^{ind}$ directly and on the cross-sectional average $\frac{1}{N}\sum_{j=1}^N\z_{j,t}^{ind}$ (or the weighted average $\sum_{j=1}^N W_{ij}\z_{j,t}^{ind}$ in the network GAB case, as in Theorem \ref{th_aggregation_network}). The direct effect of $\z_{i,t}^{ind}$ on $p_{i,t}$ must be scaled by $1/N$, for the same order-$1/N$ reason as above, and must appear with a homogeneous coefficient $\eta$, so that the aggregation yields $\eta\frac1{N}\sum_{i=1}^N\z_{i,t}^{ind}$. The cross-sectional average must carry an additional factor of $1/N$, since unlike $y_{i,t}$ and $p_{i,t}$, the scale of the exogenous $\z_{i,t}^{ind}$ is not reduced by aggregation; the corresponding term in $p_{i,t}$ is therefore $\frac{\zeta_i}{N}\frac{1}{N}\sum_{j=1}^N\z_{j,t}^{ind}$ (or nonlinear bounded\footnote{Boundedness guarantees that $p_{i,t}\in[0,1]$ and that $\mathbb{E}f_{z}\left(\frac{1}{N}\sum_{j=1}^N\z_{j,t}^{ind}\right)\xrightarrow[N\to\infty]{}\mathbb{E}f_{z}\left(\bar{\z_{t}}\right)$.} function $\frac{\zeta_i}{N}f_{z}\left(\frac{1}{N}\sum_{j=1}^N\z_{j,t}^{ind}\right)$, so that $f_z$ survives in the limit analogously to the $f_{\gamma}$ term in
Theorem \ref{th_aggregation_nonlin_big}). Thus, assuming that $\frac{1}{N}\sum_{i=1}^N\z_{i,t}^{ind}\xrightarrow[N\to\infty]{p}\bar{\z}_t$,  $\mathbb{E}\frac{1}{N}\sum_{i=1}^N\z_{i,t}^{ind}\xrightarrow[N\to\infty]{}\mathbb{E}\bar{\z}_t<\infty$, and $\max_i\|\z_{i,t}^{ind}\|/N\xrightarrow[N\to\infty]{p}0$ for each $t$, the aggregated covariate $\bar{\z}_t$ appears on the right-hand side of the limiting evolution equation for $\lambda_t$.

\section{Empirical illustration}\label{sec_illustr}



In this section, we apply the interactive GAB model to study periods of unusually low stock returns among the constituents of the S$\&$P 100. Such episodes tend to coincide with market stress and elevated financial risk. We use binary indicators of tail events to study how downside risks emerge, cluster over time, and propagate across assets, defining a tail event for stock $i$ as its return falling below the $5$th percentile of its empirical distribution.

Our sample consists of daily returns for $94$ S$\&$P 100 constituents available over the full period 01.01.2012--31.12.2025, yielding 3,517 time series observations. The $94$ stocks are those for which data are available throughout the entire span; we also retain only one of the two Google share classes. Further details on the sample composition are provided in Section \ref{append_data} of the Appendix.

We subtract the risk-free rate and regress returns on the Fama-French five factors, then work with the idiosyncratic residuals. The Fama-French returns, which are reported in percent, are converted to decimal units before the regression. The final three years of data are reserved for forecast evaluation, so estimation uses observations up to 01.01.2023, giving $T=$2,768. The factor regressions are estimated on this estimation sample only. The $5$th percentile threshold for each stock is also computed on this estimation sample, and $y_{i,t} = 1$ whenever stock $i$'s idiosyncratic return falls below its threshold, and $y_{i,t} = 0$ otherwise. Figure \ref{fig_SP100_raster} summarizes the resulting binary panel. As expected, tail events cluster around periods of market stress, including COVID and the Fed's monetary policy tightening. This clustering motivates the inclusion of the cross-sectional interaction term as in Eq.~\eqref{eq_multiple_k_diff}, which captures the market-wide stress and allows individual tail probabilities to adjust.

\begin{figure}[t]
  \centering
  \includegraphics[width=0.9\linewidth]{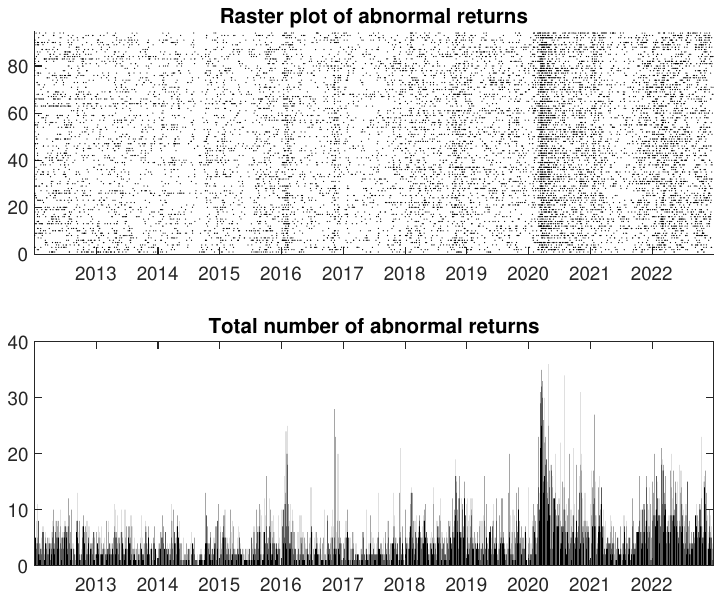}
\caption{Abnormally low (below $5\%$) returns in S$\&$P 100.}
\label{fig_SP100_raster}
\end{figure}

We first estimate the interactive GAB model \eqref{eq_multiple_k_diff}. The resulting estimates are shown in Figure \ref{fig_SP100_GAB_estim}. In line with the aggregation framework, both $\omega_i$ and $\alpha_i$ are relatively small, with $\operatorname{mean}(\omega_i)=0.004$ and $\operatorname{mean}(\alpha_i)=0.03$. The averages of the remaining coefficients are $\operatorname{mean}(\gamma_i)=0.18$ and $\operatorname{mean}(\beta_i)=0.71$.

\begin{figure}[t]
  \centering
  \includegraphics[width=0.6\linewidth]{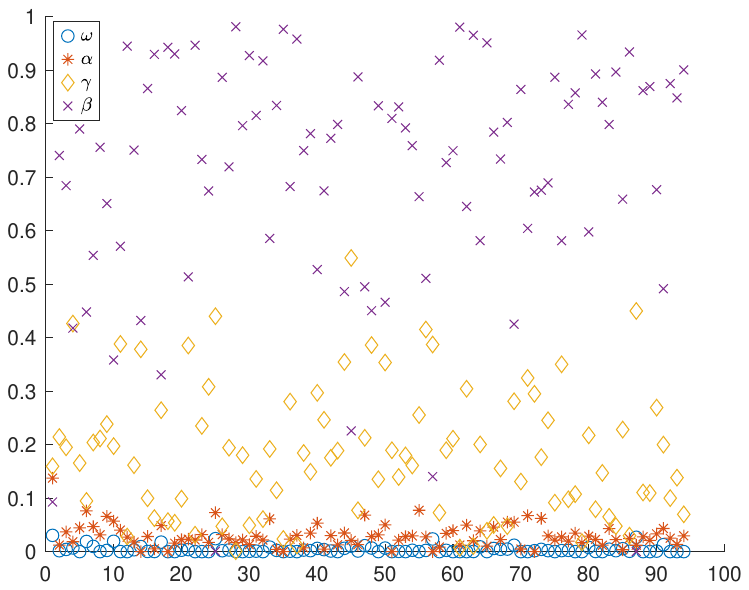}
\caption{Interactive GAB estimates for $94$ stocks.}
\label{fig_SP100_GAB_estim}
\end{figure}

For comparison with the aggregation results, we also re-estimate the model imposing $\alpha_i=0$, since this coefficient vanishes under aggregation, and compare the results with the Poisson MLE based on the aggregated data. The results are summarized in Table \ref{table_estim_res}. The GAB and Poisson estimates display a similar persistence pattern, although finite-$N$ differences remain, particularly for the intercept. When $\alpha_i=0$ is imposed, the intercept absorbs the mean contribution previously attributed to $\alpha_i$, leading to $\sum_i\omega_i$ larger than $\bar{c}$. The comparison is further illustrated by the filtered Poisson intensity $\lambda_t$ and the sum of GAB probabilities $\sum_i p_{i,t}$, shown in Figure \ref{fig_GAB_Poisson_insample}.


\begin{table}[t]
\centering
\begin{tabular}{lccccc}
\hline
 & $\sum_i\omega_i$ & $\operatorname{mean}(\omega_i)$ & $\operatorname{mean}(\alpha_i)$ & $\operatorname{mean}(\gamma_i)$ & $\operatorname{mean}(\beta_i)$  \\
\hline
Interactive GAB                    & 0.37 & 0.004 & 0.03 & 0.18 & 0.71  \\
Interactive GAB with $\alpha_i=0$ & 0.51 & 0.005 & -- & 0.24 & 0.65  \\
\hline
\hline
 & $\bar{c}$ &  & & $\bar{\gamma}$ & $\beta$  \\
\hline
Aggregated Poisson & 0.2 & -- & -- & 0.2 & 0.75  \\
\hline
\hline
\end{tabular}
\caption{Estimation Results}\label{table_estim_res}
\end{table}

\begin{figure}[t]
\centering
\begin{subfigure}[t]{0.48\linewidth}
    \centering
    \includegraphics[width=1.1\linewidth]{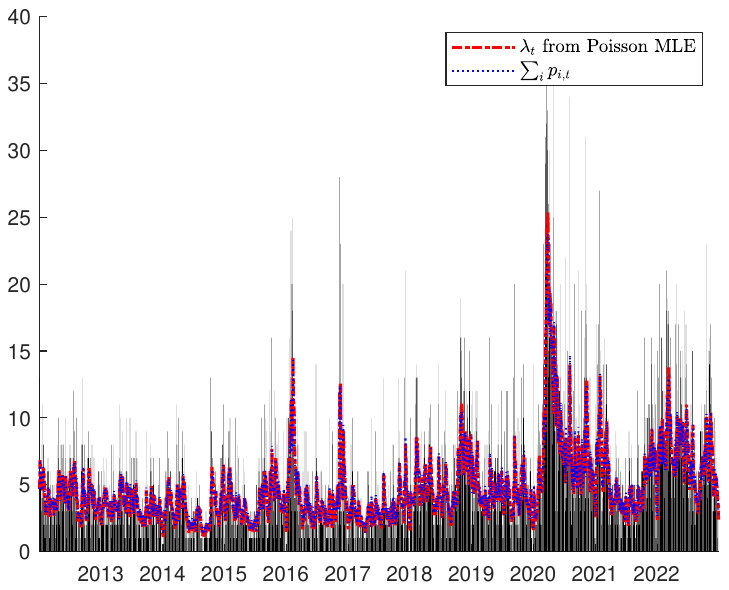}
    \caption{In-sample fit.}
    \label{fig_GAB_Poisson_insample}
\end{subfigure}
\hfill
\begin{subfigure}[t]{0.48\linewidth}
    \centering
    \includegraphics[width=1.1\linewidth]{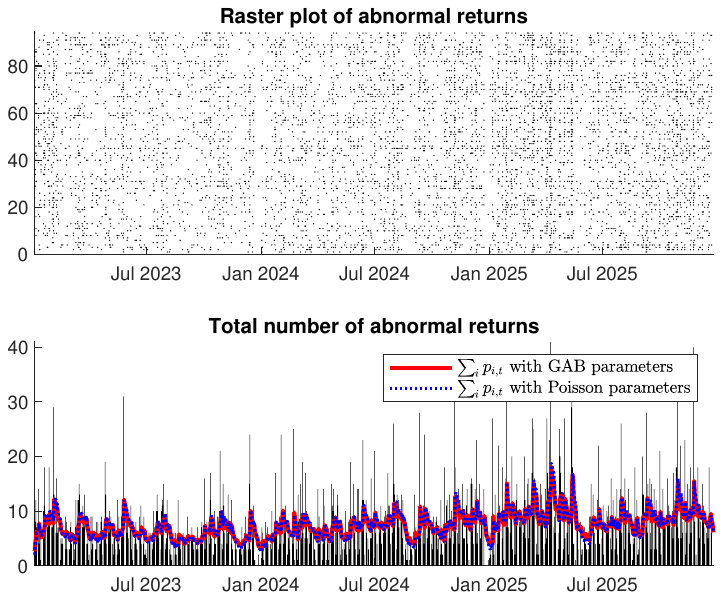}
    \caption{Out-of-sample fit.}
    \label{fig_GAB_Poisson_outofsample}
\end{subfigure}
\caption{Aggregate GAB probabilities $\sum_i p_{i,t}$ and Poisson intensity $\lambda_t$. The in-sample GAB path uses the restricted specification with $\alpha_i=0$ for comparison with the aggregation result, whereas the out-of-sample GAB path uses the unrestricted specification.}
\label{fig_GAB_Poisson}
\end{figure}

Finally, we compare the forecasting performance of the GAB model using the three-year sample that was excluded from the estimation stage. The out-of-sample binary data are constructed by comparing returns to the same $5$th percentile threshold used previously and are shown in Figure \ref{fig_GAB_Poisson_outofsample}. The out-of-sample period features a higher fraction of abnormally low returns than the in-sample $5\%$ benchmark ($8.7\%$ of the data points), indicating elevated uncertainty. The tail events nevertheless remain dispersed across stocks: even on the most concentrated out-of-sample day, $43.6\%$ of stocks are classified as tail events. We generate recursive 1-step-ahead forecasts of $y_{i,t}$ over the out-of-sample period ($T+1,\ldots,T+749$, three years of trading days) based on four models.

The first model is the unrestricted interactive GAB specification (heterogeneous parameters). The second is an interactive GAB model with homogeneous parameters calibrated from the Poisson maximum likelihood estimates, i.e., $\omega_i = \bar{c}/N$, $\alpha_i=0$, $\gamma_i = \bar{\gamma}$, and $\beta_i = \beta$. The third is a constant forecast $\hat{y}{i,t} = 0.05$, reflecting the unconditional $5\%$ probability of observing a success. The fourth is a persistence-based forecast $\hat{y}{i,t} = y_{i,t-1}$.

We report the mean squared errors of these forecasts in Table \ref{table_MSE_forecast}. The results show that the first model, the unrestricted interactive GAB with heterogeneous parameters, achieves the best forecasting performance. Although the GAB and Poisson estimation results in Table \ref{table_estim_res} are close, the second (homogeneous) model performs slightly worse. A Diebold--Mariano test based on the time series of daily cross-sectional mean squared losses, with a Newey--West long-run variance estimator using five lags, gives $DM=-2.08$ and $p=0.004$ for Model 1 versus Model 2, so the forecasting improvement is statistically significant. Both naive benchmarks are substantially outperformed.\footnote{This is consistent with the MSE results. Suppose that the observations are i.i.d.~Bernoulli. Then, for the fourth model, the squared forecast error is equal to $0$ when two consecutive realizations coincide and $1$ when they differ. The latter occurs with probability $2p(1-p) \approx 0.16$ for $p \approx 0.087$. For the third model, the squared forecast error takes values $0.05^2$ and $0.95^2$, and $MSE\approx 0.913\cdot0.05^2+0.087\cdot0.95^2\approx0.08$. For the first two models, $MSE\approx0.079$ is close to the optimal value, which corresponds to the forecast $\hat{y}_{i,t}\approx0.087$ yielding $MSE=p(1-p)^2+(1-p)p^2\approx0.079$.}

Figure \ref{fig_GAB_Poisson_outofsample} also presents the paths of $\sum_i p_{i,t}$ for the first two models. For the second model, this path coincides with the trajectory of the Poisson intensity $\lambda_t$. The superior forecasting performance of Models 1 and 2 relative to the naive benchmarks highlights the importance of the interaction term $\frac{1}{N}\sum_i y_{i,t}$ in capturing overall market conditions, a term absent from Models 3 and 4.

\begin{table}[t]
\centering
\begin{tabular}{cccc}
\hline
Model 1 & Model 2 & Model 3 & Model 4  \\
\hline
0.0789 & 0.0791 & 0.0805 & 0.154\\
\hline
\hline
\end{tabular}
\caption{Out-of-sample MSE (averaged across stocks and the out-of-sample period). Model 1: unrestricted interactive (heterogeneous) GAB. Model 2: Poisson (homogeneous GAB). Model 3: $5\%$ probability of success. Model 4: tomorrow is the same as today.}\label{table_MSE_forecast}
\end{table}

\section{Conclusion}\label{sec_conclusion}

This paper develops a general class of multivariate binary autoregressive models, in which each observation is a Bernoulli variable whose success probability evolves
according to a GARCH-type recursion driven by past outcomes and past probabilities. The framework accommodates nonlinear dynamics, cross-sectional interactions, and network-weighted dependence structures.

Three sets of results are established. The bounded support of binary outcomes and probabilities guarantees, without any parameter restrictions, the existence of a strictly stationary solution, a feature with no counterpart in continuous-valued GARCH models, while uniqueness and ergodicity follow from a coupling argument that exploits a Lipschitz contraction condition on the probability update function. The most distinctive results concern aggregation: under a rare-events scaling in which individual success probabilities are of order $1/N$, the aggregate number of successes converges to a Poisson autoregression, providing a rigorous micro-foundation for this commonly used model. The network extension shows further that, under sufficient connectivity, heterogeneous interaction weights wash out asymptotically, leaving a limiting intensity equation that does not depend on the network topology. Finally, in the moderate-probability regime, the MLE is consistent and asymptotically normal with asymptotic variance equal to the inverse Fisher information; in the rare-events regime, the Poisson likelihood applied to the aggregate count is the natural substitute, and the aggregation theory supplies the mapping from its parameters back to those of the underlying binary system.

Several directions remain open. On the theoretical side, a unified asymptotic theory in which both $T$ and $N$ grow simultaneously would constitute a valuable generalization. The extension of asymptotic normality for the MLE to settings where $N$ grows with $T$ is a natural complement. At the intersection of the two regimes, a composite likelihood that
exploits both the individual binary structure and the aggregate Poisson approximation may yield efficiency gains relative to either likelihood alone. On the empirical side, the binary panel framework is well suited to modeling correlated binary decisions in large administrative datasets: voting records, loan defaults, and insurance claims. The network GAB model in particular provides a tractable vehicle for studying the role of connectivity in propagating financial or economic shocks.

\appendix

\section{Proofs: Existence and Uniqueness}

\begin{proof}[Proof of Theorem \ref{th_stationary_sol}]
We use \citet[Theorem 12.0.1 (Markov–Krylov–Bogolyubov theorem)]{meyn2009markov}, which guarantees the existence of a stationary solution. Since the Markov chain $(\p_{t-1},\ldots,\p_{t-s},\y_{t-1},\ldots,\y_{t-q})$ is defined on a compact space $X=[0,1]^{Ns}\times\{0,1\}^{Nq}$, the ``bounded in probability on average'' condition is satisfied. It only remains to check that the chain is weak Feller.

Let $C\left(X\right)$ be the class of bounded continuous functions from $X$ to $\mathbb{R}$.  To show that the chain is weak Feller we need to check that its transition probability kernel maps $C\left(X\right)$ to $C\left(X\right)$.

Take any bounded and continuous $f:X\to\mathbb{R}$. The transition kernel maps
\begin{equation*}\begin{split}
  &(Pf)(\p_{t-1},\ldots,\p_{t-s},\y_{t-1},\ldots,\y_{t-q})\\
  &= \sum\limits_{\y\in\{0,1\}^N} \prod_{i=1}^N p_i^{y_i}(1-p_i)^{1-y_i}
  f\left(\p,\p_{t-1},\ldots,\p_{t-s+1},  \y,\y_{t-1},\ldots,\y_{t-q+1}\right) ,\\
  &\p=(p_1,\ldots,p_N)=\g\left(\{\p_{\tau}\}_{\tau=t-1,\ldots,t-s},\,\{\y_{\tau}\}_{\tau=t-1,\ldots,t-q}\right),\quad \y=(y_1,\ldots,y_N).
\end{split}\end{equation*}
Since both $f$ and $\g$ are continuous, their composition in $(Pf)$ is also continuous. Function $\g$ is bounded since its image is $[0,1]^N$, and so the composition $(Pf)$ is also bounded. Thus, we are left with a weak Feller chain on a compact set, which by \citet[Theorem 12.0.1 (Markov–Krylov–Bogolyubov theorem)]{meyn2009markov}, has an invariant probability measure (i.e., stationary solution).
\end{proof}

\begin{proof}[Proof of Theorem \ref{th_stationary_unique}]
Let $(\y_t,\p_t)_{t\in\mathbb Z}$ be a stationary solution given by Theorem \ref{th_stationary_sol} and let $\F_t$ be the corresponding $\sigma$-algebras. Extending the $\sigma$-algebras, if necessary, we assume that they support i.i.d.\ uniform random variables $u_{i,t}$ from \eqref{eq_dgp_y}.

Let $(\y'_t,\p'_t)_{t\ge t_0}$ be (possibly another) solution of \eqref{eq_dgp_y}--\eqref{eq_dgp_p} constructed by the same sequence $u_{i,t}$, but potentially different starting values
$\p':=\{\p'_{t_0+s-1},\ldots,\p'_{t_0}\},\,\y':=\{\y'_{t_0+q-1},\ldots,\y'_{t_0}\}$; we need another extension of $\sigma$-algebras to support these starting values. Observe that
\begin{equation}\label{eq_Ey_Ep}
\begin{split}
\mathbb{E}_{t-1}|y_{i,t}-y'_{i,t}|&=\mathbb{E}_{t-1}\left|\1(u_{i,t}\leq p_{i,t})-\1(u_{i,t}\leq p'_{i,t})\right|\\
&=\mathbb{E}_{t-1}\1(p'_{i,t}<u_{i,t}\leq p_{i,t} \text{ or }p_{i,t}<u_{i,t}\leq p'_{i,t})\\
&=\mathbb{E}_{t-1}\1(\min(p_{i,t},p'_{i,t})<u_{i,t}\leq\max(p_{i,t},p'_{i,t}))\\
&=\max(p_{i,t},p'_{i,t})-\min(p_{i,t},p'_{i,t})=|p_{i,t}-p'_{i,t}|.
\end{split}\end{equation}
Thus, assuming Assumption \ref{ass_contraction1} holds,
\begin{equation}\label{eq_AR_E}
\begin{split}
\mathbb{E}|p_{i,t}-p'_{i,t}|&=
\mathbb{E}\left|g_i\left(\{\p_{\tau}\}_{\tau=t-1,\ldots,t-s},\,\{\y_{\tau}\}_{\tau=t-1,\ldots,t-q}\right)
-g_i\left(\{\p'_{\tau}\}_{\tau=t-1,\ldots,t-s},\,\{\y'_{\tau}\}_{\tau=t-1,\ldots,t-q}\right)\right|\\
&\leq\mathbb{E}\left(\sum\limits_{j=1}^{N}\sum\limits_{\tau=1}^{s} |p_{j,t-\tau}-p_{j,t-\tau}'|\beta_{ij}^{\tau}
 +\sum\limits_{j=1}^{N}\sum\limits_{\tau=1}^{q} |y_{j,t-\tau}-y_{j,t-\tau}'|\alpha_{ij}^{\tau}\right)\\
 &=\sum\limits_{j=1}^{N}\sum\limits_{\tau=1}^{s} \mathbb{E}|p_{j,t-\tau}-p_{j,t-\tau}'|\beta_{ij}^{\tau}
 +\sum\limits_{j=1}^{N}\sum\limits_{\tau=1}^{q} \mathbb{E}|p_{j,t-\tau}-p_{j,t-\tau}'|\alpha_{ij}^{\tau}\\
 &=\sum\limits_{j=1}^{N}\sum\limits_{\tau=1}^{\max(s,q)} \mathbb{E}|p_{j,t-\tau}-p_{j,t-\tau}'|\left(\alpha_{ij}^{\tau}+\beta_{ij}^{\tau}\right),
\end{split}\end{equation}
where we treat $\alpha_{ij}^{\tau}=0$ for $\tau>q$ and $\beta_{ij}^{\tau}=0$ for $\tau>s$.

Consider an $N\max(s,q)$-dimensional vector
\begin{equation*}\begin{split}
&\eta_t=\left(\mathbb{E}|p_{1,t}-p'_{1,t}|,\ldots,\mathbb{E}|p_{N,t}-p'_{N,t}|,\ldots,\mathbb{E}|p_{N,t-\max(s,q)+1}-p'_{N,t-\max(s,q)+1}|\right)^{\T}.
\end{split}\end{equation*}
Then Eq.~\eqref{eq_AR_E} implies
\begin{equation}\label{eq_stat_geom_bound}
\eta_t\leq\Phi\eta_{t-1}\leq\Phi^{t-t_0-\max(s,q)}\eta_{t_0+\max(s,q)}.
\end{equation}
By Assumption \ref{ass_contraction1}, all eigenvalues of $\Phi$ lie inside the unit circle. Thus, $\Phi^t\to0$ as $t\to\infty$, so that $\mathbb{E}|p_{i,t}-p'_{i,t}|\to0$ geometrically fast for all $i$ as $t\to\infty$. Thus, we must have $\p_t-\p'_t\xrightarrow[t\to\infty]{p}0$.



Similarly, if instead we assume that Assumption \ref{ass_contraction2} holds,
\begin{equation}\label{eq_AR_Esum}
\begin{split}
\mathbb{E}\sum\limits_{i=1}^{N}|p_{i,t}-p'_{i,t}|&=
\mathbb{E}\sum\limits_{i=1}^{N}\left|g_i\left(\{\p_{\tau}\}_{\tau=t-1,\ldots,t-s},\,\{\y_{\tau}\}_{\tau=t-1,\ldots,t-q}\right)
-g_i\left(\{\p'_{\tau}\}_{\tau=t-1,\ldots,t-s},\,\{\y'_{\tau}\}_{\tau=t-1,\ldots,t-q}\right)\right|\\
&\leq K\mathbb{E}\left(\sum\limits_{i=1}^{N}\sum\limits_{\tau=1}^{s} |p_{i,t-\tau}-p_{i,t-\tau}'|
 +\sum\limits_{i=1}^{N}\sum\limits_{\tau=1}^{q} |y_{i,t-\tau}-y_{i,t-\tau}'|\right)\\
 &=K\sum\limits_{\tau=1}^{s}\mathbb{E}\sum\limits_{i=1}^{N}|p_{i,t-\tau}-p'_{i,t-\tau}|+K\sum\limits_{\tau=1}^{q}\mathbb{E}\sum\limits_{i=1}^{N}|p_{i,t-\tau}-p'_{i,t-\tau}|.
\end{split}\end{equation}

Consider a $\max(s,q)$-dimensional vector
$$
\tilde\eta_t=\left(\mathbb{E}\sum\limits_{i=1}^{N}|p_{i,t}-p'_{i,t}|,\ldots,\mathbb{E}\sum\limits_{i=1}^{N}|p_{i,t-\max(s,q)+1}-p'_{i,t-\max(s,q)+1}|\right)
$$
and a $\max(s,q)\times \max(s,q)$ matrix $\tilde\Phi$, where
$$\tilde\Phi_{1i}=2K,\,i=1,\ldots,\min(s,q),\qquad \tilde\Phi_{1i}=K,\,i=\min(s,q)+1,\ldots,\max(s,q),$$
the bottom-left $(\max(s,q)-1)\times (\max(s,q)-1)$ block of $\tilde\Phi$ is an identity matrix, and the bottom-right $(\max(s,q)-1)\times 1$ block is zero. Then the $L_1$ norm of the first row of $\tilde\Phi$ equals $K(s+q)<1$ and, again, all eigenvalues of $\tilde\Phi$ lie inside the unit circle. Thus,
$$\tilde\eta_t\leq\tilde\Phi\tilde\eta_{t-1}\leq\tilde\Phi^{t-t_0-\max(s,q)}\tilde\eta_{t_0+\max(s,q)}\to0\quad\text{ as }t\to\infty$$
and again $\mathbb{E}|p_{i,t}-p'_{i,t}|\to0$ geometrically fast and $\p_t-\p'_t\xrightarrow[t\to\infty]{p}0$.

Geometric decay of $\mathbb{E}|p_{i,t}-p'_{i,t}|$ and boundedness of $p_{i,t}, p'_{i,t}$ imply \eqref{eq_geometeric_mixing}. The convergence in probability $\p_t-\p'_t\xrightarrow[t\to\infty]{p}0$ implies that the L\'{e}vy-Prokhorov distance between the distributions of $(\y_{t-1},\dots, \y_{t-q},\, \p_{t-1},\dots, \p_{t-s})$ and $(\y'_{t-1},\dots, \y'_{t-q},\, \p'_{t-1},\dots, \p'_{t-s})$ goes to $0$ as $t\to \infty$.
Hence, if $(\y'_t,\p'_t)$ is another stationary solution, then, by  $t$--independence of the distribution, we should have the distributional match $(\y_{t-1},\dots, \y_{t-q},\, \p_{t-1},\dots, \p_{t-s})\stackrel{d}=(\y'_{t-1},\dots, \y'_{t-q},\, \p'_{t-1},\dots, \p'_{t-s})$ for all $t$. Using the Markov chain representation from the proof of Theorem \ref{th_stationary_sol},
this implies that the distributions of any two stationary solutions coincide.\
\end{proof}

The following corollary will be useful for estimation theorems.
\begin{corollary}\label{cor_lln}
Suppose that $\g(\cdot)$ satisfies either Assumption \ref{ass_contraction1} or \ref{ass_contraction2}, and let $\{(\y_t,\p_t)\}_{t\in\mathbb Z}$ denote
the unique strictly stationary solution of \eqref{eq_dgp_y}--\eqref{eq_dgp_p}. Define the augmented Markov state
\[
    Z_t
    :=
    (\p_t,\ldots,\p_{t-s+1},
     \y_t,\ldots,\y_{t-q+1})
\]
and let $\pi$ denote its stationary distribution. Then, for every
continuous function $h$ on the state space of $Z_t$,
\begin{equation}\label{eq:lln_prob}
    \frac{1}{T}\sum_{t=1}^T h(Z_t)\xrightarrow[T\to\infty]{p}\mathbb{E}_\pi[h(Z_0)] .
\end{equation}
\end{corollary}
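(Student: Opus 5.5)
The plan is to prove \eqref{eq:lln_prob} by a variance bound: show that $\mathbb{E}\bigl(\frac1T\sum_{t=1}^T h(Z_t)-\mathbb{E}_\pi h(Z_0)\bigr)^2\to0$ and then apply Chebyshev. Since $Z_t$ is stationary with law $\pi$, the mean of the average is exactly $\mathbb{E}_\pi h(Z_0)$. It therefore suffices to show that the autocovariances $\mathrm{Cov}(h(Z_0),h(Z_k))$ tend to $0$ as $k\to\infty$, because a Cesàro average of vanishing autocovariances controls the variance of the sample mean. Moreover, $h$ is continuous on a compact state space, so it is bounded. By Stone--Weierstrass (the discrete $\y$-coordinates are handled by finitely many indicators), we may uniformly approximate $h$ by Lipschitz functions. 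It is therefore enough to treat $h$ Lipschitz in the $\p$-coordinates, with an $\varepsilon$-error in the final bound.

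The covariance decay will come from the coupling of the proof of Theorem \ref{th_stationary_unique}. Condition on $\F_0$, which fixes $Z_0$. Run two copies of the dynamics from time $1$ onward, driven by the same innovations $u_{i,t}$, $t\ge1$:
\begin{itemize}
\item the first copy is the stationary solution itself, started at $Z_0$;
\item the second copy $Z'_t$ is started from an independent draw $Z'_0\sim\pi$, independent of $\F_0$ and of the $u$'s.
\end{itemize}
Then $Z'_k$ is independent of $Z_0$ and has law $\pi$, so
\[
\mathrm{Cov}(h(Z_0),h(Z_k))=\mathbb{E}\bigl[h(Z_0)\bigl(h(Z_k)-h(Z'_k)\bigr)\bigr].
\]
This is bounded by $\|h\|_\infty\,\mathbb{E}|h(Z_k)-h(Z'_k)|$. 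For $h$ Lipschitz in $\p$ and bounded, this last quantity is controlled by two terms:
\begin{itemize}
\item $L\sum\mathbb{E}|p_{i,t}-p'_{i,t}|$ over the $s$ lags in $Z_k$;
\item $2\|h\|_\infty\sum\mathbb{E}|y_{i,t}-y'_{i,t}|$ over the $q$ lags.
\end{itemize}
By \eqref{eq_Ey_Ep} and the geometric bound \eqref{eq_geometeric_mixing}, which holds for arbitrary (here random, independent of the innovations) starting values because the bounds in \eqref{eq_stat_geom_bound} only use $\eta_{t_0}\le1$, both terms are at most $C'\varrho^{k}$. Hence the autocovariances decay geometrically.

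Putting it together gives
\[
\mathrm{Var}\Bigl(\frac1T\sum_{t=1}^T h(Z_t)\Bigr)\le\frac{1}{T}\sum_{|k|<T}|\mathrm{Cov}(h(Z_0),h(Z_k))|=O(1/T).
\]
Combined with the uniform approximation step, this yields \eqref{eq:lln_prob}.

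I expect the main obstacle to be the legitimacy of the coupling. One must check that $(Z_t)$ restarted at time $0$ and driven by the fresh $u_{i,t}$, $t\ge1$, reproduces the stationary path. This uses that $u_{i,t}$ is independent of $\F_{t-1}$, as in the extension of $\sigma$-algebras in the proof of Theorem \ref{th_stationary_unique}. One must also check that $Z'_0$ can be made independent of everything else on an enlarged space. A secondary point is that the discontinuity in $\y$ is harmless here: it is absorbed exactly by the identity $\mathbb{E}_{t-1}|y_{i,t}-y'_{i,t}|=|p_{i,t}-p'_{i,t}|$, so no continuity of $h$ in $\y$ beyond boundedness is needed.
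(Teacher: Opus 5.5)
Your proof is correct. It has the same skeleton as the paper's: stationarity gives the right mean, autocovariances $\gamma_k\to0$, a Ces\`aro bound on the variance, and Chebyshev. The difference is in how you get $\gamma_k\to0$.

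The paper does not reopen the coupling. It uses the conclusion of Theorem \ref{th_stationary_unique} as a black box: $P^n(z,\cdot)\to\pi$ weakly from every deterministic start $z$, hence $P^nh(z)\to\pi(h)$ pointwise for bounded continuous $h$. It then writes $\gamma_n=\int h\,P^nh\,d\pi-\pi(h)^2$ and concludes by dominated convergence. This needs only qualitative ergodicity and plain continuity of $h$. It avoids both the Lipschitz approximation and the auxiliary independent copy $Z'_0$ on an enlarged space, but it gives no rate.

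You instead return to the same-innovation coupling of \eqref{eq_Ey_Ep}--\eqref{eq_stat_geom_bound}, starting the second chain from an independent draw of $\pi$ so that $\mathbb{E}[h(Z_0)h(Z'_k)]=\pi(h)^2$ exactly. For Lipschitz $h$ this gives $|\gamma_k|\le C\varrho^k$, hence $\mathrm{Var}\bigl(T^{-1}\sum_{t} h(Z_t)\bigr)=O(1/T)$. That quantitative bound is stronger than what the paper's argument delivers. The costs are the approximation step and checking that $u_{\cdot,t}$ stays independent of $\F_{t-1}\vee\sigma(Z'_0)$, so that \eqref{eq_Ey_Ep} applies to the pair of chains.

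A minor point: the recursion for $\eta_t$ only holds once all lagged $\y$'s in the Lipschitz bound come from the shared innovations. The first $\max(s,q)$ steps are therefore handled by boundedness, not by $\eta_{t_0}\le1$ alone, but this only changes the constant.

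For general continuous $h$ you can also skip Stone--Weierstrass. Convergence $Z_k-Z'_k\to0$ in probability, together with uniform continuity of $h$ on the compact state space, already gives $\mathbb{E}|h(Z_k)-h(Z'_k)|\to0$, which is all the Ces\`aro step needs.
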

\begin{proof}
Let $P$ denote the transition kernel of the Markov chain $\{Z_t\}$. By Theorem \ref{th_stationary_unique}, for every initial state $z$,
\begin{equation}\label{eq:Pnweak}
    P^n(z,\cdot)\xrightarrow[n\to\infty]{}\pi.
\end{equation}
Since the state space is compact, every continuous function $h$ is bounded. Hence Eq.~\eqref{eq:Pnweak} implies
\[
    P^n h(z)
    :=
    \int h(z')P^n(z,dz')
    \longrightarrow
    \int h(z')\pi(dz')
    =:\pi(h)
\]
for every $z$.

Since the process is stationary,
\[
\begin{split}
    Cov_\pi(h(Z_0),h(Z_n))
    &=\int h(z)P^nh(z)\,\pi(dz)-\pi(h)^2 .
\end{split}
\]
Moreover,
\[
    |h(z)P^nh(z)|\leq \|h\|_\infty^2 .
\]
Therefore, by dominated convergence,
\begin{equation}\label{eq:cov_decay}
    Cov_\pi(h(Z_0),h(Z_n))\to0.
\end{equation}

Let
\[
    \gamma_n:=Cov_\pi(h(Z_0),h(Z_n)).
\]
By stationarity,
\[
    \mathbb{V}\left(
        \frac{1}{T}\sum_{t=1}^T h(Z_t)
    \right)
    =
    \frac{\gamma_0}{T}
    +\frac{2}{T^2}
      \sum_{n=1}^{T-1}(T-n)\gamma_n
    =
    \frac{\gamma_0}{T}
    +\frac{2}{T}
      \sum_{n=1}^{T-1}
      \left(1-\frac{n}{T}\right)\gamma_n\to0.
\]
Since, by stationarity,
\[
    \mathbb{E}\left[
        \frac{1}{T}\sum_{t=1}^T h(Z_t)
    \right]
    =\pi(h),
\]
we conclude that
\[
    \frac{1}{T}\sum_{t=1}^T h(Z_t)
    \xrightarrow[T\to\infty]{p} \pi(h).\qedhere
\]
\end{proof}

\section{Proofs: Aggregation}\label{append_aggreg}

\begin{lemma}\label{th_Poisson_randon}
Consider a sequence of random probabilities $q_i, i=1,\ldots,N$, such that
\begin{equation}\label{eq_Poiss_cond_lemma}
  \sum_{i=1}^N q_{i}\xrightarrow[N\to\infty]{d}\xi,\qquad \max\limits_{i=1,\ldots,N} q_{i}\xrightarrow[N\to\infty]{p}0,
\end{equation}
where $\xi$ is some random variable. Given $q_i,i=1,\ldots,N$, consider a sequence of independent Bernoulli random variables ${y_i\thicksim\, B(q_i)}$. As $N\to\infty$, $\left[\sum_{i=1}^N q_i,\,\sum_{i=1}^N y_i\right]$ converges in distribution to $[\xi,Y]$, where the random variable $Y$ is conditionally Poisson distributed with random intensity $\xi$.
\end{lemma}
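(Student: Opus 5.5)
We prove the lemma by computing the joint characteristic function of $\bigl(\sum_i q_i,\sum_i y_i\bigr)$ and showing it converges to that of $[\xi,Y]$. By L\'evy's continuity theorem this suffices. Write $S_N:=\sum_{i=1}^N q_i$ and $X_N:=\sum_{i=1}^N y_i$. Fix $(s,u)\in\mathbb R^2$ and condition on $q_1,\ldots,q_N$. Conditional independence of the $y_i$ gives
\begin{equation*}
\mathbb{E}\left[e^{\mathrm{i}sS_N+\mathrm{i}uX_N}\right]=\mathbb{E}\left[e^{\mathrm{i}sS_N}\prod_{i=1}^N\left(1+q_i(e^{\mathrm{i}u}-1)\right)\right].
\end{equation*}
The target is
\begin{equation*}
\mathbb{E}\left[e^{\mathrm{i}s\xi}\exp\left(\xi(e^{\mathrm{i}u}-1)\right)\right],
\end{equation*}
which is the joint characteristic function of $[\xi,Y]$ when $Y\mid\xi\sim Pois(\xi)$. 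Note that $\xi\ge0$ a.s., as a distributional limit of nonnegative variables, so this expression is well defined.

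The key step is to replace the product by $\exp\bigl(S_N(e^{\mathrm{i}u}-1)\bigr)$ with an error that vanishes in probability. Set $z=e^{\mathrm{i}u}-1$, so $|z|\le2$. For complex numbers $a_i,b_i$ with modulus at most $1$, we have $|\prod a_i-\prod b_i|\le\sum|a_i-b_i|$. Take $a_i=1+q_iz$ and $b_i=e^{q_iz}$. Here $|a_i|=|1-q_i+q_ie^{\mathrm{i}u}|\le1$ by convexity, and $|b_i|=e^{q_i(\cos u-1)}\le1$. Since $|e^{w}-1-w|\le |w|^2$ for $|w|\le1$, whenever $\max_i q_i\le 1/2$ we get
\begin{equation*}
\left|\prod_{i=1}^N(1+q_iz)-e^{S_Nz}\right|\le |z|^2\sum_{i=1}^N q_i^2\le 4\max_i q_i\cdot S_N.
\end{equation*}
Because $\max_i q_i\to0$ in probability and $S_N$ is tight (it converges in distribution), the right-hand side tends to $0$ in probability. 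On the event $\{\max_i q_i>1/2\}$, whose probability vanishes, the difference is still bounded by $2$. The whole difference is also bounded by $2$, so bounded convergence shows that the expectations of the two expressions differ by $o(1)$.

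It remains to show $\mathbb{E}\bigl[e^{\mathrm{i}sS_N}e^{S_Nz}\bigr]\to\mathbb{E}\bigl[e^{\mathrm{i}s\xi}e^{\xi z}\bigr]$. The map $x\mapsto e^{\mathrm{i}sx+xz}$ is continuous, and on $x\ge0$ it is bounded by $1$ in modulus, since $\operatorname{Re}z\le0$. Since $S_N\ge0$ and $S_N\to\xi$ in distribution, this convergence follows from the definition of weak convergence, applied to the real and imaginary parts after extending the map to a bounded continuous function on $\mathbb R$. Combining the two steps gives convergence of the joint characteristic functions, and L\'evy's theorem yields $[S_N,X_N]\xrightarrow{d}[\xi,Y]$.

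The main obstacle is the product-to-exponential step. It needs $|a_i|,|b_i|\le1$, so that the telescoping bound holds without extra constants, and it needs tightness of $S_N$ to control $\sum_i q_i^2$ by $\max_i q_i\cdot S_N$. The uniform boundedness by $2$ is what lets convergence in probability upgrade to convergence of expectations. No moment conditions on $\xi$ are needed, which matches the statement; the condition $\sum_i\mathbb{E}p_{i,0}\to\mathbb{E}\lambda_0$ in Theorem \ref{th_aggregation} is therefore used elsewhere, not in this lemma.
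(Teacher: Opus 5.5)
Your proof is correct, and its core is the same as the paper's. Both compare the conditional transform $\prod_{i=1}^N(1+q_i(z-1))$ of $\sum_i y_i$ with $\exp\bigl((z-1)\sum_i q_i\bigr)$, and both control the discrepancy through $\sum_i q_i^2\le\max_i q_i\sum_i q_i\to0$ in probability.

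The execution differs:
\begin{itemize}
\item The paper works with the conditional probability generating function. It takes logarithms, Taylor-expands $\log(1+q_i(z-1))$ with an $o(q_i^2)$ remainder, and concludes that $G_N(z)$ converges in distribution to $\exp((z-1)\xi)$. It then asserts joint convergence with $\sum_i q_i\to\xi$ without further argument.
\item You use the joint characteristic function of $\bigl(\sum_i q_i,\sum_i y_i\bigr)$. You replace the log expansion by the telescoping bound $|\prod a_i-\prod b_i|\le\sum|a_i-b_i|$ for numbers in the unit disc, which avoids logarithms and any question about uniformity of the remainder. You then use the uniform bound $2$ to pass from convergence in probability of the error to convergence of expectations.
\end{itemize}

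This makes explicit two steps the paper leaves implicit. The first is the passage from an in-distribution statement about the random generating function to convergence of quantities like $\mathbb{E}\bigl[f(\sum_i q_i)\,z^{\sum_i y_i}\bigr]$. The second is the joint convergence itself, which you obtain from L\'evy's continuity theorem rather than asserting it. The paper's version is shorter but less complete on exactly these points.

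Your closing remark is also right. The moment condition on the initial probabilities is not needed for this lemma; it is used in the induction of Theorem~\ref{th_aggregation}, for example in the Markov bound \eqref{eq_Markov_bound}.
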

\begin{proof}
Letting $S_{N}:=\sum\limits_{i=1}^{N}y_{i}$, the conditional (on the realization $\{q_i\}_i$) probability generating function of $S_N$ satisfies
\begin{equation}\label{eq_G_N(z)}
  G_{N}(z):=\mathbb{E} (z^{S_{N}}\mid q_1,\ldots,q_N)=\prod\limits_{i=1}^N\left(q_{i}z+(1-q_{i})\right)
  =\prod\limits_{i=1}^N\left(1+q_{i}(z-1)\right)
\end{equation}
so that
\begin{equation*}\begin{split}
  \log(G_{N}(z))
  &=\sum\limits_{i=1}^N \log\left(1+q_{i}(z-1)\right)
  =\sum\limits_{i=1}^N\left(q_{i}(z-1)-\frac1{2}q_{i}^2(z-1)^2+o(q_{i}^2)\right)\\
  &=(z-1)\sum\limits_{i=1}^N q_{i}-\frac1{2}(z-1)^2\sum\limits_{i=1}^N q_{i}^2+o\left(\sum\limits_{i=1}^N q_{i}^2\right).
\end{split}\end{equation*}
Since $\sum\limits_{i=1}^N q_{i}^2\leq\left(\max_i q_i\right)\sum\limits_{i=1}^N q_{i}\xrightarrow[N\to\infty]{p}0$, as $N\to\infty$, $\log(G_{N}(z))\xrightarrow[N\to\infty]{d} (z-1)\xi$, so that $G_{N}(z)$ converges in distribution to the generating function of the Poisson distribution, $\exp((z-1)\xi)$. Thus, $S_{N}\xrightarrow[N\to\infty]{d} Y$ with $Y \sim Pois(\xi)$ jointly with $\sum_{i=1}^N q_{i}\xrightarrow[N\to\infty]{d}\xi$.
\end{proof}

\begin{proof}[Proof of Theorem \ref{th_aggregation}]
We show by induction that, for all $T=0,1,\ldots$,
$\bigl[\sum_{i=1}^{N}p_{i,t},\,\sum_{i=1}^{N}y_{i,t}\bigr]_{t=0,\ldots,T}$ converges both in distribution and in expectation to a Poisson autoregression $\bigl[\lambda_t,X_t\bigr]_{t=0,\ldots,T}$ and, additionally, $\max_i p_{i,T}\xrightarrow[N\to\infty]{p}0$.

At time $t=0$ (base of induction), the process satisfies conditions of Lemma \ref{th_Poisson_randon}, thus,
\begin{equation}\label{eq_agg_t0}
\left[\sum\limits_{i=1}^N p_{i,0},\,\sum\limits_{i=1}^N y_{i,0}\right]\xrightarrow[N\to\infty]{d} [\lambda_0,X_0],~{\rm with}~X_0
\sim Pois(\lambda_0).
\end{equation}
By \eqref{eq_aggregation_pi0}, $\max_i p_{i,0}\xrightarrow[N\to\infty]{p}0$ and $\mathbb{E}\sum\limits_{i=1}^N p_{i,0}\xrightarrow[N\to\infty]{}\mathbb{E}\lambda_0$. Applying conditional expectation,
\begin{equation*}
\begin{split}
\mathbb{E}\sum\limits_{i=1}^N y_{i,0}&=\mathbb{E}\sum\limits_{i=1}^N p_{i,0}\xrightarrow[N\to\infty]{}\mathbb{E}\lambda_0=\mathbb{E}\bigl[\mathbb{E}(X_0\mid\lambda_0)\bigr]=\mathbb{E}X_0.
\end{split}
\end{equation*}

Let us show that $T=1$ follows from $T=0$. The argument for the general induction step is the same.

Let us add $\sum\limits_{i=1}^N p_{i,1}$ to the joint convergence. By Markov inequality for any $\eps>0$,
\begin{equation}\label{eq_Markov_bound}
\rm{Prob}\left(\frac1{N^{\kappa}}\sum\limits_{j=1}^{N} y_{j,0}>\eps\right)\leq\frac1{\eps}\mathbb{E}\frac1{N^{\kappa}}\sum\limits_{j=1}^{N} y_{j,0}
=\frac1{N^{\kappa}\eps}\sum\limits_{j=1}^{N} \mathbb{E}p_{j,0}\xrightarrow[N\to\infty]{}0.
\end{equation}
Thus, summing Eq.~\eqref{eq_multiple_k_diff} with respect to $i$ for $t=1$,
$$
\sum\limits_{i=1}^N p_{i,1}
=\frac{1}{N}\sum\limits_{i=1}^N c_{i}+\frac{1}{N^{\kappa}}\sum\limits_{i=1}^N a_{i}y_{i,0}
+\beta\sum\limits_{i=1}^N p_{i,0}+\frac1{N}\sum\limits_{j=1}^{N}y_{j,0} \sum\limits_{i=1}^N \gamma_{i}
\xrightarrow[N\to\infty]{d} \bar{c}+\beta\lambda_0+\bar{\gamma}X_0=\lambda_1,
$$
and we get
$$\left[\sum\limits_{i=1}^N p_{i,0},\,\sum\limits_{i=1}^N y_{i,0},\,\sum\limits_{i=1}^N p_{i,1},\right]\xrightarrow[N\to\infty]{d} [\lambda_0,X_0,\lambda_1].$$

Let us show that the conditions \eqref{eq_Poiss_cond_lemma} are valid for $q_i=p_{i,1}$. The first one is already proved. To show the second one, note that by Eq.~\eqref{eq_multiple_k_diff},
$$
p_{i,1}=\frac{c_i}{N}+\frac{a_i}{N^{\kappa}}y_{i,0}+\beta p_{i,0}+\gamma_i\frac1{N}\sum\limits_{j=1}^{N} y_{j,0}.
$$
Thus, since by assumptions, $\max_i p_{i,0}\to0$ and $0\leq\gamma_i\leq1$, and, by \eqref{eq_Markov_bound} with $\kappa=1$, $\frac1{N}\sum\limits_{j=1}^{N} y_{j,0}\xrightarrow[N\to\infty]{p}0$,
$$0\leq\max_i p_{i,1}\leq\frac{C}{N}+\frac{C}{N^{\kappa}}+\beta\max_i p_{i,0}+\frac1{N}\sum\limits_{j=1}^{N} y_{j,0}\xrightarrow[N\to\infty]{}0.$$
Direct application of Lemma \ref{th_Poisson_randon} gives
\begin{equation}\label{eq_agg_t1}
\left[\sum\limits_{i=1}^N p_{i,1},\,\sum\limits_{i=1}^N y_{i,1}\right]\xrightarrow[N\to\infty]{d} [\lambda_1,X_1],~{\rm with}~X_1\sim
Pois(\lambda_1).
\end{equation}
We need to combine the above with distributional convergence \eqref{eq_agg_t0}, which requires an extension of Lemma \ref{th_Poisson_randon}. By adding $\{p_{i,0},y_{i,0}\}_{i=1}^{N}$ to the conditioning in \eqref{eq_G_N(z)}, and repeating the same proof, the convergence in \eqref{eq_agg_t1} extends to conditional convergence. Hence, we arrive at
$$
\left[\sum\limits_{i=1}^N p_{i,0},\,\sum\limits_{i=1}^N y_{i,0},\,\sum\limits_{i=1}^N p_{i,1},\,\sum\limits_{i=1}^N y_{i,1}\right]\xrightarrow[N\to\infty]{d} [\lambda_0,X_0,\lambda_1,X_1].
$$
It remains to prove convergence in expectations. For $t=0$ we have already shown it. For $t=1$,
\begin{equation*}\begin{split}
\sum\limits_{i=1}^N \mathbb{E}p_{i,1}&
=\frac{1}{N}\sum\limits_{i=1}^N c_{i}+\frac{1}{N^{\kappa}}\sum\limits_{i=1}^N a_{i}\mathbb{E}y_{i,0}
+\beta\sum\limits_{i=1}^N \mathbb{E}p_{i,0}+\frac1{N}\sum\limits_{j=1}^{N}\mathbb{E}y_{j,0} \sum\limits_{i=1}^N \gamma_{i}\\
&=\frac{1}{N}\sum\limits_{i=1}^N c_{i}+\frac{1}{N^{\kappa}}\sum\limits_{i=1}^N a_{i}\mathbb{E}p_{i,0}
+\beta\sum\limits_{i=1}^N \mathbb{E}p_{i,0}+\frac1{N}\sum\limits_{j=1}^{N}\mathbb{E}y_{j,0} \sum\limits_{i=1}^N \gamma_{i}\\
&\xrightarrow[N\to\infty]{}\bar{c}+\beta\mathbb{E}\lambda_0+\bar{\gamma}\mathbb{E}X_0=\mathbb{E}\lambda_1,\\
\sum\limits_{i=1}^N \mathbb{E}y_{i,1}&
=\sum\limits_{i=1}^N \mathbb{E}p_{i,1}\xrightarrow[N\to\infty]{}\mathbb{E}\lambda_1=\mathbb{E}\bigl[\mathbb{E}(X_1\mid\lambda_1)\bigr]=\mathbb{E}X_1.
\end{split}\end{equation*}
Repeating the same steps for general $T$, we arrive at the desired joint convergence.
\end{proof}

\begin{proof}[Proof of Proposition \ref{rem_init_cond_aggreg}]
Consider a stationary distribution and let us denote it as $\p^{\ast}=(p^{\ast}_1,\ldots,p^{\ast}_N)^{\T}$ with corresponding realizations $\y^{\ast}=(y^{\ast}_1,\ldots,y^{\ast}_N)^{\T}$. Let us start with the mean of $\sum\limits_{i=1}^{N}p^{\ast}_{i,t}$. Plugging $\beta_i=\beta$, $\omega_i=\tfrac{c_i}{N}$, $\alpha_i=\tfrac{a_i}{N^{\kappa}}$ in Eq.~\eqref{eq_multiple_k_diff_mean_sum}, we get
$$\sum\limits_{i=1}^{N}\mu_i=\frac{\frac1{N}\sum\limits_{i=1}^{N}\frac{c_i}{1-\beta-a_i/N^{\kappa}}}{1-\frac1{N}\sum\limits_{i=1}^{N}\frac{\gamma_i}{1-\beta-a_i/N^{\kappa}}}.$$
Taylor expanding, we get
$$\gamma_i(1-\beta-a_i/N^{\kappa})^{-1}\approx\gamma_i(1-\beta)^{-1}\left(1+\frac{a_i}{(1-\beta)N^{\kappa}}\right),$$
$$c_i(1-\beta-a_i/N^{\kappa})^{-1}\approx c_i(1-\beta)^{-1}\left(1+\frac{a_i}{(1-\beta)N^{\kappa}}\right)$$
and
\begin{equation}\label{eq_sum_mu_i}
\sum\limits_{i=1}^{N}\mu_i=\frac{\frac1{N}\sum\limits_{i=1}^{N}c_i}{1-\beta-\frac1{N}\sum\limits_{i=1}^{N}\gamma_i}+O(N^{-\kappa})
\xrightarrow[N\to\infty]{}\frac{\bar{c}}{1-\beta-\bar{\gamma}}.
\end{equation}
Similarly, Eq.~\eqref{eq_multiple_k_diff_mean} implies, uniformly in $i$,
\begin{equation}\label{eq_mu_i_order}
\mu_i
=\frac{c_i}{N(1-\beta-a_iN^{-\kappa})}+\frac{\gamma_i}{N(1-\beta-a_iN^{-\kappa})}\sum_{j=1}^N \mu_{j}=O(N^{-1}).
\end{equation}
We begin by checking that $\max\limits_{i=1,\ldots,N} p^{\ast}_{i,t}\xrightarrow[N\to\infty]{p}0$. First, note that
\begin{equation*}\begin{split}
  0\leq\sum_{i} p_{i,t}^{\ast}&=\frac1{N}\sum_i c_i+\frac1{N^{\kappa}}\sum_i a_i y_{i,t-1}^{\ast}+\beta\sum_{i} p_{i,t-1}^{\ast}+\frac1{N}\sum_i\gamma_i\sum_i y_{i,t-1}^{\ast}\\
  &\leq C+\beta\sum_{i} p_{i,t-1}^{\ast}+\left(\frac1{N}\sum_i\gamma_i+\frac{C}{N^{\kappa}}\right)\sum_i y_{i,t-1}^{\ast}.
\end{split}\end{equation*}
Thus, by Minkowski's inequality (or Cauchy–Schwarz inequality), letting $\tilde{\gamma}_N:=\frac1{N}\sum_i\gamma_i+\frac{C}{N^{\kappa}}$,
\begin{equation}\label{eq_sum_i_p_i_sq}
\begin{split}
  \sqrt{\mathbb{E}\left(\sum_{i} p_{i,t}^*\right)^2}
  &\leq C+\sqrt{\mathbb{E}\left(\beta\sum_{i} p_{i,t-1}^{\ast}+\tilde{\gamma}_N\sum_i y_{i,t-1}^{\ast}\right)^2}\\
  &=C+\sqrt{\mathbb{E}\left(\left(\beta+\tilde{\gamma}_N\right)\sum_{i} p_{i,t-1}^{\ast}+\tilde{\gamma}_N\sum_i \left(y_{i,t-1}^{\ast}-p_{i,t-1}^{\ast}\right)\right)^2}\\
  &=C+\sqrt{\left(\beta+\tilde{\gamma}_N\right)^2\mathbb{E}\left(\sum_{i}p_{i,t-1}^{\ast}\right)^2
  +\tilde{\gamma}_N^2\mathbb{E}\left(\sum_{i} p_{i,t-1}^{\ast}-\sum_{i} (p_{i,t-1}^{\ast})^2\right)}\\
  &\leq C+\left(\beta+\tilde{\gamma}_N\right)\sqrt{\mathbb{E}\left(\sum_{i} p_{i,t}^*\right)^2}+\tilde{\gamma}_N\sqrt{\sum_i\mu_i}
\end{split}\end{equation}
where the third line follows from $\mathbb{E}_{t-2}(y_{i,t-1}^{\ast}-p_{i,t-1}^{\ast})=\mathbb{E}_{t-2}(y_{i,t-1}^{\ast}-p_{i,t-1}^{\ast})(y_{j,t-1}^{\ast}-p_{j,t-1}^{\ast})=0$ for any $i\neq j$, and the last line uses stationarity. Since $\beta+\gamma_i<\bar{C}<1$ for all i, $1-\beta-\tilde{\gamma}_N>1-\bar{C}-\frac{C}{N^{\kappa}}>0$ for all $N$ large enough. Thus, Eq.~\eqref{eq_sum_i_p_i_sq} combined with the bound \eqref{eq_sum_mu_i}, imply that
\begin{equation}\label{eq_p_i_tight}
  \sup_{N}\mathbb{E}\left(\sum_{i} p_{i,t}^*\right)^2<\infty.
\end{equation}
Second, again using $\mathbb{E}_{t-1}(y_{i,t}^{\ast}-p_{i,t}^{\ast})=\mathbb{E}_{t-1}(y_{i,t}^{\ast}-p_{i,t}^{\ast})(y_{j,t}^{\ast}-p_{j,t}^{\ast})=0$,
\begin{equation}\label{eq_sum_i_y_i_sq}
\begin{split}
\mathbb{E}\left(\sum_{i} y_{i,t}^*\right)^2&=\mathbb{E}\left(\sum_{i} (y_{i,t}^*-p_{i,t}^*)+\sum_{i} p_{i,t}^*\right)^2
=\mathbb{E}\left(\sum_{i} (y_{i,t}^*-p_{i,t}^*)\right)^2+\mathbb{E}\left(\sum_{i} p_{i,t}^*\right)^2\\
&=\mathbb{E}\sum_{i}\left(p_{i,t}^*-(p_{i,t}^*)^2\right)+\mathbb{E}\left(\sum_{i} p_{i,t}^*\right)^2\leq\sum_i\mu_i+\mathbb{E}\left(\sum_{i} p_{i,t}^*\right)^2=O(1).
\end{split}\end{equation}
Third, again using stationarity and Minkowski's inequality for the dynamic of $p_{i,t}^*$,
\begin{equation*}
\sqrt{\mathbb{E}\left(p_{i,t}^*\right)^2}
\leq\frac{C}{N}+\frac{C}{N^{\kappa}}\sqrt{\mu_i}+\beta\sqrt{\mathbb{E}\left(p_{i,t}^*\right)^2}
+\frac{1}{N}\sqrt{\mathbb{E}\left(\sum_{i} y_{i,t}^*\right)^2}.
\end{equation*}
Combining the above bound with Eq.~\eqref{eq_mu_i_order}, \eqref{eq_sum_i_y_i_sq}, we get, uniformly in $i$,
\begin{equation*}
(1-\beta)\sqrt{\mathbb{E}\left(p_{i,t}^*\right)^2}
\leq \frac{C}{N}+\frac{C}{N^{\kappa}}\sqrt{\mu_i}+\frac{1}{N}\sqrt{\mathbb{E}\left(\sum_{i} y_{i,t}^*\right)^2}=O(N^{-1}+N^{-\kappa-1/2})
\end{equation*}
and
\begin{equation*}
\mathbb{E}\left(p_{i,t}^*\right)^2=O(N^{-2}+N^{-2\kappa-1}),\qquad \sum_{i=1}^{N}\mathbb{E}\left(p_{i,t}^*\right)^2=O(N^{-1}+N^{-\kappa}).
\end{equation*}
Thus, for any $\eps>0$,
\begin{equation*}\begin{split}
Prob\left(\max\limits_{i=1,\ldots,N} p^{\ast}_{i,t}>\eps\right)&\leq\sum\limits_{i=1}^{N}Prob(p^{\ast}_{i,t}>\eps)=\sum\limits_{i=1}^{N}Prob\left((p^{\ast}_{i,t})^2>\eps^2\right)\\
&\leq\frac1{\eps^2} \sum\limits_{i=1}^{N}\mathbb{E}(p^{\ast}_{i,t})^2=\frac1{\eps^2} O(N^{-1}+N^{-\kappa})\xrightarrow[N\to\infty]{}0.
\end{split}\end{equation*}

It remains to check convergence of $\sum_i p^{\ast}_{i,t}$. Consider its probability generating function, where $\bar{\gamma}_N:=\frac1{N}\sum_i\gamma_i$,
\begin{equation}\label{eq_G_N(z)_sum_pi}
\begin{split}
  G_{N}(z)&:=\mathbb{E} z^{\sum_i p^{\ast}_{i,t}}=
  \mathbb{E} z^{\frac1{N}\sum_i c_i+\sum_i \left(\frac{a_i}{N^{\kappa}}+\bar{\gamma}_N\right) y_{i,t-1}^{\ast}+\beta\sum_{i} p_{i,t-1}^{\ast}}\\
  &=\mathbb{E}\left\{ z^{\frac1{N}\sum_i c_i+\beta\sum_{i} p_{i,t-1}^{\ast}}
  \prod_{i=1}^{N}\left[1-p_{i,t-1}^{\ast}\left(1-z^{\frac{a_i}{N^{\kappa}}+\bar{\gamma}_N}\right)\right]\right\}.
\end{split}\end{equation}
The product in \eqref{eq_G_N(z)_sum_pi} can be analyzed by means of
\begin{equation*}\begin{split}
&\log\prod_{i=1}^{N}\left[1-p_{i,t-1}^{\ast}\left(1-z^{\frac{a_i}{N^{\kappa}}+\bar{\gamma}_N}\right)\right]=
\sum\limits_i\log\left[1-p_{i,t-1}^{\ast}\left(1-z^{\frac{a_i}{N^{\kappa}}+\bar{\gamma}_N}\right)\right]\\
&=-\sum\limits_ip_{i,t-1}^{\ast}\left(1-z^{\frac{a_i}{N^{\kappa}}+\bar{\gamma}_N}\right)
+o\left(\sum\limits_ip_{i,t-1}^{\ast}\left(1-z^{\frac{a_i}{N^{\kappa}}+\bar{\gamma}_N}\right)\right)\\
&=-\sum\limits_ip_{i,t-1}^{\ast}\left(1-z^{\bar{\gamma}_N}\right)
+o\left(\sum\limits_ip_{i,t-1}^{\ast}\right)+o\left(\sum\limits_ip_{i,t-1}^{\ast}\left(1-z^{\frac{a_i}{N^{\kappa}}+\bar{\gamma}_N}\right)\right).
\end{split}\end{equation*}
For each fixed $z\in(0,1]$,
$$0\leq \sum\limits_ip_{i,t-1}^{\ast}\left(1-z^{\frac{a_i}{N^{\kappa}}+\bar{\gamma}_N}\right)\leq\tilde{C}\sum\limits_ip_{i,t-1}^{\ast}=O_p(1),$$
where the boundedness in probability follows from \eqref{eq_p_i_tight} by Markov inequality. Therefore, for each fixed $z\in(0,1]$ we can rewrite \eqref{eq_G_N(z)_sum_pi} as
\begin{equation}\label{eq_G_N_new}
\begin{split}
G_{N}(z)&=\mathbb{E} z^{\frac{\sum_i c_i}{N}+\beta\sum_{i} p_{i,t-1}^{\ast}}
\exp\left[-\sum\limits_ip_{i,t-1}^{\ast}\left(1-z^{\bar{\gamma}_N}\right)+o(1)\right]\\
&=z^{\frac{\sum_i c_i}{N}}\mathbb{E}\left[z^{\beta}\exp\{z^{\bar{\gamma}_N}-1\}\right]^{\sum\limits_ip_{i,t-1}^{\ast}}(1+o(1))
=z^{\frac{\sum_i c_i}{N}}G_{N}\left(z^{\beta}\exp\{z^{\bar{\gamma}_N}-1\}\right)+o(1).
\end{split}\end{equation}
Uniform integrability \eqref{eq_p_i_tight} implies tightness and, thus, that $\sum_i p^{\ast}_{i,t}$ has a convergent subsequence. Denote the limit of this subsequence by $\lambda_0$ and let $G_z=\mathbb{E}z^{\lambda_0}$. For this subsequence we must have, for each $z\in(0,1]$, $G_{N}(z)\xrightarrow[N\to\infty]{}G(z)$. Moreover, $z^{\frac{\sum_i c_i}{N}}\xrightarrow[N\to\infty]{}z^{\bar{c}}$ and $z^{\beta}\exp\{z^{\bar{\gamma}_N}-1\}\xrightarrow[N\to\infty]{}z^{\beta}\exp\{z^{\bar{\gamma}}-1\}$. Thus, taking the limit in \eqref{eq_G_N_new} and using the continuity and monotonicity of generating functions, we get for any fixed $z\in(0,1)$
\begin{equation}\label{eq_G_N_limit}
  G(z)=z^{\bar{c}}G\left(z^{\beta}\exp\{z^{\bar{\gamma}}-1\}\right).
\end{equation}
We claim that there is a unique solution to \eqref{eq_G_N_limit}. Fix $z_0\in(0,1)$ and let $z_{n+1}:=z_n^{\beta}\exp\{z_n^{\bar{\gamma}}-1\}$ for $n=0,1,\ldots$. Then $z_n\in(0,1)$ and, using the bound $1-\exp{-z}\leq z$ for $z\geq0$,
\begin{equation}\label{eq_log z_n_bound}
\begin{split}
0\leq-\log z_{n+1}&=-\beta\log z_n+1-z_n^{\bar{\gamma}}=-\beta\log z_n+1-\exp\{\bar{\gamma}\log z_n\}\\
&\leq-(\beta+\bar{\gamma})\log z_n\leq-(\beta+\bar{\gamma})^n\log z_0\xrightarrow[n\to\infty]{}0.
\end{split}\end{equation}
Thus, $z_{n+1}\xrightarrow[n\to\infty]{}1$, so that $G(z_n)\xrightarrow[n\to\infty]{}G(1)\equiv1$. Iterating \eqref{eq_G_N_limit}, we get
\begin{equation*}\begin{split}
G(z_0)=z_0^{\bar{c}}G\left(z_1\right)=z_0^{\bar{c}}z_1^{\bar{c}}G\left(z_2\right)
=\prod\limits_{i=0}^{\infty} z_i^{\bar{c}}=\exp\left\{\bar{c}\sum\limits_{i=0}^{\infty} \log z_i\right\},
\end{split}\end{equation*}
and the sum $\sum\limits_{i=0}^{\infty} \log z_i$ is finite, since by \eqref{eq_log z_n_bound} it is bounded by the sum of a geometric sequence. Thus, the limiting probability generating function is unique. Probability generating function uniquely determines random variable. Therefore, the limit $\lambda_0$ is unique, and $\sum_{i=1}^{N} p^{\ast}_{i,t}\xrightarrow[N\to\infty]{d}\lambda_0$. Combined with uniform integrability \eqref{eq_p_i_tight}, the distributional convergence also implies $\mathbb{E}\sum_{i=1}^{N} p^{\ast}_{i,t}\xrightarrow[N\to\infty]{}\mathbb{E}\lambda_0$.
\end{proof}

\begin{proof}[Proof of Theorem \ref{th_aggregation_nonlin_big}]
We show by induction that, for all $T\geq-s+1$,
$\bigl[\sum_{i=1}^{N}p_{i,t},\,\sum_{i=1}^{N}y_{i,t}\bigr]_{t=-s+1,\ldots,T}$ converges both in distribution and in expectation to a nonlinear Poisson autoregression $\bigl[\lambda_t,X_t\bigr]_{t=-s+1,\ldots,T}$ and, additionally, $\max_i p_{i,T}\xrightarrow[N\to\infty]{p}0$. The proof follows the same steps as the proof of Theorem \ref{th_aggregation}, with the only difference being that we now have multiple nonlinear terms. At $t=-s+1,\ldots,0$, we are exactly in the setting of Lemma \ref{th_Poisson_randon}, so the desired (joint) convergence is satisfied. Moreover, applying conditional expectation, for $\tau=-s+1,\ldots,0$,
\begin{equation*}
\begin{split}
\mathbb{E}\sum\limits_{i=1}^N y_{i,\tau}&=\mathbb{E}\sum\limits_{i=1}^N p_{i,\tau}\xrightarrow[N\to\infty]{}\mathbb{E}\lambda_{\tau}=\mathbb{E}\bigl[\mathbb{E}(X_{\tau}\mid\lambda_{-s+1},X_{-s+1}\ldots,X_{\tau-1},\lambda_{\tau})\bigr]=\mathbb{E}X_{\tau}.
\end{split}
\end{equation*}
Let us show that $T=1$ follows from $T=0$. The argument for the general induction step is the same.

Let us show that $\max_i p_{i,1}\xrightarrow[N\to\infty]{p}0$. By Markov inequality,
$$\frac1{N^{m}}\sum\limits_{j=1}^{N} y_{j,\tau}\xrightarrow[N\to\infty]{p}0,\qquad\frac1{N^{m}}\sum\limits_{j=1}^{N} p_{j,\tau}\xrightarrow[N\to\infty]{p}0\qquad \tau=0,\ldots,-s+1,\,m>0.$$ Because
$$|f_{\alpha,i}(y_{i,t-1},\ldots,y_{i,t-s})|\leq A\sum_{\tau=1}^{s}y_{i,t-\tau}\leq sA,\qquad
{\sum_{i=1}^{N}|f_{\alpha,i}(y_{i,t-1},\ldots,y_{i,t-s})|\leq A\sum_{\tau=1}^{s}\sum_{i=1}^N y_{i,t-\tau}},$$
we obtain, uniformly in $i$,
\begin{equation}\label{eq_falpha_bound}
\frac1{N^{\kappa}}f_{\alpha,i}(y_{i,0},\ldots,y_{i,-s+1})\xrightarrow[N\to\infty]{p}0,\qquad
\frac1{N^{\kappa}}\sum\limits_{i=1}^{N}f_{\alpha,i}(y_{i,0},\ldots,y_{i,-s+1})\xrightarrow[N\to\infty]{p}0.
\end{equation}
Since $f_{\gamma,i}(0,\ldots,0)=0$ and $\nabla f_{\gamma,i}$ is uniformly bounded,
\begin{equation}\label{eq_fi_bound}
  \begin{split}
&\left|f_{\gamma,i}\left(\frac1{N}\sum\limits_{j=1}^{N}y_{j,t-1},\ldots,\frac1{N}\sum\limits_{j=1}^{N}y_{j,t-s},
\frac1{N}\sum\limits_{j=1}^{N}p_{j,t-1},\ldots,\frac1{N}\sum\limits_{j=1}^{N}p_{j,t-s}\right)\right|\\
&\leq M_1
\left(\frac1{N}\sum\limits_{j=1}^{N}y_{j,t-1}+\ldots+\frac1{N}\sum\limits_{j=1}^{N}y_{j,t-s}
+\frac1{N}\sum\limits_{j=1}^{N}p_{j,t-1}+\ldots+\frac1{N}\sum\limits_{j=1}^{N}p_{j,t-s}\right)\xrightarrow[N\to\infty]{p}0.
\end{split}\end{equation}
Finally, because $f_{\gamma}$ is continuous, $$f_{\gamma}\left(\sum\limits_{j=1}^{N}y_{j,0},\ldots,\sum\limits_{j=1}^{N}y_{j,-s+1},\sum\limits_{j=1}^{N}p_{j,0},\ldots,\sum\limits_{j=1}^{N}p_{j,-s+1}\right)
\xrightarrow[N\to\infty]{d} f_{\gamma}(X_0,\ldots,X_{-s+1},\lambda_0,\ldots,\lambda_{-s+1}),$$
so that combining with $\|\boldsymbol\gamma_i\|_1<C$, we get, uniformly in $i$,
\begin{equation}\label{eq_fgamma_bound}
  \begin{split}
&\left|\frac1{N}\boldsymbol\gamma_i^{\T}
f_{\gamma}\left(\sum\limits_{j=1}^{N}y_{j,0},\ldots,\sum\limits_{j=1}^{N}y_{j,-s+1},\sum\limits_{j=1}^{N}p_{j,0},\ldots,\sum\limits_{j=1}^{N}p_{j,-s+1}\right)\right|\\
&\leq \frac{C}{N}\left\| f_{\gamma}\left(\sum\limits_{j=1}^{N}y_{j,0},\ldots,\sum\limits_{j=1}^{N}y_{j,-s+1},\sum\limits_{j=1}^{N}p_{j,0},\ldots,\sum\limits_{j=1}^{N}p_{j,-s+1}\right) \right\|_1
\xrightarrow[N\to\infty]{p}0.
\end{split}\end{equation}
Combining Eq.~\eqref{eq_falpha_bound}, \eqref{eq_fi_bound}, \eqref{eq_fgamma_bound}, and the assumption on the initial conditions ${\max_i p_{i,\tau}\to0},\,\tau=0,\ldots,-s+1$, we deduce
$$0\leq\max_i p_{i,1}\xrightarrow[N\to\infty]{p}0.$$
Next, let us add $\sum\limits_{i=1}^N p_{i,1}$ to the joint convergence. Summing Eq.~\eqref{eq_multiple_k_diff_nonlin_big} with respect to $i$ for $t=1$, and using Eq.~\ref{eq_falpha_bound} and the fact that
\begin{equation*}
\begin{split}
&\sum\limits_{i=1}^N \|\nabla^2 f_{\gamma,i}\|\left(\frac1{N}\sum\limits_{j=1}^{N}y_{j,0}+\ldots+\frac1{N}\sum\limits_{j=1}^{N}y_{j,-s+1}
+\frac1{N}\sum\limits_{j=1}^{N}p_{j,0}+\ldots+\frac1{N}\sum\limits_{j=1}^{N}p_{j,-s+1}\right)^2\\
&\leq M_2\left(\sum\limits_{j=1}^{N}y_{j,0}+\ldots+\sum\limits_{j=1}^{N}y_{j,-s+1}
+\sum\limits_{j=1}^{N}p_{j,0}+\ldots+\sum\limits_{j=1}^{N}p_{j,-s+1}\right)\\
&\cdot\left(\frac1{N}\sum\limits_{j=1}^{N}y_{j,0}+\ldots+\frac1{N}\sum\limits_{j=1}^{N}y_{j,-s+1}
+\frac1{N}\sum\limits_{j=1}^{N}p_{j,0}+\ldots+\frac1{N}\sum\limits_{j=1}^{N}p_{j,-s+1}\right)\xrightarrow[N\to\infty]{p}0,
\end{split}
\end{equation*}
we obtain
\begin{equation*}
\begin{split}
\sum\limits_{i=1}^N p_{i,1}&=\frac1{N}\sum\limits_{i=1}^N c_{i}+\frac1{N^{\kappa}}\sum\limits_{i=1}^Nf_{\alpha,i}(y_{i,0},\ldots,y_{i,-s+1})
+\sum\limits_{\tau=1}^{s}\beta_{\tau} \sum\limits_{i=1}^Np_{i,1-\tau}\\
&+\sum\limits_{i=1}^Nf_{\gamma,i}\left(\frac1{N}\sum\limits_{j=1}^{N}y_{j,0},\ldots,\frac1{N}\sum\limits_{j=1}^{N}y_{j,-s+1}, \frac1{N}\sum\limits_{j=1}^{N}p_{j,0},\ldots,\frac1{N}\sum\limits_{j=1}^{N}p_{j,-s+1}\right)\\
&+\left(\frac1{N}\sum\limits_{i=1}^N\boldsymbol\gamma_i^{\T}\right)
f_{\gamma}\left(\sum\limits_{j=1}^{N}y_{j,0},\ldots,\sum\limits_{j=1}^{N}y_{j,-s+1}, \sum\limits_{j=1}^{N}p_{j,0},\ldots,\sum\limits_{j=1}^{N}p_{j,-s+1}\right)\\
&\xrightarrow[N\to\infty]{d} \bar{c}+\sum\limits_{\tau=1}^{s}\beta_{\tau}\lambda_{1-\tau}+(X_{0},\ldots,X_{-s+1},\lambda_{0},\ldots,\lambda_{-s+1})\bar{\mathbf{f}}\\
  &+\bar{\boldsymbol\gamma}^{\T}f_{\gamma}(X_{0},\ldots,X_{-s+1},\lambda_{0},\ldots\lambda_{-s+1})=\lambda_1
\end{split}
\end{equation*}
and
$$\left[\sum\limits_{i=1}^N p_{i,-s+1},\,\sum\limits_{i=1}^N y_{i,-s+1},\ldots,\sum\limits_{i=1}^N p_{i,0},\,\sum\limits_{i=1}^N y_{i,0},\,\sum\limits_{i=1}^N p_{i,1},\right]\xrightarrow[N\to\infty]{d} [\lambda_{-s+1},X_{-s+1},\ldots,\lambda_0,X_0,\lambda_1].$$
Direct application of Lemma \ref{th_Poisson_randon} with $q_i=p_{i,1}$ gives
\begin{equation}\label{eq_agg_t1_nonlin}
\left[\sum\limits_{i=1}^N p_{i,1},\,\sum\limits_{i=1}^N y_{i,1}\right]\xrightarrow[N\to\infty]{d} [\lambda_1,X_1],~{\rm with}~X_1\sim
Pois(\lambda_1).
\end{equation}
By adding $\{p_{i,t},y_{i,t}\}_{i=1,\ldots,N,\,t=-s+1,\ldots,0}$ to the conditioning in \eqref{eq_G_N(z)}, and repeating the same proof, the convergence in \eqref{eq_agg_t1_nonlin} extends to conditional convergence. Hence, we arrive at
$$
\left[\sum\limits_{i=1}^N p_{i,t},\,\sum\limits_{i=1}^N y_{i,t}\right]_{t=-s+1,\ldots,1}\xrightarrow[N\to\infty]{d} [\lambda_t,X_t]_{t=-s+1,\ldots,1}.
$$
It remains to prove convergence in expectations. For $t\leq0$ we have already shown it. Let us separately analyze all nonlinear terms in $t=1$ dynamics. First,
\begin{equation}\label{eq_falpha_bound_E}
0\leq\frac1{N^{\kappa}}\sum\limits_{i=1}^N\mathbb{E}f_{\alpha,i}(y_{i,0},\ldots,y_{i,-s+1})\leq\frac{A}{N^{\kappa}}\sum\limits_{\tau=1}^{s}\sum_{i=1}^N \mathbb{E}y_{i,1-\tau}
\xrightarrow[N\to\infty]{}0.
\end{equation}
Next, let $f_{\gamma}=\left(f_{\gamma}^{[1]},\ldots,f_{\gamma}^{[k]}\right)^{\T}$ and consider $r=1,\ldots,k$,
\begin{equation*}
0\leq f_{\gamma}^{[r]}\left(\sum\limits_{j=1}^{N}y_{j,0},\ldots,\sum\limits_{j=1}^{N}y_{j,-s+1}, \sum\limits_{j=1}^{N}p_{j,0},\ldots,\sum\limits_{j=1}^{N}p_{j,-s+1}\right)
\leq L_{\gamma}\sum\limits_{\tau=1}^{s}\left(\sum\limits_{j=1}^{N}y_{j,1-\tau}+\sum\limits_{j=1}^{N}p_{j,1-\tau}\right)
\end{equation*}
Each term on the right converges in distribution and in expectation to its nonnegative limit and is therefore uniformly integrable (for nonnegative random variables, convergence in distribution together with convergence of first moments implies uniform integrability). Therefore, $f_{\gamma}^{[r]}\left(\sum\limits_{j=1}^{N}y_{j,0},\ldots,\sum\limits_{j=1}^{N}y_{j,-s+1}, \sum\limits_{j=1}^{N}p_{j,0},\ldots,\sum\limits_{j=1}^{N}p_{j,-s+1}\right)$ is also uniformly integrable. Combined with convergence in distribution, this gives convergence of expectations:
\begin{equation}\label{eq_fgamma_bound_E}
\begin{split}
&\mathbb{E}f_{\gamma}^{[r]}\left(\sum\limits_{j=1}^{N}y_{j,0},\ldots,\sum\limits_{j=1}^{N}y_{j,-s+1}, \sum\limits_{j=1}^{N}p_{j,0},\ldots,\sum\limits_{j=1}^{N}p_{j,-s+1}\right)\\
&\xrightarrow[N\to\infty]{}\mathbb{E}f_{\gamma}^{[r]}(X_{0},\ldots,X_{-s+1},\lambda_{0},\ldots\lambda_{-s+1}).
\end{split}\end{equation}
It remains to consider $f_{\gamma,i}$. A second-order Taylor expansion around zero gives
\begin{equation*}
\begin{split}
&\left| \sum_{i=1}^N \mathbb{E}f_{\gamma,i}\left(\frac1{N}\sum\limits_{j=1}^{N}y_{j,0},\ldots,\frac1{N}\sum\limits_{j=1}^{N}y_{j,-s+1}, \frac1{N}\sum\limits_{j=1}^{N}p_{j,0},\ldots,\frac1{N}\sum\limits_{j=1}^{N}p_{j,-s+1}\right) \right.\\
-&\left.
\mathbb{E}\left[\left(\sum_{j=1}^{N}y_{j,0},\ldots,\sum_{j=1}^{N}y_{j,-s+1},\sum_{j=1}^{N}p_{j,0},\ldots,\sum_{j=1}^{N}p_{j,-s+1}\right)
\frac1N\sum_{i=1}^N\nabla f_{\gamma,i}(0,\ldots,0)\right]\right|\\
\leq&\frac{M_2}{2N}\mathbb{E}\left[\left(\sum_{j=1}^Ny_{j,0}\right)^2+\ldots+\left(\sum_{j=1}^Ny_{j,-s+1}\right)^2
+\left(\sum_{j=1}^Np_{j,0}\right)^2+\ldots+\left(\sum_{j=1}^Np_{j,-s+1}\right)^2\right].
\end{split}
\end{equation*}
Since each sum over $j$ is bounded by $N$, for every $\eps>0$,
\begin{equation*}
\begin{split}
\frac1{N}\mathbb{E}\left(\sum\limits_{i=1}^N y_{i,\tau}\right)^2
&\leq\eps\mathbb{E}\sum_{i=1}^N y_{i,\tau}+\mathbb{E}\left[\sum_{i=1}^N y_{i,\tau}\1\left(\sum_{i=1}^N y_{i,\tau}>\eps N\right)\right],\\
\frac1{N}\mathbb{E}\left(\sum\limits_{i=1}^N p_{i,\tau}\right)^2
&\leq\eps\mathbb{E}\sum_{i=1}^N p_{i,\tau}+\mathbb{E}\left[\sum_{i=1}^N p_{i,\tau}\1\left(\sum_{i=1}^N p_{i,\tau}>\eps N\right)\right].
\end{split}
\end{equation*}
The second terms converge to zero as $N\to\infty$ by uniform integrability, while the first terms are bounded by a constant times $\eps$. Thus, as $N\to\infty$ followed by $\eps\to0$, the second-order Taylor remainder goes to $0$ and we obtain
\begin{equation}\label{eq_fgammai_bound_E}
\begin{split}
&\sum_{i=1}^N \mathbb{E}f_{\gamma,i}\left(\frac1{N}\sum\limits_{j=1}^{N}y_{j,0},\ldots,\frac1{N}\sum\limits_{j=1}^{N}y_{j,-s+1}, \frac1{N}\sum\limits_{j=1}^{N}p_{j,0},\ldots,\frac1{N}\sum\limits_{j=1}^{N}p_{j,-s+1}\right)\\
&\xrightarrow[N\to\infty]{}\mathbb{E}(X_{0},\ldots,X_{-s+1},\lambda_{0},\ldots,\lambda_{-s+1})\bar{\mathbf{f}}.
\end{split}\end{equation}
Summing up, Eq.~\eqref{eq_falpha_bound_E}, \eqref{eq_fgamma_bound_E}, \eqref{eq_fgammai_bound_E} imply
\begin{equation*}\begin{split}
\sum\limits_{i=1}^N \mathbb{E}p_{i,1}&
=\frac{1}{N}\sum\limits_{i=1}^N c_{i}+\frac1{N^{\kappa}}\sum\limits_{i=1}^N\mathbb{E}f_{\alpha,i}(y_{i,0},\ldots,y_{i,-s+1})
+\sum\limits_{\tau=1}^{s}\beta_{\tau} \sum\limits_{i=1}^N\mathbb{E}p_{i,1-\tau}\\
&+\sum\limits_{i=1}^N\mathbb{E}f_{\gamma,i}\left(\frac1{N}\sum\limits_{j=1}^{N}y_{j,0},\ldots,\frac1{N}\sum\limits_{j=1}^{N}y_{j,-s+1}, \frac1{N}\sum\limits_{j=1}^{N}p_{j,0},\ldots,\frac1{N}\sum\limits_{j=1}^{N}p_{j,-s+1}\right)\\
&+\left(\frac1{N}\sum\limits_{i=1}^N\boldsymbol\gamma_i^{\T}\right)
\mathbb{E}f_{\gamma}\left(\sum\limits_{j=1}^{N}y_{j,0},\ldots,\sum\limits_{j=1}^{N}y_{j,-s+1}, \sum\limits_{j=1}^{N}p_{j,0},\ldots,\sum\limits_{j=1}^{N}p_{j,-s+1}\right)\\
&\xrightarrow[N\to\infty]{} \bar{c}+\sum\limits_{\tau=1}^{s}\beta_{\tau}\mathbb{E}\lambda_{1-\tau}+\mathbb{E}(X_{0},\ldots,X_{-s+1},\lambda_{0},\ldots,\lambda_{-s+1})\bar{\mathbf{f}}\\
  &+\bar{\boldsymbol\gamma}^{\T}\mathbb{E}f_{\gamma}(X_{0},\ldots,X_{-s+1},\lambda_{0},\ldots\lambda_{-s+1})=\mathbb{E}\lambda_1.
\end{split}\end{equation*}
Finally,
$$\sum\limits_{i=1}^N \mathbb{E}y_{i,1}=\sum\limits_{i=1}^N \mathbb{E}p_{i,1}\xrightarrow[N\to\infty]{}\mathbb{E}\lambda_1
=\mathbb{E}\left[\mathbb{E}\left(X_1\mid \lambda_{-s+1},X_{-s+1},\ldots,\lambda_0,X_0,\lambda_1\right)\right]=\mathbb{E}X_1.$$
Repeating the same steps for general $T$, we arrive at the desired joint convergence.
\end{proof}

\begin{proof}[Proof of Theorem \ref{th_aggregation_network}]
The proof follows the lines of the previous proofs and proceeds by induction. The difference is that $\max_i p_{i,t}$ involves the term $\sum\limits_{j=1}^N W_{ij}y_{j,t-1}$. We are going to show that this term still goes to $0$ in probability.
$$\sum\limits_{j=1}^N W_{ij}y_{j,t-1}\leq\frac1{d_i^{out}}\sum\limits_{j=1}^N y_{j,t-1}\leq\frac1{d\log N}\sum\limits_{j=1}^N y_{j,t-1}\xrightarrow[N\to\infty]{p}0,$$
since $\sum\limits_{j=1}^N y_{j,t-1}=O_p(1)$ by induction arguments.

Second, note that because $W_{ij}$ is doubly stochastic, $\sum\limits_{i=1}^{N}W_{ij}=1$ and
$$\sum\limits_{i=1}^{N}\sum\limits_{j=1}^N W_{ij}y_{j,t-1}=\sum\limits_{j=1}^N y_{j,t-1}\sum\limits_{i=1}^{N}W_{ij}=\sum\limits_{j=1}^N y_{j,t-1}.$$
Thus,
$$\sum_{i=1}^N p_{i,t}=\frac1{N}\sum_{i=1}^N c_i+\frac1{N^{\kappa}}\sum_{i=1}^N a_i y_{i,t-1}+\beta\sum_{i=1}^N p_{i,t-1}+\gamma\sum_{i=1}^N y_{i,t-1}.$$
Therefore, repeating the steps of the proof of Theorem \ref{th_aggregation}, we obtain the desired result.
\end{proof}

\section{Proofs: Estimation}\label{append_estimation}

\begin{proof}[Proof of Theorem \ref{th_consistency}] The proofs first defines the population analogue of the criterion function, then shows that the population criterion is maximized by $\theta_0$, and establishes the uniform convergence of the sample criterion to the population one. Finally, these three steps are combined to imply consistency.

Start by noting that both Assumptions \ref{ass_contraction1} and \ref{ass_contraction2} imply that if we choose different starting values at time $r$ to initialize probabilities, but then use exactly the same realizations of $\y_t$ for two dynamics with the same parameter $\theta$, then there exists $\varrho\in[0,1)$ such that
\begin{equation}\label{eq_forget_past}
 \sup_{\theta\in\Theta} \left\| \p_t(\theta)-\p'_t(\theta)\right\|_1 \leq C\varrho^{t-r}, \qquad t\geq r,
\end{equation}
where the constant $C$ can be chosen uniformly over the initial values since all probability coordinates belong to $[0,1]$.

To define a proper population object, consider a unique stationary solution $\{\y_t,\p_t\}_{t\in\mathbb{Z}}$ of Theorem \ref{th_stationary_unique} with corresponding to the true parameter $\theta_0$. for every $\theta\in\Theta$, consider the recursion
\begin{equation}\label{eq_p_recursion}
 \p_t^*(\theta) = \g\left( \{\p_{\tau}^*(\theta)\}_{\tau=t-1,\ldots,t-s}, \{\y_{\tau}\}_{\tau=t-1,\ldots,t-q}; \theta \right),
\end{equation}
where the recursion is driven by the observations generated under the true parameter $\theta_0$. Let us show that there is a unique stationary solution to the recursion \eqref{eq_p_recursion}. Fix arbitrary initial values $\p^0\in[0,1]^{Ns}$ and consider the process $\p_t^{[r]}(\theta)$, which at time $t=r$ starts the recursion \eqref{eq_p_recursion} from $\p^0$, i.e.,
$$\p_{r}^{[r]}(\theta)=\g\left( \p^0, \{\y_{\tau}\}_{\tau=t-1,\ldots,t-q}; \theta \right),$$
and for $t>r$ continues as in the recursion by plugging $\p_t^{[r]}(\theta)$ in the right-hand side of the recursion (for $r<t<r+s$ use chosen initial values instead of the unobserved lags $\p_{t-t'}^{[r]}(\theta),\,t-r<t'\leq s$). Then for $r'<r$, starting from $t=r$, $\p_t^{[r']}(\theta)$ and $\p_t^{[r]}(\theta)$ are driven by exactly the same observations $\y_t$. Thus, \eqref{eq_forget_past} implies
$$\sup_{\theta\in\Theta} \left\| \p_t^{[r]}(\theta)-\p_t^{[r']}(\theta) \right\|_1 \leq C\varrho^{t-r}.$$
Therefore, $\p_t^{[r]}(\theta)$ is uniformly Cauchy as $r\to-\infty$ and we can define, for every realization of $\{\y_{\tau}\}$
\begin{equation}\label{eq_stationary_p_theta}
 \p_t^*(\theta)= \lim_{r\to-\infty}\p_t^{[r]}(\theta).
\end{equation}
The convergence is uniform over $\theta\in\Theta$, and the limit is independent of $\p^0$ (again, due to \eqref{eq_forget_past}). Since $\g$ is continuous and $\p_t^{[r]}(\theta)$ satisfies the recursion \eqref{eq_p_recursion} for $t>r+s$, taking the limits, we obtain that $\p_t^*(\theta)$ satisfies the desired recursion for all $t$. Moreover, $\p_t^*(\theta)$ is measurable with respect to $\sigma(\y_{t-1},\y_{t-2},\ldots)$.

At the true parameter, $\p_t^*(\theta_0)=\p_t$, since both $\p_t$ and $\p_t^{[r]}(\theta_0)$ are driven by the same realizations $\y_{\tau}$ for $t>r$ and \eqref{eq_forget_past} implies
$$\left\| \p_t^{[r]}(\theta_0)-\p_t \right\|_1 \leq C\varrho^{t-r}.$$
Taking the limit $r\to-\infty$, we obtain $\p_t^*(\theta_0)=\p_t$.

Finally, because $\g$ satisfies either Assumptions \ref{ass_contraction1} or \ref{ass_contraction2}, it is continuous function of $\p_t$ and $\y_t$ (uniformly over $\theta$). Thus, $\p_{t}^{[r]}(\theta)$ is a continuous function of $\theta$ for fixed $t>r$. Since $\p_{t}^{[r]}(\theta)\to \p_{t}^{*}(\theta)$ uniformly over $\theta$, $\p_{t}^{*}(\theta)$ is also continuous function of $\theta$.

\smallskip

To show population identification of the parameter, define
\begin{equation}\label{eq_mle_pstar}
 \ell_t^*(\theta)=\sum_{i=1}^N \left[ y_{i,t}\log p_{i,t}^*(\theta) +(1-y_{i,t})\log\bigl(1-p_{i,t}^*(\theta)\bigr) \right]
\end{equation}
and the population objective
\begin{equation}\label{eq_population_mle}
 Q(\theta):=\mathbb{E}\ell_t^*(\theta).
\end{equation}
By Assumption \ref{ass_g_theta}\ref{enu:g_theta:bounded},
\begin{equation}\label{eq_loglik_bound_mle}
 \sup_{\theta\in\Theta}|\ell_t^*(\theta)| \leq N|\log\eps|.
\end{equation}

Let us show that $Q(\theta)$ has a unique maximum at $\theta_0$. Consider
\begin{equation}\label{eq_maxQ}
\begin{split}
Q(\theta)-Q(\theta_0)
&=\sum\limits_{i=1}^{N}\mathbb{E}\left[ y_{i,t}\log \frac{p_{i,t}^*(\theta)}{p_{i,t}} +(1-y_{i,t}) \log \frac{1-p_{i,t}^*(\theta)}{1-p_{i,t}} \right]\\
&=\sum\limits_{i=1}^{N}\mathbb{E}\log\frac{(p_{i,t}^*(\theta))^{y_{i,t}}(1-p_{i,t}^*(\theta))^{1-y_{i,t}}}{p_{i,t}^{y_{i,t}}(1-p_{i,t})^{1-y_{i,t}}}
\leq\sum\limits_{i=1}^{N}\mathbb{E}\frac{(p_{i,t}^*(\theta))^{y_{i,t}}(1-p_{i,t}^*(\theta))^{1-y_{i,t}}}{p_{i,t}^{y_{i,t}}(1-p_{i,t})^{1-y_{i,t}}}-N\\
&=\sum\limits_{i=1}^{N}\mathbb{E}\mathbb{E}_{t-1}\frac{(p_{i,t}^*(\theta))^{y_{i,t}}(1-p_{i,t}^*(\theta))^{1-y_{i,t}}}{p_{i,t}^{y_{i,t}}(1-p_{i,t})^{1-y_{i,t}}}-N\\
&=\sum\limits_{i=1}^{N}\mathbb{E}\left(\frac{p_{i,t}^*(\theta)}{p_{i,t}}p_{i,t}+\frac{1-p_{i,t}^*(\theta)}{1-p_{i,t}}(1-p_{i,t})\right)-N\equiv N-N=0,
\end{split}\end{equation}
where we used the inequality $\log(a)\leq a-1$ for any $a\geq0$. Thus, $\theta_0$ is a maximum of $Q(\theta)$. Because of \eqref{eq_loglik_bound_mle}, the value of $Q(\theta_0)$ is finite. Moreover, $\theta_0$ is the unique maximum of $Q(\theta)$, because equality in \eqref{eq_maxQ} is only possible if $\frac{p_{i,t}^*(\theta)^{y_{i,t}}(1-p_{i,t}^*(\theta))^{1-y_{i,t}}}{p_{i,t}^{y_{i,t}}(1-p_{i,t})^{1-y_{i,t}}}=1$ with probability $1$. So equality requires $p_{i,t}^*(\theta)=p_{i,t}$ for all $i$ (because, conditional on $\F_{t-1}$, $y_{i,t}$ takes both values $0$ and $1$ with positive probability under Assumption \ref{ass_g_theta}\ref{enu:g_theta:bounded}). This cannot hold by Assumption \ref{ass_g_theta}\ref{enu:g_theta:unique} if $\theta \ne \theta_0$.

Finally, because $\p_{t}^*(\theta)$ is a continuous function of $\theta$ and $g_{i}(\cdot)\in[\eps,1-\eps]$, for any $\delta>0$,
\begin{equation}\label{eq_ID_MLE}
\sup_{\theta\in\Theta\setminus B(\theta_0,\delta)}Q(\theta)<Q(\theta_0),
\end{equation}
where $ B(\theta_0,\delta)$ denotes an open ball of radius $\delta$ around $\theta_0$.

\smallskip

Next, we need to verify the uniform convergence of the sample criterion to the population one:
\begin{equation}\label{eq_UC_mle}
\sup_{\theta\in\Theta} \left| \frac1T\sum_{t=1}^T\ell_t^*(\theta)-Q(\theta) \right|\xrightarrow[T\to\infty]{p}0.
\end{equation}
To establish \eqref{eq_UC_mle}, fix an integer $M\geq1$ and consider $\p_t^{[t-M]}(\theta),$ that is, the probability filter initialized $M$ periods before $t$ at the
fixed values $\p^0$. Define correspondingly
$$\ell_{t,M}(\theta)=\sum_{i=1}^N  \left[y_{i,t}\log p_{i,t}^{[t-M]}(\theta) +(1-y_{i,t})\log\left(1-p_{i,t}^{[t-M]}(\theta)\right)\right],
    \qquad    Q_M(\theta):=\mathbb{E}\ell_{t,M}(\theta).$$
By \eqref{eq_forget_past},
$$ \sup_{\theta\in\Theta}  \|\p_t^{[t-M]}(\theta)-\p_t^*(\theta)\|_1  \leq C\varrho^M.$$
Combining the above bound with Assumption \ref{ass_g_theta}\ref{enu:g_theta:bounded},
\begin{equation}\label{eq_truncation_likelihood}
    \sup_{\theta\in\Theta}|\ell_{t,M}(\theta)-\ell_t^*(\theta)| \leq \frac{C}{\eps}\varrho^M,
\end{equation}
and hence
\begin{equation}\label{eq_truncation_population}
    \sup_{\theta\in\Theta} |Q_M(\theta)-Q(\theta)|\leq \frac{C}{\eps}\varrho^M.
\end{equation}
For fixed $M$, $\ell_{t,M}(\theta)$ is a continuous function of $\theta$ and of a fixed finite block of the stationary Markov process $\{Z_t\}$. Therefore, by Corollary \ref{cor_lln} applied to the corresponding finite-lag augmentation of $Z_t$, for every fixed $\theta\in\Theta$,
\begin{equation}\label{eq_pointwise_truncated_lln}
    \frac1T\sum_{t=1}^T\ell_{t,M}(\theta) \xrightarrow[T\to\infty]{p} Q_M(\theta).
\end{equation}
The convergence is uniform in $\theta$, since $\ell_{t,M}(\theta)$ is continuous and the state-space of $Z_t$ and $\Theta$ are compact, implying uniform continuity:
\begin{equation}\label{eq_truncated_ulln}
    \sup_{\theta\in\Theta} \left|\frac1T\sum_{t=1}^T\ell_{t,M}(\theta)-Q_M(\theta)\right|
    \xrightarrow[T\to\infty]{p}0
\end{equation}
for every fixed $M$.

Finally, \eqref{eq_truncation_likelihood}--\eqref{eq_truncated_ulln} give \eqref{eq_UC_mle},
$$  \sup_{\theta\in\Theta} \left|\frac1T\sum_{t=1}^T\ell_t^*(\theta)-Q(\theta)\right|
    \leq \sup_{\theta\in\Theta}  \left|\frac1T\sum_{t=1}^T\ell_{t,M}(\theta)-Q_M(\theta)\right|+\frac{2C}{\eps}\varrho^M\to0,
$$
as first $T\to\infty$, and then $M\to\infty$.

To finish the proof, note that $\tilde{\p}_t(\theta)$ and $\p_t^*(\theta)$ are driven by the same observed sequence $\{\y_t\}$ and differ only in their probability initialization, so bound \eqref{eq_forget_past} gives
\begin{equation}\label{eq_initialized_filter_bound}
 \sup_{\theta\in\Theta} \|\tilde{\p}_t(\theta)-\p_t^*(\theta)\|_1 \leq C\varrho^t.
\end{equation}
Assumption \ref{ass_g_theta}\ref{enu:g_theta:bounded} implies that for $\ell_{i,t}(p)= y_{i,t}\log p+(1-y_{i,t})\log(1-p)$,
$$\sup_{p\in[\eps,1-\eps]} \left|\frac{\partial\ell_{i,t}(p)}{\partial p}\right| \leq\frac1{\eps}.$$
Combining the above bound with \eqref{eq_initialized_filter_bound}, we get
$$\sup_{\theta\in\Theta} \left| \frac1T\mathcal Q_T(\theta) - \frac1T\sum_{t=1}^T\ell_t^*(\theta) \right|\leq \frac{C}{T\eps}\sum_{t=1}^T\varrho^t
 \leq \frac{C}{T\eps(1-\varrho)}\xrightarrow[T\to\infty]{}0,$$
which together with \eqref{eq_UC_mle} implies
\begin{equation}\label{eq_UC_mle_tilde}
\sup_{\theta\in\Theta} \left| \frac1T\mathcal{Q}_T(\theta)-Q(\theta) \right|\xrightarrow[T\to\infty]{p}0.
\end{equation}

Identification \eqref{eq_ID_MLE} and uniform convergence \eqref{eq_UC_mle_tilde} imply that for any $\delta>0$, ${Prob(\|\hat{\theta}^{MLE}-\theta_0\|>\delta)\xrightarrow[T\to\infty]{}0}$. That is, $\hat{\theta}^{MLE}\xrightarrow[T\to\infty]{p}\theta_0$.
\end{proof}

\begin{proof}[Proof of Theorem \ref{th_asy_normality}] The proof starts with defining first and second order derivatives, which appear in the score and the Hessian. Then the CLT for the score and LLN for the Hessian are derived and combined to obtain asymptotic normality of $\hat{\theta}^{MLE}$.

Let $\p_t^*(\theta)$ denote the stationary probability, which satisfies recursion \eqref{eq_p_recursion} constructed with the true realizations $\y_t$, but different $\theta$, as in the proof of Theorem \ref{th_consistency}. We first need to show that $\p_t^*(\theta)$ is twice continuously differentiable. For this purpose, consider its finite-past approximations $\p_t^{[r]}(\theta)$ from fixed and independent of $\theta$ initial values $\p^0\in[0,1]^{Ns}$ and define
$$G_t(P;\theta):=\g\left(P, \{\y_{\tau}\}_{\tau=t-1,\ldots,t-q};\theta\right),$$
where $P$ stacks the $s$ probability lags. Since $G_t$ is twice continuously differentiable, $\p_t^{[r]}(\theta)$ is twice continuously differentiable in $\theta$. Writing its first derivative as
an $N\times m$ Jacobian $D_t^{[r]}(\theta)$, differentiation of the filter recursion gives
\begin{equation}\label{eq_derivative_filter}
\begin{split}
    D_t^{[r]}(\theta)    &=
    G_{\theta,t}^{[r]}(\theta)  +\sum_{j=1}^s G_{p_j,t}^{[r]}(\theta)D_{t-j}^{[r]}(\theta),\quad t\geq r,\\
    D_{r-j}^{[r]}(\theta)&=0,\quad j=1,\ldots,s,
 \end{split}
\end{equation}
where the derivatives of $G_t$ are evaluated along the finite-past filter $\p_t^{[r]}(\theta)$. The derivative recursion inherits the contraction property of Assumptions \ref{ass_contraction1} and \ref{ass_contraction2}. Under \ref{ass_contraction1}, the absolute values of the entries of $G_{p_j,t}^{[r]}(\theta)$ are bounded by the corresponding probability Lipschitz coefficients. Hence, after stacking the $s$ derivative lags, the homogeneous part of \eqref{eq_derivative_filter} is dominated by the companion matrix in Assumption \ref{ass_contraction1}, whose spectral radius is strictly below one. Under Assumption \ref{ass_contraction2}, the contractive bound is applied to the $\ell_1$-norms of each of $m$ columns (and, thus, to their maximum). Thus, under either Assumption \ref{ass_contraction1} or \ref{ass_contraction2},
\begin{equation}\label{eq_MLE_deriv_bound}
  \sup_{r\in\mathbb{Z}}\sup_{t\geq r}\sup_{\theta\in\Theta_0}\|D_t^{[r]}(\theta)\|_1<\infty.
\end{equation}
If $r'<r$, subtract the derivative recursions corresponding to $\p_t^{[r]}(\theta)$ and $\p_t^{[r']}(\theta)$. Writing
$$ \Delta_t(\theta) :=  D_t^{[r]}(\theta)-D_t^{[r']}(\theta),$$
we obtain
$$   \Delta_t(\theta)= \sum_{j=1}^s G_{p_j,t}^{[r]}(\theta)\Delta_{t-j}(\theta)  +R_t(\theta),$$
where
$$ R_t(\theta)= G_{\theta,t}^{[r]}(\theta)-G_{\theta,t}^{[r']}(\theta)
  +\sum_{j=1}^s \left(G_{p_j,t}^{[r]}(\theta) - G_{p_j,t}^{[r']}(\theta)\right) D_{t-j}^{[r']}(\theta).$$
Since the second derivatives of $G_t$ are uniformly bounded on the compact set $\Theta_0$, its first derivatives are uniformly
Lipschitz in the probability arguments. Combining this with \eqref{eq_forget_past} and the uniform boundedness of
$D_t^{[r']}(\theta)$ gives
\[
    \sup_{\theta\in\Theta_0}\|R_t(\theta)\|
    \leq C\varrho^{\,t-r}.
\]
Since $\Delta_t(\theta)$ satisfies the same geometrically contracting recursion as the derivative filter itself (modulo different intercept), there exist $C<\infty$ and
$\rho_1\in[0,1)$, such that
\begin{equation*}
    \sup_{\theta\in\Theta_0} \|D_t^{[r]}(\theta)-D_t^{[r']}(\theta)\| \leq C\rho_1^{\,t-r}.
\end{equation*}
Thus the derivatives are uniformly Cauchy as $r\to-\infty$. Since
$\p_t^{[r]}(\theta)\to\p_t^*(\theta)$ uniformly in $\theta$, $\p_t^*(\theta)$ is continuously differentiable and
\[
    D_t^*(\theta):=\frac{\partial\p_t^*(\theta)}{\partial\theta^{\T}} =\lim_{r\to-\infty}D_t^{[r]}(\theta)
    =\lim_{r\to-\infty}\frac{\partial\p_t^{[r]}(\theta)}{\partial\theta^{\T}}
\]
uniformly over $\Theta_0$. Taking the limit in \eqref{eq_derivative_filter},
\begin{equation}\label{eq_derivative_filter}
    D_t^*(\theta)   =
    G_{\theta,t}^*(\theta)+\sum_{j=1}^s G_{p_j,t}^*(\theta)D_{t-j}^*(\theta).
\end{equation}
In particular, at $\theta=\theta_0$, because $\p_t^*(\theta_0)=\p_t$, \eqref{eq_derivative_filter} coincides with the recursion in Assumption \ref{ass_g_theta2}\ref{enu:g_theta2:H}.

Differentiating \eqref{eq_derivative_filter} once more and using the same arguments as before, where second derivatives of $\g$ are continuous and bounded by Assumption \ref{ass_g_theta2}\ref{enu:g_theta2:differ}), shows that the second derivative of $\p_t^*(\theta)$ exist and is uniformly bounded on $\Theta_0$:
$$\sup_t\sup_{\theta\in\Theta_0} \left\|\frac{\partial^2\p_t^*(\theta)}{\partial\theta\partial\theta^{\T}}\right\|<\infty.$$
The same recursions give forgetting of the derivatives for the initialized filter $\tilde{\p}_t(\theta)$ used in the likelihood. If
$\tilde D_t(\theta)=\partial\tilde{\p}_t(\theta)/\partial\theta^{\T}$, then, since $G_t$ is twice differentiable, so its first derivative is Lipschitz, there exist $C<\infty$ and $\rho_2\in[0,1)$,
\begin{equation*}
    \sup_{\theta\in\Theta_0}\|\tilde D_t(\theta)-D_t^*(\theta)\|\leq C\rho_2^t.
\end{equation*}
Similarly, a (weaker) bound on the second derivative holds:
\begin{equation*}
    \sup_{\theta\in\Theta_0}\left\|\frac{\partial^2\tilde{\p}_t(\theta)}{\partial\theta\partial\theta^{\T}}
    -\frac{\partial^2\p_t^*(\theta)}{\partial\theta\partial\theta^{\T}}\right\|\xrightarrow[t\to\infty]{a.s.}0.
\end{equation*}

\smallskip

Now we can analyze the score and the Hessian of the MLE problem. Recall the log-likelihood in \eqref{eq_mle_pstar},
$$\ell_t^*(\theta)=\sum_{i=1}^N \left[ y_{i,t}\log p_{i,t}^*(\theta) +(1-y_{i,t})\log\bigl(1-p_{i,t}^*(\theta)\bigr) \right]$$
and define the score function
$$s_t^*(\theta)=\frac{\partial\ell_t^*(\theta)}{\partial\theta}=\sum_{i=1}^N \left[\frac{y_{i,t}}{p_{i,t}^*(\theta)} - \frac{1-y_{i,t}}{1-p_{i,t}^*(\theta)} \right]\frac{\partial}{\partial\theta}p_{i,t}^*(\theta)
=\sum_{i=1}^N \frac{y_{i,t}-p_{i,t}^*(\theta)}{p_{i,t}^*(\theta)(1-p_{i,t}^*(\theta))}\frac{\partial}{\partial\theta}p_{i,t}^*(\theta).$$
Since both $p_{i,t}^*$ and $\frac{\partial}{\partial\theta}p_{i,t}^*(\theta)$ are $\F_{t-1}$ measurable, $p_{i,t}^*(\theta_0)=p_{i,t}$, and $\mathbb{E}_{t-1}y_{i,t}=p_{i,t}$, $\{s_t^*(\theta_0),\F_t\}$ is a martingale difference sequence. Its conditional variance is
\begin{equation*}\begin{split}
V_t:=\mathbb{E}_{t-1}s_t(\theta_0)s_{t}(\theta_0)^{\T}&=
\sum_{i=1}^{N}\frac{\partial}{\partial\theta}p_{i,t}^*(\theta_0)\frac{\partial}{\partial\theta^{\T}}p_{i,t}^*(\theta_0)
\mathbb{E}_{t-1}\left(\frac{y_{i,t}-p_{i,t}}{p_{i,t}(1-p_{i,t})}\right)^2\\
&+\sum_{i\neq j}\frac{\partial}{\partial\theta}p_{i,t}^*(\theta_0)\frac{\partial}{\partial\theta^{\T}}p_{j,t}^*(\theta_0)
\mathbb{E}_{t-1}\left(\frac{y_{i,t}-p_{i,t}}{p_{i,t}(1-p_{i,t})}\right)\left(\frac{y_{j,t}-p_{j,t}}{p_{j,t}(1-p_{j,t})}\right)
\\
&=\sum_{i=1}^{N}\frac{1}{p_{i,t}(1-p_{i,t})}\frac{\partial}{\partial\theta}p_{i,t}^*(\theta_0)\frac{\partial}{\partial\theta^{\T}}p_{i,t}^*(\theta_0).
\end{split}\end{equation*}
Let us show that
\begin{equation}\label{eq_MLE_var_LLN}
  \frac1T\sum_{t=1}^T V_t\xrightarrow[T\to\infty]{p}H_0.
\end{equation}
We proceed along the lines of the LLN part of the proof of Theorem \ref{th_consistency}. Fix $M$ and initialize the probability recursion and its derivative recursion at time $t-M$ from fixed values $\p^0$. Then define
\[
    V_{t,M}=\sum_{i=1}^N \frac{\left(D^{[t-M]}_{i,t}\right)^{\T}D^{[t-M]}_{i,t}}{p^{[t-M]}_{i,t}\left(1-p^{[t-M]}_{i,t}\right)}.
\]
Filter and its derivative past-forgetting imply
\[
    \sup_t\|V_{t,M}-V_t\|\xrightarrow[M\to\infty]{}0.
\]
For every fixed $M$, $V_{t,M}$ is a continuous function of a fixed finite block of the stationary Markov process $\{Z_t\}$. Therefore, Corollary \ref{cor_lln}, applied entrywise and to the corresponding finite-lag augmentation of $Z_t$, gives
\[
    \frac{1}{T}\sum_{t=1}^T V_{t,M}\xrightarrow[T\to\infty]{p}\mathbb{E} V_{t,M}.
\]
Letting first $T\to\infty$ for fixed $M$, and then $M\to\infty$, proves LLN \eqref{eq_MLE_var_LLN}, since $\mathbb{E}V_t=H_0$.

Assumption \ref{ass_g_theta}\ref{enu:g_theta:bounded} and the boundedness of the derivative \eqref{eq_MLE_deriv_bound} imply that $\| s_t^*(\theta_0)\|\leq C$. Hence the conditional Lindeberg condition is trivially satisfied and the martingale CLT gives
\begin{equation*}
    \frac1{\sqrt T}\sum_{t=1}^T s_t^*(\theta_0)\xrightarrow[T\to\infty]{d}\mathcal{N}(0,H_0).
\end{equation*}
For the initialized likelihood, let
\[
    \tilde{s}_t(\theta)=\frac{\partial}{\partial\theta}\sum_{i=1}^N \left[
       y_{i,t}\log\tilde p_{i,t}(\theta) +(1-y_{i,t})\log(1-\tilde p_{i,t}(\theta))\right].
\]
The probability and derivative forgetting bounds, together with the fact
that probabilities are bounded away from zero and one, imply
\[
    \|\tilde{s}_t(\theta_0)-s_t^*(\theta_0)\|\leq C\rho_1^t.
\]
Consequently,
\[
    \frac1{\sqrt T}\left\|\sum_{t=1}^T\bigl(\tilde{s}_t(\theta_0)-s_t^*(\theta_0)\bigr)\right\|\to0,
\]
and hence
\begin{equation}\label{eq_score_clt_initialized}
    \frac1{\sqrt T}\frac{\partial\mathcal Q_T(\theta_0)}{\partial\theta}
    \xrightarrow[T\to\infty]{d} \mathcal N(0,H_0).
\end{equation}

\smallskip

To analyze the Hessian, let
\begin{equation*}
\begin{split}
\mathcal H_t^*(\theta)=\frac{\partial^2\ell_t^*(\theta)}{\partial\theta\partial\theta^{\T}}
&=\sum_{i=1}^{N} \frac{y_{i,t}-p_{i,t}^*(\theta)}{p_{i,t}^*(\theta)(1-p_{i,t}^*(\theta))}\frac{\partial^2}{\partial\theta\partial\theta^{\T}}p_{i,t}^*(\theta)\\
     &-\sum_{i=1}^{N}\left(\frac{y_{i,t}}{p_{i,t}^*(\theta)^2}+\frac{1-y_{i,t}}{(1-p_{i,t}^*(\theta))^2}\right)
     \left(\frac{\partial}{\partial\theta}p_{i,t}^*(\theta)\right)\left(\frac{\partial}{\partial\theta}p_{i,t}^*(\theta)\right)^{\T} .
\end{split}
\end{equation*}
We need to show the uniform convergence of the Hessian to its mean, $\mathbb{E}\mathcal{H}_t^*(\theta)$. Note that
\begin{equation}\label{eq_expected_hessian}
    \mathbb{E}\mathcal H_t^*(\theta_0)=-H_0.
\end{equation}
As in the proof of the uniform convergence of the criterion in Theorem \ref{th_consistency}, truncate the probability recursion together with its first and second derivative
recursions $M$ periods in the past, and let $\mathcal H_{t,M}^*(\theta)$ denote the resulting Hessian. Repeating the same steps: LLN for $\mathcal H_{t,M}^*(\theta)$ and forgetting of the initial value of $\mathcal H_{t,M}^*(\theta)$, we obtain
\begin{equation}\label{eq_hessian_ulln}
    \sup_{\theta\in\Theta_0} \left\| \frac1T\sum_{t=1}^T\mathcal H_t^*(\theta)-\mathbb{E}\mathcal H_t^*(\theta)\right\|
    \xrightarrow[T\to\infty]{p}0.
\end{equation}
Let $\tilde{\mathcal H}_t(\theta)$ be the Hessian formed from the initialized filter. By probability, first-derivative, and second-derivative forgetting past,
\[
    \sup_{\theta\in\Theta_0}\|\tilde{\mathcal H}_t(\theta)-\mathcal H_t^*(\theta)\|\xrightarrow[t\to\infty]{a.s.}0.
\]
Therefore,
\[
    \sup_{\theta\in\Theta_0} \left\|\frac1T\sum_{t=1}^T\tilde{\mathcal H}_t(\theta)-\frac1T\sum_{t=1}^T\mathcal H_t^*(\theta)\right\|
    \leq\frac1{T}\sum_{t=1}^T \sup_{\theta\in\Theta_0}\|\tilde{\mathcal H}_t(\theta)-\mathcal H_t^*(\theta)\|
    \xrightarrow[T\to\infty]{a.s}0.
\]
Combining the preceding with \eqref{eq_hessian_ulln},
\begin{equation}\label{eq_initialized_hessian_ulln}
    \sup_{\theta\in\Theta_0}  \left\|\frac1{T}\frac{\partial^2\mathcal Q_T(\theta)}{\partial\theta\partial\theta^{\T}}
       -\mathbb{E}\mathcal{H}_t^*(\theta) \right\|\xrightarrow[T\to\infty]{p}0.
\end{equation}

\smallskip

Finally, we can combine the score and the Hessian asymptotics to obtain the asymptotic normality of $\hat{\theta}^{MLE}$. By Theorem \ref{th_consistency}, $\hat\theta^{MLE}\xrightarrow[T\to\infty]{p}\theta_0$. Since $\theta_0$ is an interior point of $\Theta$, with probability approaching one $\hat\theta^{MLE}\in\Theta_0$ and satisfies
\[
    \frac{\partial\mathcal Q_T(\hat\theta^{MLE})}{\partial\theta}=0.
\]
By the mean value theorem,
$$0=\frac1{\sqrt{T}}\frac{\partial\mathcal Q_T(\theta_0)}{\partial\theta}+\frac1{T}\frac{\partial^2\mathcal Q_T(\tilde\theta)}{\partial\theta\partial\theta^{\T}}\sqrt{T}(\hat\theta^{MLE}-\theta_0),$$
where each component of $\frac{\partial\mathcal Q_T(\theta_0)}{\partial\theta}$ corresponds to a separate value of $\tilde\theta$ on a segment connecting $\theta_0$ and $\hat\theta^{MLE}$.
Combining the score CLT \eqref{eq_score_clt_initialized}, the Hessian limit \eqref{eq_expected_hessian} and uniform LLN \eqref{eq_initialized_hessian_ulln}, and consistency of $\hat\theta^{MLE}$, we get
$$\sqrt{T}(\hat\theta^{MLE}-\theta_0)
=-\left(\frac1{T}\frac{\partial^2\mathcal Q_T(\tilde\theta)}{\partial\theta\partial\theta^{\T}}\right)^{-1}\frac1{\sqrt{T}}\frac{\partial\mathcal Q_T(\theta_0)}{\partial\theta}
\xrightarrow[T\to\infty]{d}H_0^{-1}\mathcal{N}(0,H_0)=\mathcal{N}(0,H_0^{-1}).\qedhere$$
\end{proof}

\section{Data}\label{append_data}
As of December 27, 2025, the S$\&$P $100$ consists of the following stock tickers: AAPL, ABBV, ABT, ACN,	ADBE, AIG, AMD, AMGN, AMT, AMZN, AVGO, AXP, BA, BAC, BK, BKNG, BLK, BMY, BRK$\underline{ }$B, C, CAT, CL, CMCSA, COF, COP, COST, CRM, CSCO, CVS, CVX, DE, DHR, DIS, DUK, EMR, FDX, GD, GE, GM, GILD, GOOGL, GS, HD, HON, IBM, INTC, INTU, ISRG, JNJ, JPM, KO, LIN, LLY, LMT, LOW, MA, MCD, MDLZ, MDT, MET, META, MMM, MO, MRK, MS, MSFT, NEE, NFLX, NKE, NOW, NVDA, ORCL, PEP, PFE, PG, PM, PLTR, PYPL, QCOM, RTX, SBUX, SCHW, SO, SPG, T, TGT, TMO, TMUS, TSLA, TXN, UBER, UNH, UNP, UPS, USB, V, VZ, WFC, WMT, XOM.

Six of the above companies are not available for the entire period under consideration, 01.01.2012--12.31.2025. The tickers of those companies are: ABBV, META, NOW, PLTR, PYPL, and UBER.

\bibliographystyle{ecca}
\bibliography{BGARCH_biblio}

\end{document}